\def\draft{1}  

\documentclass[11pt]{article}

\usepackage[letterpaper,hmargin=1in,vmargin=1in]{geometry}
\synctex=1
\usepackage[T1]{fontenc}

\usepackage{color}                
\geometry{letterpaper}                   
\usepackage{booktabs}
\usepackage{graphicx}
\usepackage{amssymb}
\usepackage{epstopdf}
\usepackage{amsfonts,latexsym,graphics}
\usepackage{epsfig}
\usepackage{amssymb}
\usepackage{amsmath}
\usepackage{amsfonts}
\usepackage{algorithm}
\usepackage{algorithmic}
\usepackage{url}
\usepackage{rotating}
\usepackage[colorlinks=true,linktoc=all]{hyperref}
\usepackage{eepic}
\usepackage{sectsty}
\usepackage{tikz}
\usetikzlibrary{arrows}

\newcommand{\nc}{\newcommand}
\nc{\rnc}{\renewcommand}

\ifnum\draft=1
\newcommand{\Hnote}[1]{\textcolor{blue}{[\bf Aram's Note: #1]}}
\newcommand{\Nnote}[1]{{\textcolor{red}{[\bf Anand's Note: #1]}}}
\newcommand{\Wnote}[1]{{[\bf Xiaodi's Note: #1]}}

\else
\newcommand{\Hnote}[1]{{}}
\newcommand{\Nnote}[1]{{}}
\newcommand{\Wnote}[1]{{}}
 
\fi

\newtheorem{theorem}{Theorem}
\numberwithin{theorem}{section}
\numberwithin{equation}{section}
\newtheorem{definition}[theorem]{Definition}

\newtheorem{lemma}[theorem]{Lemma}

\newtheorem{proposition}[theorem]{Proposition}

\newtheorem{corollary}[theorem]{Corollary}

\newtheorem{remk}[theorem]{Remark}

\newenvironment{proof}{\noindent{\bf Proof. }}{\qed}


\def\FullBox{\hbox{\vrule width 8pt height 8pt depth 0pt}}

\def\qed{\ifmmode\qquad\FullBox\else{\unskip\nobreak\hfil
\penalty50\hskip1em\null\nobreak\hfil\FullBox
\parfillskip=0pt\finalhyphendemerits=0\endgraf}\fi}

\def\qedsketch{\ifmmode\Box\else{\unskip\nobreak\hfil
\penalty50\hskip1em\null\nobreak\hfil$\Box$
\parfillskip=0pt\finalhyphendemerits=0\endgraf}\fi}




\newcommand{\R}{{\mathbb R}}
\newcommand{\N}{{\mathbb{N}}}

\newcommand{\poly}{{\mathrm{poly}}}
\newcommand{\polylog}{{\mathrm{polylog}}}

\newcommand{\zo}{\{0,1\}}

\DeclareMathOperator*{\E}{\mathbb{E}}


\def\01{\{0,1\}}
\def\eps{\epsilon}

\newcommand{\set}[1]{{\left\{#1\right\}}}

\newcommand{\Prob}{{\mathbf{Pr}}}
\newcommand{\tinyspace}{\mspace{1mu}}
\newcommand{\microspace}{\mspace{0.5mu}}

\newcommand{\norm}[1]{\left\lVert\tinyspace#1\tinyspace\right\rVert}

\newcommand{\tr}{\operatorname{tr}}
\newcommand{\rank}{\operatorname{rank}}

\newcommand{\ip}[2]{\left\langle #1 , #2\right\rangle}
\newcommand{\ipp}[1]{\left\langle #1 \right\rangle}

\def\({\left(}
\def\){\right)}
\def\I{\mathsf{Id}}

\newcommand{\setft}[1]{\mathrm{#1}}
\newcommand{\lin}[1]{\setft{L}\left(#1\right)}
\newcommand{\density}[1]{\setft{Dens}\left(#1\right)}

\newcommand{\ot}{\otimes}

\def\complex{\mathbb{C}}

\def\<{\langle}
\def\>{\rangle}
\def \lket {\left|}
\def \rket {\right\rangle}
\def \lbra {\left\langle}
\def \rbra {\right|}
\newcommand{\ket}[1]{\lket\microspace #1 \microspace\rket}
\newcommand{\bra}[1]{\lbra\microspace #1 \microspace\rbra}
\newcommand{\ketbra}[1]{\lket\microspace #1 \rangle \langle #1 \microspace\rbra}

\nc\yes{\text{yes}}
\nc\no{\text{no}}
\nc\onevec{\vec{\mathbf{1}}}

\newcommand{\trnorm}[1]{\norm{#1}_1}

\newcommand{\commentout}[1]{}


\newenvironment{protocol*}[1]
  {
    \begin{center}
      \hrulefill\\
      \textbf{#1}
  }
  {
    \vspace{-1\baselineskip}
    \hrulefill
    \end{center}
  }


\DeclareMathOperator{\Sep}{Sep}
\DeclareMathOperator{\psE}{\tilde{\E}}
\DeclareMathOperator{\Tr}{Tr}
\DeclareMathOperator{\M3XR}{MAX-3XOR}
\DeclareMathOperator{\MXST}{MAX-SAT}
\DeclareMathOperator{\conv}{conv}
\newcommand{\txor}{\textsc{3XOR}}
\newcommand{\tof}{\textsc{2-out-of-4}}
\newcommand{\tofsat}{\textsc{2-out-of-4-SAT}}
\newcommand{\tofsateq}{\textsc{2-out-of-4-SAT-EQ}}
\newcommand{\qmat}{\textsc{QMA(2)-Honest}}
\newcommand{\qmak}{\textsc{QMA(k)-Honest}}

\def \zo {\{\pm 1\}}
\def \OPT {{\mathrm{OPT}}}

\DeclareMathOperator{\degsos}{\deg_{\mathsf{SOS}}}

\nc{\eq}[1]{(\ref{eq:#1})}
\nc{\eqs}[2]{(\ref{eq:#1}) and (\ref{eq:#2})}
\nc{\ra}{\rightarrow}
\nc{\Ra}{\Rightarrow}

\def\be#1\ee{\begin{equation}#1\end{equation}}
\def\bea#1\eea{\begin{eqnarray}#1\end{eqnarray}}
\def\beas#1\eeas{\begin{eqnarray*}#1\end{eqnarray*}}
\def\ba#1\ea{\begin{align}#1\end{align}}
\def\bas#1\eas{\begin{align*}#1\end{align*}}
\def\bpm#1\epm{\begin{pmatrix}#1\end{pmatrix}}
\nc{\non}{\nonumber}

\nc\cA{\mathcal{A}}
\nc\cB{\mathcal{B}}
\nc\cC{\mathcal{C}}
\nc\cD{\mathcal{D}}
\nc\cE{\mathcal{E}}
\nc\cF{\mathcal{F}}
\nc\cG{\mathcal{G}}
\nc\cH{\mathcal{H}}
\nc\cI{\mathcal{I}}
\nc\cJ{\mathcal{J}}
\nc\cK{\mathcal{K}}
\nc\cL{\mathcal{L}}
\nc\cM{\mathcal{M}}
\nc\cN{\mathcal{N}}
\nc\cO{\mathcal{O}}
\nc\cP{\mathcal{P}}
\nc\cQ{\mathcal{Q}}
\nc\cR{\mathcal{R}}
\nc\cS{\mathcal{S}}
\nc\cT{\mathcal{T}}
\nc\cU{\mathcal{U}}
\nc\cV{\mathcal{V}}
\nc\cW{\mathcal{W}}
\nc\cX{\mathcal{X}}
\nc\cY{\mathcal{Y}}
\nc\cZ{\mathcal{Z}}

\DeclareMathOperator{\NP}{\mathsf{NP}}

\DeclareMathOperator{\Sym}{Sym}
\DeclareMathOperator{\QMA}{\mathsf{QMA}}

\nc\benum{\begin{enumerate}}
\nc\eenum{\end{enumerate}}
\nc\bit{\begin{itemize}}
\nc\eit{\end{itemize}}

\newcommand{\secref}[1]{Section~\ref{sec:#1}}

\newcommand{\lemref}[1]{Lemma~\ref{lem:#1}}
\newcommand{\thmref}[1]{Theorem~\ref{thm:#1}}
\newcommand{\propref}[1]{Proposition~\ref{prop:#1}}

\newcommand{\corref}[1]{Corollary~\ref{cor:#1}}

\newcommand{\tabref}[1]{Table~\ref{tab:#1}}

\nc\bbC{\mathbb{C}}
\nc\bbF{\mathbb{F}}
\nc\bbM{\mathbb{M}}
\nc\bbN{\mathbb{N}}
\nc\bbR{\mathbb{R}}
\nc\bbZ{\mathbb{Z}}

\def\begsub#1#2\endsub{\begin{subequations}\label{eq:#1}\begin{align}#2\end{align}\end{subequations}}
\nc\qand{\qquad\text{and}\qquad}
\nc\mnb[1]{\medskip\noindent{\bf #1}}

\nc\SDPEF{SDP extended formulation}

\pretolerance=500


\title{Limitations of semidefinite programs for separable states and entangled games}
\author{%
 Aram W. Harrow\thanks{Center for Theoretical Physics,
Massachusetts Institute of Technology, Cambridge, MA 02139, USA}  
\and
Anand Natarajan\footnotemark[1]
\and
 Xiaodi Wu\thanks{Department of Computer Science, Institute for Advanced
   Computer Studies, and Joint Center for Quantum Information and Computer
   Science, University of Maryland, College Park, MD 20742, USA}
}

\begin{document}

\begin{titlepage}
\maketitle

\begin{abstract}
Semidefinite programs (SDPs) are a framework for exact or approximate
optimization with widespread application in quantum information theory.
We introduce a new method for using reductions to construct integrality gaps for SDPs, meaning instances where the SDP value is far from the true
  optimum.  These are based on new limitations on the sum-of-squares (SoS)
  hierarchy in approximating two particularly important sets in quantum information
  theory, where previously no $\omega(1)$-round integrality gaps were known:
\begin{enumerate}
\item The set of separable (i.e. unentangled) states, or equivalently, the $2 \ra 4$ norm
  of a matrix.
\item The set of quantum correlations; i.e. conditional probability distributions
  achievable with local measurements on a shared entangled state.
\end{enumerate}
Integrality gaps for the $2\ra 4$ norm had previously been sought due to its connection to
Small-Set Expansion (SSE) and Unique Games (UG).

In both cases no-go theorems were previously known based on computational assumptions such
as the Exponential Time Hypothesis (ETH) which asserts that 3-SAT requires exponential
time to solve. Our unconditional results achieve the same parameters as all of these
previous results (for separable states) or as some of the previous results (for quantum
correlations). In some cases we can make use of the framework of Lee-Raghavendra-Steurer
(LRS) to establish integrality gaps for any SDP extended formulation, not only the SoS hierarchy. Our hardness
result on separable states also yields a dimension lower bound of approximate
disentanglers, answering a question of Watrous and Aaronson et al.

These results can be viewed as limitations on the monogamy principle, the PPT test, the
ability of Tsirelson-type bounds to restrict quantum correlations, as well as the SDP
hierarchies of Doherty-Parrilo-Spedalieri, Navascues-Pironio-Acin and
Berta-Fawzi-Scholz. Indeed a wide range of past work in quantum information can be
described as using an SDP on one of the above two problems and our results put broad
limits on these lines of argument.
\end{abstract}

\vfill
\thispagestyle{empty}

\end{titlepage}

\thispagestyle{empty}
\tableofcontents
\thispagestyle{empty}
\newpage
\setcounter{page}{1}
\section{Introduction} \label{sec:intro}

\subsection{2-to-4 Norms and Separability of Quantum States}

\subsubsection*{Separable (unentangled) quantum states}

Entanglement is an essential ingredient in many applications of quantum information
processing. Understanding the characterization of entangled quantum states remains a fundamental problem in quantum information processing research. 
For example, studying the boundary between entangled and separable quantum states has been
useful for a variety of problems in quantum information such as data hiding~\cite{DLT02},
teleportation~\cite{Masanes96}, privacy~\cite{bauml15}, channel capacities~\cite{Rains01,
  MSW, MW12}, and the quantum marginal problem~\cite{marginals}.  

Distinguishing between entangled and separable quantum states, also known as the separability problem,  turns out to be closely related to the following optimization problem $h_{\Sep(d,d)}$, defined for a positive-semidefinite $d^2\times d^2$ matrix $M$ as
\be h_{\Sep(d,d)}(M) := \max_{\substack{x,y\in\mathbb{C}^d \\ \|x\|_2 = \|y\|_2 = 1}}
\sum_{i,j,k,l \in [d]} M_{ij,kl} x_i^* x_j y_k^* y_l.\label{eq:hSep}\ee

In general these problems cannot be approximated in polynomial time even to constant error~\cite{BHKSZ12,HM13}, assuming the ETH.
Let $\Sep^k(d)$ denote the convex hull of $\ketbra{\psi_1} \ot \cdots \ot \ketbra{\psi_k}$ as $\ket{\psi_1},\ldots,\ket{\psi_k}$ range over all unit vectors in $\bbC^d$. ($h_{\Sep(d,d)}$ is $h_{\Sep^2(d)}$.) For a general convex set $S$, let $h_S(x) := \max_{y\in S} \ip{x}{y}$.    Consider the problem of determining whether $h_{\Sep^k(d)}(M)$ is $\geq c$ or $\leq s$ for some $0\leq s < c \leq 1$ and some matrix $M$ such that $0\leq M \leq I$.   
Several hardness results are known of the form ``determining satisfiability of 3-SAT instances with $n$ variables and $O(n)$ clauses can be reduced to estimating $h_{\Sep^k(d)}$ in this way.''   We summarize such hardness results in Table~\ref{tab:ETH}.

\begin{table}[h]
\begin{center}
\begin{tabular}{cccccc}
\toprule
reference & $k$ &  $c$ & $s$ & $n$ & notes \\
\midrule
\cite{GNN12} & 2 & 1 & $1-\frac{1}{d\cdot \poly\log(d)}$ & $O(d)$ & (1)
 \\
\cite{pereszlenyi:2012} & 2 & 1 & $1-\frac{1}{\poly(d)}$ & $O(d)$ & (2) \\
\cite{ABDFS08} & $\sqrt{d}\cdot \poly\log(d)$ & $1$ & 0.99 & $O(d)$ & (3)\\
\cite{CD10} & $\sqrt{d}\cdot \poly\log(d)$ & $1-2^{-d}$ & 0.99 & $O(d)$ & (4) \\
\cite{HM13} & 2 & 1 & 0.01  & $\frac{\log^2(d)}{\poly\log\log(d)}$ & (5) \\
\bottomrule
\end{tabular}
\caption{Hardness results for $h_{\Sep^k(d)}$.  Notes: (1) This builds on work in \cite{gurvits:2003, BT07, Bei10} which achieved the same result with $s=1-1/\poly(d)$. Related results were found for testing membership in $\Sep^2(d)$ in \cite{gurvits:2003, liu:2007,gharibian10}.  (2) The measurement $M$ can be implemented using a uniform quantum circuit of size $\poly\log(d)$.
(3,4,5) Here 0.99 refers to a constant strictly less than 1 whose explicit value is not known, and 0.01 means the result applies for any constant in the range $(0,1)$.  (4) The measurement $M$ can be taken to be a Bell measurement, meaning that all the systems are measured locally and then the answers are processed classically. (5) $M$ can be taken to be separable, i.e.~of the form $M=\sum_i A_i \ot B_i$ for $A_i,B_i\geq 0$.
\label{tab:ETH}}
\end{center}
\end{table}

These results can be thought of as ETH-based no-go results, since in each case ETH implies a lower bound on the run-time of any algorithm approximating $h_{\Sep}$, and in particular implies the existence of integrality gaps for the SoS hierarchy.   We mention also one hardness result that does not fit into this framework is the result by \cite{BHKSZ12} that a constant-factor multiplicative approximation to $h_{\Sep^2(d)}$ could be used to solve Unique Games instances of size $d^{\Omega(1)}$.

\subsubsection*{DPS hierarchy for separability problem and
integrality gaps } Despite the worst-case hardness, a variety of
heuristics have been developed for the separability problem given the
utility of solutions even for specific cases.

 The set of entangled states was first approximated by the set of
states with non-positive partial transpose~\cite{PPT1,PPT2}.  The
resulting test is known as the ``PPT test'' and it is known that all
separable states have positive semidefinite partial transpose
(i.e.~are PPT) and that some entangled states are PPT while others are not.
 
Doherty, Parrilo and Spedalieri improved this to a hierarchy of
approximations~\cite{DohertyPS04}.  The $k^{\text{th}}$ level of the so-called DPS
hierarchy approximates the set of entangled states by the set of states $\rho^{AB}$ for
which there does not exist $\tilde{\rho}^{A_1\ldots A_kB_1\ldots B_k}$ with
$\rho^{AB}=\tilde\rho^{A_1B_1}$, the supports of $\tilde{\rho}^{A_1\ldots A_k}$ and
$\tilde{\rho}^{B_1\ldots B_k}$ contained in the symmetric subpace and $\tilde\rho$
remaining positive semidefinite under the partial transpose of any set of subsystems.
Again all separable states pass the level-$k$ DPS test as do some entangled states.  As
$k\ra\infty$ the run-time increases exponentially but the accuracy also increases, meaning
that fewer entangled states pass the test.  Since the DPS sets include all separable
states as well as some entangled states, we call the approximation a ``relaxation'' of the
set of separable states, and call entangled states which pass the DPS test ``integrality
gaps''.

The accuracy of the DPS hierarchy has been analyzed by a long sequence of works which have
found various positive results, matching the barriers from ETH in a few
cases~\cite{DF80,CFS02,KR05, KM09, CKMR07, BCY}.  A handful of negative results are also
known but generally only for weaker versions of the DPS hierarchy.  If we require only
that $\tilde\rho$ be symmetric (i.e.~commute with permutations), then the antisymmetric
state is a potent integrality gap showing that this weaker hierarchy still makes large
errors until $k> d$.  This can also be turned into a integrality gap for the slightly
stronger hierarchy which restricts the support of $\tilde\rho$ to be contained in the
symmetric subspace, but it is easily detected with the PPT test.  Another integrality gap
is known to defeat the larger class of tests for which $\tilde\rho$ is required to be
symmetric and PPT across any cut, but only works up to $k=O(\log d)$~\cite{BCY}.  Two
recent results regarding the constant-error case are of particular interest are those of
Lancien~\cite{Lancien15} and Aubrun and Szarek~\cite{Sep-polytope}. The result of Lancien
uses a probabilistic argument to show that most $d$-dimensional non-separable states will
pass the $k$-extendibility test together with the requirement that the state be PPT across
a single cut up to $k =O(d)$. This is a better lower bound on the level of relaxation than
our result, but only for a weaker version of the DPS hierarchy. The result of Aubrun and
Szarek studies the set of positive but not completely positive maps; it is known that any
non-separable state can be detected by such a map. Using tools from convex geometry, they
show a lower bound of $\exp(\Omega(d^3/\log d))$ for the number of positive maps needed to
detect all $d$-dimensional states that are at least a constant distance away from
separable. This can be viewed as a lower bound on the dimension of a specific class of
SDPs, which have a constraint of the form
$\mathrm{Id} \ot (\Phi_1(\rho) \oplus \Phi_2(\rho) \oplus \dots \oplus\Phi_N(\rho))
\succeq 0$,
where $\Phi_1 \dots \Phi_N$ are positive maps on $d$-dimensional matrices. This class of
SDPs contains the PPT test as well as many other SDPs, but it does not contain higher
levels of DPS.

All of these integrality gaps except that of \cite{Lancien15} are known to be defeated by the full DPS hierarchy and there is no known way to modify them to avoid this.  Indeed the only previously known unconditional negative result was in the original DPS paper which showed that the error always remained nonzero for all finite values of $k$ (see also \cite{BeigiS10} showing that this could be amplified).  Indeed one can even define an improved version of DPS that removes this limitation and always exactly converges at a finite (but large) value of $k$~\cite{HNW15a}.

We remark that the goal of much of this past work has been much more general than separability testing.  The convergence proofs of DPS are related to ``monogamy relations'' which bound how widely an entangled state can be shared.  These in turn have been related to the security of quantum key distribution, rigorous proofs of the mean-field approximation in many-body physics~\cite{BH-product}, quantum interactive proofs and many other applications.  
In some cases also integrality gaps for weaker hierarchies have been useful examples of extremal information-theoretic behavior~\cite{CSW12}.

\subsubsection*{Semidefinite Programming (SDP) and Sum-of-Squares (SoS) hierarchies}
The DPS hierarchy is a special case of a more general approach to polynomial optimization
problems known as the Sum-of-Squares (SoS) hierarchy.  The SoS hierarchy in turn is an
example of a Semidefinite Programming (SDP) relaxation~\cite{boyd_book}.  A major question
in the theory of algorithms and complexity is the power of SDP relaxations and in
particular the SoS hierarchy~\cite{BS14}.  Most problems in NP admit various SDP
relaxations, but the worst-case quality of the resulting approximations is often unknown.

The SoS hierarchy was introduced in \cite{Shor87,Nesterov00,Parrilo00,Lasserre01} and reviewed in
\cite{Laurent09,Barak14}.  It is a family of SDP relaxations, parametrized
by the problem size $n$ and the level of the hierarchy $k$.  They run in time $n^{O(k)}$
and generally converge to the correct answer as $k\ra \infty$, but a crucial question is
to determine this rate of convergence. Over several domains, we know that
when $k$ is $\Omega(n)$, convergence is achieved: at $k=n$ exactly when the
domain is the boolean hypercube, and approximately when the domain is the
hypersphere~\cite{DW12,BGL17}. In this regime, the runtime of the algorithm is
comparable to that of brute-force search. However, in some cases, stronger positive results are known
for $k=O(1)$ or $k=O(\log n)$.

On the lower bound side, integrality gaps are known for which the SoS hierarchy, or
in some cases, more general families of SDPs, fail to give the correct answer.
Integrality gaps are known even for problems that are easy to solve, such as linear
equations over a finite field~\cite{Grigoriev01}.   Our first family of results can be
described as a set of integrality gaps for the DPS hierarchy.

\subsubsection*{Unique Games, Small-set Expansion and the 2-to-4 norm}
One further application of our results is to problems that are not obviously related to
quantum mechanics.  A central question in the theory of approximation algorithms is the
unique games conjecture (UGC, introduced in \cite{Khot02a} and reviewed in
\cite{Trevisan12}) which asserts the NP-hardness of a problem known as the ``unique games
problem.''  If true, the UGC would imply that level $k=1$ of the SoS hierarchy achieves
optimal approximation ratios for a wide range of problems.  There is a subexponential-time
algorithm for the unique games problem, and no $k=\omega(1)$ SoS integrality gap instances
are known.  While these cast doubt on the UGC, there is other evidence in favor, including
a conjectured $n^{\Omega(1)}$-round integrality gap and the fact that slight modifications
of the approximation parameters are known to yield NP-hard problems.

A close variant of the UGC is the small-set expansion (SSE) hypothesis which asserts that $\forall \eta>0, \exists \delta>0$ such that it is NP-hard to determine whether the maximum expansion of all subsets of fractional size $\delta$ is $\leq \eta$ or $\geq \eta$.  The SSE hypothesis implies UGC and indeed SSE is equivalent to a slightly restricted form of the unique games problem~\cite{RaghavendraST12}.   Here too no $k=\omega(1)$ integrality gaps are known.

The SSE problem in turn can be relaxed to the problem of estimating the $2\ra 4$ norm of a matrix.  If $A\in \R^{m\times n}$ then define the $2\ra 4$ norm of $A$ to be
$$ \|A\|_{2\rightarrow 4} := \max_{x\neq 0}\frac{\|Ax\|_4}{\|x\|_2}
\qquad\text{where}\qquad
\|y\|_p := \left(\sum_i |y_i|^p\right)^{1/p}.$$
(Here the number ``4'' can be replaced by any constant $q>2$.)

The $2\ra 4$ norm of a matrix is closely related to $h_{\Sep}$.  Indeed if we set
$M_{ij,kl} := \sum_a A_{ia}A_{ja}A_{ka}A_{la}$ for a real matrix $A$ then a
straightforward calculation shows that $h_{\Sep(d,d)}(M) = \|A\|_{2\ra 4}^4$, implying
that computing $\|A\|_{2\ra 4}$ reduces to computing $h_{\Sep}(\cdot)$.  Conversely
$h_{\Sep}$ can be reduced to calculating a $2\ra 4$ norm but this requires somewhat more
work~\cite{BHKSZ12}.

It was also shown in \cite{BHKSZ12} that SSE for a graph $G$ is approximately related to
the $2\ra 4$ norm of the projector onto the top eigenspace of $G$.  Thus, algorithms for
the $2\ra 4$ norm yield algorithms for SSE, and indeed known positive results for the SoS
hierarchy translate into subexponential time algorithms for SSE~\cite{BHKSZ12} (matching
those found using other methods).  On the other hand, the quasipolynomial hardness known
for the $2\ra 4$ norm does not necessarily imply similar hardness for SSE.  (This latter
hardness result assumes the Exponential-Time Hypothesis (ETH)~\cite{Impagliazzo:2001}
which asserts that 3-SAT instance with $n$ variables require $2^{\Omega(n)}$ time to
solve.)

Before our work, no $k=\omega(1)$ SoS integrality gap was known for the $2\ra 4$
norm, and finding such an integrality gap was proposed as an open problem by
Barak~\cite{Barak14}.  One of our main results is to establish SoS integrality
gaps for the $2\ra 4$ norm problem which roughly match the known computational
hardness results, but without needing the assumption of the ETH.  We also
establish weaker integrality gaps for a more general class of SDP relaxations,
called \SDPEF{}s.


\subsubsection*{Our contributions}

Our main contribution is to provide instances on which the above hierarchies (and others) fail to give the correct answer.  Thus we give lower bounds that do not rely on complexity-theoretic assumptions (and are in that sense ``unconditional''), although they do only apply to a specific family of algorithms. 
Additionally, our SoS bounds provide an explicit integrality gaps for the DPS hierarchy, while the ETH-based hardness result can only imply its existence.

Our results show (unconditionally) that the DPS hierarchy cannot estimate $h_{\Sep^2(d)}$ to constant accuracy at level $k$ unless $k\geq \tilde\Omega(\log d)$, corresponding to run-time $d^{\tilde\Omega(\log
  d)}$.  Similarly the $2\ra 4$ norm of $d$-dimensional matrices cannot be approximated to within a constant multiplicative factor  using $O(\log(d)/\poly\log\log(d))$ levels of SoS.  These results are described in Corollaries~\ref{cor:DPS} and \ref{cor:24}.
 This yields the first unconditional quantitative limits to the rate at
which the DPS hierarchy converges, and likewise the first $k=\omega(1)$ SoS integrality gaps for the $2\ra 4$ norm problem.

We also demonstrate integrality gaps corresponding to the other rows of \tabref{ETH}; 
In some cases these are nearly tight.  In 
\thmref{sos-lb-gnn}, we show that accuracy $\tilde O(1/d)$~\cite{GNN12,NOP09} requires $\tilde\Omega(d)$ levels of the SoS hierarchy, corresponding to exponential time.  Likewise for $h_{\Sep^{\sqrt{n}}(n)}$, constant-error approximations are shown in \thmref{sos-lb-multipartite} to require $\Omega(n)$ levels, which matches known SoS achievability results~\cite{BH-local}.


We further apply techniques due to Lee, Raghavendra, and Steurer~\cite{LRS15} to
extend our no-go theorems to a more general class of relaxations called \SDPEF{}s.  This is the first extension of LRS
that we are aware of to non-boolean domains.  To do this we need to work with
functions that are not strictly self-reducible, which requires some
modifications of the techniques of LRS.  However, while these bounds cover a
larger class of SDPs than the above bounds, they are quantitatively weaker.

\begin{theorem}[Informal, refer to Theorem~\ref{thm:lrs-gnn} and \secref{lrs}]
  Any \SDPEF{} of $h_{\Sep^2(d)}$ achieving 
  accuracy $1/\poly(d)$ must have a total number of variables
  $\geq d^{\tilde\Omega(\log d)}$.
\end{theorem}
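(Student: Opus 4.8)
The plan is to feed the reduction behind our SoS integrality gap into the framework of Lee--Raghavendra--Steurer~\cite{LRS15}, which lower-bounds the number of variables of any \SDPEF{} of an optimization problem by the positive-semidefinite rank of that problem's slack matrix. We reuse the reduction of \thmref{sos-lb-gnn}: it sends a \txor{} instance $\varphi$ on $N = \tilde\Theta(d)$ variables (via the \tofsat{} encoding) to a $d^2\times d^2$ matrix $M_\varphi$ with $0 \le M_\varphi \le I$ such that $h_{\Sep^2(d)}(M_\varphi) = 1$ when $\varphi$ is satisfiable and $h_{\Sep^2(d)}(M_\varphi) \le 1 - \tilde\Omega(1/d)$ when $\varphi$ is far from satisfiable. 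Consider the promise problem ``decide whether $h_{\Sep^2(d)}(\cdot) \ge 1$ or $\le 1 - 1/\poly(d)$'' and its slack matrix $S$ -- rows indexed by no-instances $M$, columns by points $\rho \in \Sep^2(d)$, entries $1 - \ip{M}{\rho}$. It then suffices to show that the psd rank of $S$ is at least $d^{\tilde\Omega(\log d)}$.

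To do this I would use the reduction to realize the slack matrix of a hard $N$-variable CSP as, essentially, a submatrix of $S$: take the rows $\{M_\varphi : \varphi \in \cF\}$, where $\cF$ is the family of \txor{} instances for which~\cite{LRS15}, building on Grigoriev's $\Omega(N)$-degree SoS lower bound~\cite{Grigoriev01}, establishes a psd-rank bound of $N^{\tilde\Omega(\log N)}$ on the CSP slack matrix, and take the columns obtained by transporting the associated assignments/points to $\Sep^2(d)$ through the completeness and pseudo-distribution-lifting analysis of \thmref{sos-lb-gnn}. If this transport respects psd factorizations up to the global $1/\poly(d)$ rescaling forced by the accuracy parameter, the $N^{\tilde\Omega(\log N)}$ lower bound carries over to $S$, and since $N = \tilde\Theta(d)$ this is $d^{\tilde\Omega(\log d)}$. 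The same argument gives the analogue for the $2\ra4$ norm through the identity $h_{\Sep(d,d)}(M) = \|A\|_{2\ra4}^4$ with $M_{ij,kl} = \sum_a A_{ia}A_{ja}A_{ka}A_{la}$.

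Two points require modifications of~\cite{LRS15}, and are where the real work of \secref{lrs} lies. First, \cite{LRS15} is built for Boolean CSPs, whereas $h_{\Sep^2(d)}$ optimizes over the continuous convex body $\Sep^2(d)$; I would not discretize that body at all, keeping all the ``Booleanness'' needed for the psd-rank argument inside the base CSP and letting $\Sep^2(d)$ enter only as the fixed linear object through which the slack matrix factors. Second, and more seriously, the~\cite{LRS15} passage from an SoS-degree lower bound to a psd-rank lower bound exploits that a Boolean CSP is \emph{self-reducible} -- restricting an instance to a subset of its variables is again an instance of the same problem -- while $M \mapsto h_{\Sep^2(d)}(M)$ is not strictly self-reducible in this sense. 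The substitute is that the reduction $\varphi \mapsto M_\varphi$ is \emph{local}: $M_\varphi$ is assembled gadget-by-gadget from the clauses of $\varphi$, so restriction on the $\varphi$-side corresponds to a structured sub-measurement of $M_\varphi$. Verifying that this inherited self-reducibility can be substituted into the~\cite{LRS15} decoupling argument -- in particular keeping enough control over the transported points along restrictions -- is the step I expect to be the main obstacle.

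Finally, it is worth recording why the bound holds only at accuracy $1/\poly(d)$, in contrast with the constant accuracy of \corref{DPS}. The~\cite{LRS15} exponent scales with $\log N$, the number of base-CSP variables, so reaching $d^{\tilde\Omega(\log d)}$ requires a reduction with $N = \poly(d)$; the constant-gap reduction behind \corref{DPS} has only $N = \tilde\Theta(\log^2 d)$ variables, for which the present argument would yield a vacuous $2^{\tilde O((\log\log d)^2)}$. Reductions achieving $N = \poly(d)$ are of GNN type and have an inherent $1/\poly(d)$ completeness--soundness gap, so this loss is a feature of the method rather than of the analysis.
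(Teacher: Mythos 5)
Your high-level plan matches the paper's strategy: lower-bound the psd rank of the slack matrix of an intermediate Boolean problem via LRS, and transport that bound to $h_{\Sep}$ through the embedding reduction built from the GNN $\text{QMA}(2)$ protocol (so that the continuous domain $\Sep^2(d)$ only ever enters as the linear image of a Boolean cube). You also correctly diagnose that the base CSP must have $\poly(d)$ variables for the exponent to come out as $\log d$, which is why the constant-gap Harrow--Montanaro reduction (only $\tilde\Theta(\log^2 d)$ underlying variables) is useless here and the accuracy is forced down to $1/\poly(d)$. (Small slip: the GNN path goes $\txor \Ra \textsc{Graph-3-coloring} \Ra \qmat$, not through \tofsat; the \tofsat{} encoding belongs to the Harrow--Montanaro branch.)

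The genuine gap is in how you propose to resolve the failure of self-reducibility. You suggest that because $\varphi \mapsto M_\varphi$ is assembled clause-by-clause, restriction of $\varphi$ to a subset $S$ of its variables should correspond to a ``structured sub-measurement'' of $M_\varphi$, so that the LRS pattern matrix becomes a genuine submatrix of the $h_{\Sep}$ slack matrix. This does not work, and the paper explicitly calls out why: the GNN witness state is a uniform superposition over \emph{all} $n$ coordinates, so restricting the 3XOR formula to $S$ does not restrict the protocol — running the protocol already ``touches'' all of $[n]$, and most samples will land outside $S$. Truncating the Hilbert space to $|S|$ coordinates produces an instance of $h_{\Sep^2(d')}$ for a much smaller $d'$, which is the wrong object. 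The paper's actual fix goes in the opposite direction from restriction: it \emph{pads}. Fixing the universal $n$-variable setup (with a universal graph $G_n$ so the witness map $x \mapsto \ket{\Psi_x}$ is instance-independent, giving the embedding property LRS requires), it defines a function $f'\colon\{0,1\}^m \ra \R_{\geq 0}$ on $m \ll n$ bits as the acceptance probability of the full $n$-variable LNN protocol, run against the Grigoriev $\txor$ instance supported on the first $m$ coordinates only, when the witness is $\ket{\Psi_{x'}}$ for any $x'$ extending $x$. The Grigoriev pseudoexpectation on $\{0,1\}^m$ still gives $\psE[f'] = 1$, so $\degsos(1-f') = \Omega(m)$, and $M_n^{f'}(S,x) = 1-\eps - f'(x_S)$ is \emph{exactly} a submatrix of the $\qmat$ slack matrix (rows indexed by the $n$-variable instances obtained by placing the Grigoriev formula on each size-$m$ subset $S$). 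The price is soundness: on $n$ variables one can only invoke the generic unsatisfiability bound $1-\Omega(1/n^2)$ rather than the far-from-satisfiable bound $1-\tilde\Omega(1/n)$, which is the real source of the $\delta(d)=O(1/d^2)$ accuracy in \thmref{lrs-gnn}. Your sub-measurement proposal would, as written, not produce a submatrix of the correct slack matrix, and I do not see a way to salvage it without converting it into something equivalent to this padding construction.
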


Here ``\SDPEF{}'' is a technical condition defined in the body of
our paper. Roughly speaking an \SDPEF{} should replace the
optimization over separable states with an optimization over a convex
superset, which is defined by a semidefinite program.  (In fact we rule out a slightly larger class of
approximations.)

One corollary of these results  is an unconditional proof
of a version of the ``no approximate disentangler'' conjecture of Watrous~\footnote{The original goal of the conjecture was to rule
out a particular strategy for putting $\QMA(2)$ inside $\QMA$. Here
  $\QMA(2)$ is the set of languages where membership can be verified using a
  proof that is a pair of unentangled quantum states. } , for
which previously only the zero-error case was known~\cite{AaronsonBDFS08}.  This
conjecture asserts that if $\cN$ is a quantum channel from $D$ dimensions to
$d\times d$ dimensions such that $\Sep(d,d)\approx \text{Image}(\cN)$ (with
``$\approx$'' defined precisely in the body of our paper, but roughly speaking
it corresponds to $1/\poly(d)$ error) we must have $D\geq d^{\tilde
  \omega(\log(d))}$~\footnote{This is not the strongest possible version of the
conjecture since one could conceivably demand that $D\geq \exp(d)$; this
stronger form is false for the 1-LOCC norm~\cite{BCY} but is still an open
question for trace distance. }. 
  Our results imply
unconditional lower bounds on the input dimension $D$ albeit with 
parameters somewhat weaker than those based on ETH.

\begin{theorem}[Informal, refer to \thmref{disentangler}] 
  Let
  $D=\mathrm{dim}(\cH), d = \mathrm{dim}(\cK)$, and suppose that $\Lambda: D(\cH) \to D(\cK \otimes \cK)$ is an approximate disentangler with $1/\poly(d)$ error. Then
  \[ D \geq
d^{\tilde\Omega(\log(d)}). \]
\end{theorem}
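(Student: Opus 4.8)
The plan is to convert an approximate disentangler into a small \SDPEF{} for $h_{\Sep^2(d)}$ and then invoke the lower bound of the preceding informal theorem (\thmref{lrs-gnn}). Recall that an $\eps$-approximate disentangler $\Lambda: D(\cH)\to D(\cK\ot\cK)$ satisfies two conditions: every separable state $\sigma$ on $\cK\ot\cK$ is within trace distance $\eps$ of $\Lambda(\rho)$ for some density operator $\rho$ on $\cH$, and conversely $\Lambda(\rho)$ is within trace distance $\eps$ of $\Sep(d,d)$ for every density operator $\rho$; here $\eps = 1/\poly(d)$. Equivalently, $\mathrm{Image}(\Lambda)$ and $\Sep(d,d)$ are within Hausdorff distance $\eps$ in the trace norm.

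First I would observe that, since $\Lambda$ is linear and sends the spectrahedron $\{\rho\succeq 0:\Tr\rho=1\}\subseteq\herm{\cH}$ onto $\mathrm{Image}(\Lambda)$, the latter is a projected spectrahedron described using one $D\times D$ positive-semidefinite matrix, one affine slice, and one linear image --- that is, it is the feasible region of an \SDPEF{} with $\Theta(D^2)$ variables. Optimizing a linear functional $\langle M,\cdot\rangle$ with $0\preceq M\preceq I$ over this region is exactly the SDP $\max\{\langle \Lambda^{\dagger}(M),\rho\rangle : \rho\succeq 0,\ \Tr\rho=1\}$. I would then check that this \SDPEF{} approximates $h_{\Sep^2(d)}$ to additive accuracy $\eps$: the first disentangler condition gives $\max_\rho\langle M,\Lambda(\rho)\rangle \ge h_{\Sep^2(d)}(M)-\eps$ (pick $\rho$ with $\Lambda(\rho)$ near an optimal separable $\sigma$ and use $|\langle M,\Lambda(\rho)-\sigma\rangle|\le\|M\|_\infty\|\Lambda(\rho)-\sigma\|_1\le\eps$), while the second gives $\langle M,\Lambda(\rho)\rangle\le h_{\Sep^2(d)}(M)+\eps$ for every $\rho$. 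Hence we obtain an \SDPEF{} of $h_{\Sep^2(d)}$ of accuracy $1/\poly(d)$ whose variable count is dominated by $\Theta(D^2)$.

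Finally, \thmref{lrs-gnn} asserts that any such \SDPEF{} requires at least $d^{\tilde\Omega(\log d)}$ variables, so $\Theta(D^2)\ge d^{\tilde\Omega(\log d)}$, and therefore $D\ge d^{\tilde\Omega(\log d)}$ after absorbing the constant-factor loss from the square into $\tilde\Omega(\cdot)$. The main obstacle is purely one of bookkeeping at the interface between the two frameworks: one must verify that a region of the form $\{\Lambda(\rho):\rho\succeq 0,\ \Tr\rho=1\}$ genuinely meets the technical definition of \SDPEF{} used in \thmref{lrs-gnn} (it should, since the only ingredients are one PSD constraint, one affine slice, and one linear image), and one must align the notion of ``$\eps$ error'' in the disentangler conjecture with the accuracy measure in the \SDPEF{} lower bound. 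Fortunately both are of the form $1/\poly(d)$, and the conclusion $d^{\tilde\Omega(\log d)}$ is insensitive to which polynomial is used and to whether the approximation is one- or two-sided; no further hard analysis is needed, as all the quantitative content lives in \thmref{lrs-gnn}.
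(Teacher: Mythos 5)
Your overall strategy is the same as the paper's: turn the disentangler into an \SDPEF{} for $h_{\Sep(d,d)}$ of size controlled by $D$ and invoke \thmref{lrs-gnn}. The soundness direction you give ($\langle M,\Lambda(\rho)\rangle \le h_{\Sep}(M)+\eps$ for all $\rho$) is exactly the paper's, and your size accounting is fine. But there is a genuine gap in the step you dismiss as bookkeeping. \thmref{lrs-gnn} applies to \SDPEF{}s in the sense of \defref{sdp_relax}, which requires an \emph{embedded} reduction: every feasible point of $h_{\Sep}$ (every separable state $\sigma$) must map to a feasible point of the SDP achieving \emph{exactly} the same objective value, for every instance $M$ simultaneously. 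Your completeness argument only establishes the weaker statement $\max_\rho\langle M,\Lambda(\rho)\rangle\ge h_{\Sep}(M)-\eps$, i.e.\ an approximate embedding. That is not enough: the LRS pattern-matrix argument needs the matrix $M_{c,s}(\Phi,x)=c-\Phi(x)$, indexed by actual feasible points $x$ of the original problem, to be realized exactly by the SDP's factorization, and an $O(\eps)$ perturbation of each entry is not covered by \lemref{LRS} as used here. Your claim that the conclusion is ``insensitive to whether the approximation is one- or two-sided'' is precisely where the argument breaks: one-sidedness of the embedding (exact on the completeness side) is baked into the definition the lower bound is proved for.

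The paper repairs this with a gadget that is the real content of the proof when $\delta>0$: it replaces $\Lambda$ by $\tilde\Lambda(\rho\oplus\sigma)=\Lambda(\rho)+\sigma$ where $\sigma$ ranges over Hermitian operators of trace norm at most $\delta$ (implemented in the SDP as $\sigma=\sigma^+-\sigma^-$ with $\sigma^\pm\succeq 0$ and $\Tr[\sigma^++\sigma^-]\le\delta$). Since every separable $\tau$ has some $\rho$ with $\|\Lambda(\rho)-\tau\|_1\le\delta$, the slack variable $\sigma$ absorbs the discrepancy and gives $\tilde\Lambda(\rho\oplus\sigma)=\tau$ exactly, restoring the embedding property; the cost is that the soundness error degrades from $\eps$ to $\eps+\delta$ and the SDP size grows by an additive term depending only on $d$, which is harmless. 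With this modification your argument goes through and matches \thmref{disentangler}. Without it, you have only proved the $\delta=0$ case (exact surjectivity onto $\Sep$), which is the easy half of the theorem.
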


\subsection{Entangled Games}

\subsubsection*{Noncommutative Polynomial Optimization and Entangled Games}

Another major class of optimization problems concerns polynomials in non-commuting
variables~\cite{HP-survey,PNA10}.  As we explain in \secref{prelim_sos}, these involve
optimizing operator-valued variables over a vector space of unbounded or even infinite
dimension.  One application is to understanding the set of ``quantum correlations'',
meaning the conditional probability distributions $p(x,y|a,b)$ achievable by local
measurements on a shared entangled states~\cite{NPA08}.  The most famous example of a
quantum-but-not-classical correlation was discovered by Bell in 1964~\cite{Bell64} and
gave a concrete experiment for which quantum mechanics predicts outcomes that are
incompatible with any theory that lacks entanglement or faster-than-light signaling.  More
recently, quantum correlations are studied in the context of multi-prover games where the
provers share entanglement (i.e., \emph{entangled games}); here, without a bound on the
dimension of the shared entangled state, one cannot rule out even infinite-dimensional
systems.  Non-commuting optimization can be useful even in cases where the dimension is
finite but exponentially large and the goal is to obtain smaller optimization problems,
e.g.~in quantum chemistry~\cite{Mazz04}.

A similar hierarchy was developed to approximate the set of
correlations achievable with local measurements of quantum
states~\cite{NPA08,DLTW08, BFS15}, or for more general polynomial optimization~\cite{PNA10}.  This is known variously as the
noncommutative Sum of Squares (ncSoS) hierarchy or the NPA
(Navascues-Pironio-Acin) hierarchy.  Here much less is known on either
the positive or negative side.  The ncSoS hierarchy similarly has
complexity increasing exponentially with $k$ and similarly converges
as $k\ra \infty$, although it is a famous open question (Tsirelson's
problem \cite{Tsirel-SW}) whether it indeed converges to the value of the best quantum strategy.

Computational hardness results are also known for the entangled value of quantum games.  If $\omega_{\text{entangled}}$ refers to the entangled value of a game with $k$ provers, one round, questions in $[Q]$, answers in a $O(1)$-sized alphabet, completeness $c$ and soundness $s$, then the known reductions from 3-SAT instances of size $n$ are described in Table~\ref{tab:omega_ent}.
However, no unconditional results are known for $k>5$.

\begin{table}[h]
\begin{center}
\begin{tabular}{ccccc}
\toprule
reference & $k$ &  $c$ & $s$ & $n$ \\
\midrule
\cite{KKMTV11} & 3 & 1 & $1-\frac{1}{\poly(Q)}$ & $O(Q)$ \\
\cite{IKM09} & 2 & 1 & $1-\frac{1}{\poly(Q)}$ & $O(Q)$ \\
\cite{IV12} & 4 & 1 & $2^{-Q^{\Omega(1)}}$ & $Q^{\Omega(1)}$ \\
\cite{Vidick13} & 3 & 1 & $2^{-Q^{\Omega(1)}}$ & $Q^{\Omega(1)}$ \\
\bottomrule
\end{tabular}
\caption{Hardness results for $\omega_{\text{entangled}}$ with $k$ provers and question alphabet size $Q$. 
\label{tab:omega_ent}}
\end{center}
\end{table}

Despite the lack of general results, specific solutions to the ncSoS hierarchy can be extremely useful.   For example, Tsirelson's 1980 outer bound on the winning probability of a particular quantum game~\cite{Tsirel, Chefles} has since had widespread application to topics including communication complexity~\cite{BBLMTU06}, ``self-testing'' quantum systems and multiparty
secure quantum computing~\cite{BP15,RUV13-short,RUV13-long}, and device-independent cryptography~\cite{BHK05}.  In quantum chemistry, the Pauli exclusion principle can be expressed as an operator inequality~\cite{CM15}, and far-reaching generalizations exist~\cite{AK08}, each of which can be seen as a dual feasible point for a ncSoS hierarchy.  While \cite{AK08} used representation theory to show the existence of these points in general, even explicit specific solutions can be useful~\cite{Ruskai07,Klyachko13}.

\subsubsection*{Our contributions}

We show the first known limitations on the ncSoS hierarchy for $k=\omega(1)$.  

\begin{theorem}[Informal, refer to Theorem \ref{thm:ncsos-nogo}]
There exists a sequence of two-player non-local games $G_n$ such that the entangled game value $\omega_{entangled}(G_n)\leq 1- c/n^2$ for some constant $c$ but the ncSOS hierarchy believes $\omega_{entangled}(G_n)=1$ up to level $m=\Omega(n)$. 
\end{theorem}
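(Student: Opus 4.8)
The plan is to import a classical sum-of-squares lower bound for $\M3XR$ into the noncommutative setting by packaging an unsatisfiable system of $\mathbb{F}_2$-linear equations as a two-prover linear-system game and lifting the classical pseudo-expectation to a feasible point of the NPA/ncSoS semidefinite program.

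First I would fix an unsatisfiable system $\Psi$ of $O(n)$ equations over $\mathbb{F}_2$, each involving $3$ of the variables $x_1,\dots,x_n$, chosen so that: (a) $\Psi$ has \emph{no operator solution}, i.e.\ no collection of $\pm1$-observables $\{X_y\}$ on any Hilbert space (pairwise commuting within each equation) satisfies all the relations $\prod_{y\in C}X_y=(-1)^{b_C}I$; and (b) there is a degree-$\Omega(n)$ pseudo-expectation $\psE$ on $\{\pm1\}^n$ satisfying \emph{every} equation of $\Psi$ as a hard constraint, so that $\psE$ ``believes'' $\Psi$ is satisfiable. Property (b) is exactly the Grigoriev/Schoenebeck SoS lower bound, which holds for random \txor{} systems (or Tseitin systems over expanders); property (a) I would establish for the same ensemble separately --- for a Tseitin system it is a parity computation in the solution group, and for a sufficiently over-determined random system it follows from the system forcing its solution group to collapse. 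The game $G_n=G_\Psi$ is then the clause-versus-variable game: the referee samples a uniformly random equation $C$ and a uniformly random variable $x\in C$, sends $C$ to Alice and $x$ to Bob; Alice answers with an assignment $\alpha$ to the variables of $C$, Bob answers with a bit $\beta$; they win iff $\alpha$ satisfies $C$ and $\alpha(x)=\beta$. Questions lie in an alphabet of size $O(n)$ and answers are $O(1)$ bits.

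For the ncSoS side I would exhibit a feasible point of value $1$ at level $m=\Omega(n)$, built directly from $\psE$. Identify Alice's projector for $(C,\alpha)$ with the commuting polynomial $\prod_{y\in C}\tfrac{1+(-1)^{\alpha(y)}x_y}{2}$ and Bob's $\pm1$ observable $B_x$ with $x_x$; index the moment matrix by words of length $\le m$ in these generators and set its $(w,w')$ entry to $\psE$ of the commuting polynomial $\overline{P_w}P_{w'}$, whose degree is $O(m)$. Because $\psE$ is a valid pseudo-expectation up to degree $\Omega(n)$, the moment matrix is positive semidefinite --- against a test vector $(c_w)$ it equals $\psE\big[\,\big|\sum_w c_wP_w\big|^2\,\big]\ge 0$ --- all NPA algebraic constraints (projectivity, $\sum_\alpha=I$, $B_x^2=I$, and Alice--Bob commutation, the last because the underlying polynomials literally commute) hold because they are polynomial identities respected by $\psE$, and the acceptance objective evaluates to $1$ precisely because $\psE$ satisfies every equation of $\Psi$. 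Once the degree bookkeeping (a constant-factor loss from the width-$3$ clauses) is pinned down, this step is mechanical.

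The soundness bound $\omega_{\mathrm{entangled}}(G_n)\le 1-c/n^2$ is the heart of the argument. Given an entangled strategy winning with probability $1-\delta$, a Fourier/Cauchy--Schwarz manipulation of the winning events shows that Bob's observables $\{B_x\}$ satisfy every equation of $\Psi$ up to operator error $O(\sqrt\delta)$; plugging this into an approximate-representation stability (``rounding'') theorem and summing the errors over the $O(n)$ equations yields that, if $\delta\le c/n^2$, then $\Psi$ has an exact operator solution, contradicting (a). Hence $\omega_{\mathrm{entangled}}(G_n)\le 1-c/n^2$ while the ncSoS hierarchy reports $1$ up to level $\Omega(n)$, which is the claim. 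I expect the main obstacle to be this last step: making the rounding quantitatively sharp enough to get the claimed $1/n^2$ (rather than a worse polynomial in the number of equations), and, in tandem, verifying that the Grigoriev-type instance really has no operator solution so that the final contradiction bites.
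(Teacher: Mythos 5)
Your construction of the ncSoS pseudo-expectation — factoring Alice's clause-projectors and Bob's observables through single-variable $\pm1$ operators $C_i$, and setting $\psE'[C_{i_1}\cdots C_{i_k}]=\psE[x_{i_1}\cdots x_{i_k}]$ — is exactly what the paper does in the proof of Theorem~\ref{thm:ncsos-nogo}, and your positivity and value-$1$ checks are the right ones. The genuine gap is in the soundness step, and it is where your proposal diverges from the paper's route. You propose the plain clause-versus-variable (linear-system) game and want to bound $\omega_{\mathrm{entangled}}$ by combining an approximate stability theorem with the assertion that the Grigoriev/Schoenebeck $\txor$ system admits \emph{no operator solution}. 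But for a bare linear-system game, the entangled value equals $1$ precisely when an operator solution exists, and a rounding argument from value $1-\delta$ can only produce an \emph{operator} solution, not a classical one. Whether Grigoriev-type (random $\txor$, or expander Tseitin) systems fail to have operator solutions is not known — the magic-square phenomenon shows that parity reasoning does not automatically rule them out, and the paper itself lists ``proving that random $\txor$ has low entangled value'' as an open problem that, if resolved, would give a ``much more direct'' argument than theirs. So as written your soundness step rests on an unproven assumption.

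The paper avoids this by not using the plain BCS game. It uses the oracularization-with-dummy-question game of Ito--Kobayashi--Matsumoto (Lemma~\ref{lem:ikm}), in which the referee samples \emph{two} random clauses, sends one full clause to Alice and a variable from it \emph{paired with a dummy variable from the second clause} to Bob. The dummy variable's role is to force Bob's measurements to behave (approximately) like a single fixed classical assignment, so that IKM's soundness argument bounds the entangled value by $1-\gamma(1-\MXST(\Phi))^2/m^2$ \emph{directly in terms of the classical satisfiability of $\Phi$}, with no appeal to operator solutions at all. With $m=O(n)$ clauses and $\MXST(\Phi)\le\frac12+\eps$ from Grigoriev, this gives $\omega_{\mathrm{entangled}}(G_\Phi)\le 1-c/n^2$. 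If you want to repair your proposal, either import IKM's lemma and adjust your game to the oracularized-with-dummy form (then your $\psE'$ construction goes through essentially unchanged, with the consistency checks handled as in the paper by the commutativity and idempotence relations of the $C_i$), or else be explicit that your route requires the currently-open operator-solution (or low-entangled-value) property of the Grigoriev instance.
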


Previously the only unconditional lower bound on the error of the
ncSoS hierarchy for large constant $k$ was Slofstra's uncomputability
result~\cite{Slofstra16}, which implied (among other things) that the $\eps=0$ case could
not be solved by the $k^{\text{th}}$ level of the ncSoS hierarchy
unless $k$ is an uncomputably large function of the inputs.

\noindent  We are also able to give dimension lower bounds for any \SDPEF{} achieving reasonable accuracy, although these do not match our SoS lower bounds.

\begin{theorem}[Informal, refer to Theorem \ref{thm:game-nogo}]
There exists a sequence of two-player non-local games $G_n$ such that any \SDPEF{} approximating the entangled game value $\omega_{entangled}(G_n)$ to precision $O(1/n^2)$ has dimension $\geq n^{\log n / \poly\log \log n}$. 
\end{theorem}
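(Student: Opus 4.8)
\medskip
\noindent\textbf{Proof plan for \thmref{game-nogo}.}
The plan is to obtain the family $G_n$ as the image of an explicit reduction and to transport a known \SDPEF{} lower bound back along it. The ultimate source of hardness is a Grigoriev-type instance: a \txor{} instance (or the \tofsat{} variant convenient for encoding into quantum registers) on $n'$ variables whose Positivstellensatz degree is $\Omega(n')$, so that degree-$\Omega(n')$ SoS still sees a constant integrality gap; such instances are already known, via the Lee--Raghavendra--Steurer (LRS) machinery, to force any \SDPEF{} approximating the normalized optimum to constant error to have dimension $\exp\!\big(\tilde\Omega(n')\big)$. The first step is the reduction $I\mapsto G(I)$ to a two-player one-round game on $n$ questions with an $O(1)$-sized answer alphabet such that $\OPT(I)=1$ yields $\omega_{\text{entangled}}(G(I))=1$, while a constant soundness deficit of $I$ survives as a $\Theta(1/n^2)$ deficit in $\omega_{\text{entangled}}$; here $n'$ is only polylogarithmic in $n$, the blow-up being the familiar cost of spreading a classical assignment across quantum registers. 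This is the game-theoretic analogue of the reductions behind \thmref{lrs-gnn}, and one can equally route the argument through the $h_{\Sep^2(d)}$ \SDPEF{} bound of \thmref{lrs-gnn} composed with a reduction $h_{\Sep}\mapsto\text{game}$ and the identification $n=\poly(d)$. The delicate point in this step is the soundness direction --- showing that \emph{entangled} provers cannot beat $\omega_{\text{entangled}}(G(I))$ once no classical assignment satisfies $I$ --- which is where a rounding/dimension-reduction argument for the game enters, and which also accounts for the $1/n^2$ loss.

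The second step transports the lower bound across the reduction. Because $I\mapsto G(I)$ is explicit and computationally cheap, the slack matrix of the decision problem ``$\omega_{\text{entangled}}(G(\cdot))=1$ versus $\le 1-\Theta(1/n^2)$'' contains, on the rows and columns indexed by reduction images, a nonnegative rescaling of the slack matrix of ``$\OPT(I)=1$ versus $\le s$'' for a constant $s<1$; hence the positive-semidefinite rank of the former --- equivalently the dimension of any \SDPEF{} approximating $\omega_{\text{entangled}}$ to precision $O(1/n^2)$ --- is at least that of the latter. Substituting the source bound $\exp\!\big(\tilde\Omega(n')\big)=\exp\!\big(\tilde\Omega(\log^2 n)\big)$ and rewriting in terms of $n$ gives $n^{\,\log n/\polyloglog n}$; verifying that the constant gap of $I$ and the $\poly$ blow-up combine into the stated $O(1/n^2)$ precision is then routine bookkeeping.

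The crux --- and the step I expect to be the main obstacle --- is that LRS, as developed, is stated for Boolean CSPs and leans on \emph{strict self-reducibility}: the ability to substitute a partial assignment and stay within the same problem family, which is exactly what lets locally consistent pseudo-solutions be glued into a global feasible point of a small \SDPEF{}. The entangled game value satisfies neither hypothesis --- its ``variables'' are measurement operators on a Hilbert space of unbounded dimension, and restricting a game does not return a game in any clean sub-family. I would therefore replace strict self-reducibility by a weaker substitution/consistency property of the associated noncommutative polynomial optimization problem, just strong enough to run the LRS positive-semidefinite-rank argument on the reduction-image slack matrix, and check that the non-commuting, unbounded-dimensional variables survive the ``general SDP $=$ junta of local states'' step with the relevant moment sequences staying bounded. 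This relaxation is precisely what costs the $\polyloglog$ factor in the exponent --- separating $\log n/\polyloglog n$ from the $\log n$ one would hope for --- and it is also why this route delivers only $n^{\polylog(n)}$ rather than the $n^{\Omega(n)}$ that a direct LRS application to a genuinely self-reducible problem would give.
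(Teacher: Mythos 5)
Your overall architecture --- Grigoriev 3XOR, IKM-style oracularized game, $\Theta(1/n^2)$ soundness loss for entangled provers via Lemma~\ref{lem:ikm}, and then transporting an LRS-type lower bound across the reduction --- matches the paper's. But the step you flag as ``the crux'' is a misconception that, if pursued, would lead you well away from the actual argument.

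You propose to ``replace strict self-reducibility by a weaker substitution/consistency property of the associated noncommutative polynomial optimization problem'' so that LRS can be run on the entangled game value directly, with its measurement-operator variables on an unbounded-dimensional Hilbert space. This is not needed, and the paper does not do it. The key device is the intermediate \emph{boolean} problem $\omega_{\textsc{honest classical}}$, whose feasible set is $\{\pm 1\}^n$ and whose objective $f_\Phi(x)$ is the winning probability of the honest classical strategy indexed by $x$. This is an honest-to-goodness Boolean CSP, so Lemma~\ref{lem:LRS} applies to it without modification (Lemma~\ref{lem:honest-game-lrs}); the self-reducibility LRS needs is satisfied because, as the paper stresses, embedding a game on a small question alphabet into one on a larger alphabet is trivial --- the referee just ignores the extra questions. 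The transfer from $\omega_{\textsc{honest classical}}$ to $\omega_{\text{entangled}}$ is then the standard \SDPEF{}-composition step (Proposition~\ref{prop:sdp_extend}): the inclusion of honest classical strategies into entangled strategies is an \emph{embedding} reduction, and Lemma~\ref{lem:ikm} gives the $(1-c/n^2, 1-c')$ approximation parameters. So any \SDPEF{} for the entangled value with precision $O(1/n^2)$ is, a fortiori, a constant-gap \SDPEF{} for the boolean problem, and the bound follows. No noncommutative version of LRS enters anywhere.

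Two further points of bookkeeping you have wrong. First, the $\polyloglog$ in the exponent is not the price of any weakened self-reducibility; it comes entirely from the LRS padding requirement $n\geq m^{d/4}$ in Lemma~\ref{lem:LRS}, which forces $m = O(\log n/\log\log n)$ and hence $n^{\Omega(m)} = n^{\log n/\polyloglog n}$. Second, the relation between the 3XOR size and the game size is not due to ``spreading a classical assignment across quantum registers'' (that is the $h_{\Sep}$ story, where the witness has $O(\log n)$ qubits); here the game questions directly index clauses and variables, so the number of questions is polynomial in the number of 3XOR variables, and the only ``blow-up'' is the LRS padding just described. Finally, your opening claim that LRS already gives $\exp(\tilde\Omega(n'))$ for any SoS-hard instance on $n'$ variables is not quite right --- the LRS bound is inherently quasipolynomial in the final instance size because of the padding, which is exactly why the paper obtains weaker extension-complexity bounds than SoS bounds, as flagged in \secref{LRS}.
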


As in the case of $h_{\Sep}$, we obtain SDP lower bounds by using the technique of~\cite{LRS15}. It is interesting to note that the games setting is somewhat more amenable to LRS's techniques than $h_{\Sep}$. This is because a key step in the proof of LRS requires embedding hard instances of the target problem into instances of the same problem with many more variables (this enables the use of random restrictions, which is crucial to the LRS proof). In the case of games, it is easy to embed a game with a smaller question alphabet into a game with a larger one---the referee simply ignores the extra questions. In contrast, this is not so simple for $h_{\Sep}$, since most QMA(2) protocols for CSPs involve sampling from a state which is a uniform superposition over all the variables, and most samples will not lie within a given small subset of the variables.

Our results here do not fully match the known computational hardness results.  In particular, the ETH-based arguments work at constant accuracy and our results rule out only approximations whose error decreases as a power of the number of questions and answers. This is because the reductions that yield constant accuracy ETH-conditional hardness involve operations like low-degree polynomial testing over higher order finite fields, which are not easy to make SoS complete. At the same time, it is worth noting that \thmref{game-nogo} extends to other SDPs for quantum correlations and in particular also limits the stronger hierarchy described in \cite{BFS15}.

\subsection{Technical Contributions}

\subsubsection*{Obtaining SoS hardness Results}
At a high level, similar to many other SoS hardness results, all of our SoS hardness
results stem from a classic result of Grigoriev~\cite{Grigoriev01},
showing hardness for the problem $\txor$ in the SoS model. 
To obtain integrality gaps for the problems considered here, we need to reduce (and sometimes embed) the classical hard problems in the SoS model to quantum problems, because our derivation of the SoS hardness of the 2-to-4 norm is inspired by its connection to the separability problem in quantum information. 

There have been several previous examples using reductions to prove
the hardness in the SoS model (e.g.,~\cite{Tul, OWWZ}). 
In \secref{lb_framework}, we  
formulate a framework of ``low-degree reductions'' which can be 
used in many cases.  It helps us to revisit previous NP-hardness
results and prove that they extend to
yield integrality gaps. 

We hope that this framework could facilitate the proof of
hardness in both the SoS model and the more general SDP model.
Another feature of this framework is that it can easily go beyond the problems over boolean domains. As far as we know,  our later application
of such a framework derives the first SoS hardness results for problems over \emph{non-boolean} domains. 

Let us look more closely at how this framework works. To obtain integrality gaps in the SoS model, one needs to show that (a) the SoS solution believes the value is large up to high level (degree), and (b) the true value is actually small. 
To achieve (a), we introduce the notion of \emph{low-degree reductions},
in which one requires the reductions preserve a SoS solution for the
reduced problems with almost the same value and a small amount of loss
of the degree.  These goals of preserving SoS solutions were referred
to as ``Vector Completeness'' in~\cite{Tul} and ``SoS Completeness''
in~\cite{OWWZ}.  In those papers this property was shown using
direct analysis of the SDP solutions.  Our approach
is instead to show that along with the NP-hardness reductions there
exist low-degree polynomials mapping solutions of the original problem
to solutions of the new problem.  The more complicated map on SDP
solutions then follows using generic arguments about low-degree reductions. 
  
To achieve (b), we resort to the study of quantum interactive proof protocols, which are connected to the separability problem, hence 2-to-4 norms, and at the same time provide the NP-hardness reductions shown in Table~\ref{tab:ETH}. 
In particular, we can import results such as the QMA(2) protocol for
3-SAT problem by Aaronson et al.~\cite{AaronsonBDFS08} and Chen-Drucker~\cite{CD10}, and make use of the \emph{soundness} of these reductions to achieve (b).  We follow the similar principle to import the two-player
non-local game protocol from Ref.~\cite{IKM09} (also in Table~\ref{tab:omega_ent}) to achieve (b) for SoS hardness results about entangled games. 

There is an issue with directly adopting Aaronson et al.'s QMA(2) protocol for 3-SAT as
the reduction, because the first step of this reduction makes use of the PCP theorem,
which turns out to be high-degree.  The use of the PCP theorem is to amplify the gap
between promised instances with some additional features. The SoS pseudo-solution for 3XOR
has already satisfied part of these features. We replace this step by a direct and
low-degree construction inspired by the proof of the PCP theorem.

As a minor technical contribution, we observe and make great use of the correspondence between density operators in quantum information and pseudo-expectations in SoS (e.g., Def.~\ref{def:conn}). This correspondence helps simplify a lot of calculations, especially given that our reductions are protocols from the study of quantum interactive proofs. 
We hope this connection can be found useful beyond the scope of this paper.

%

\subsubsection*{Extending to \SDPEF{}s}
Extending the above SoS hardness to the more general class of \SDPEF{}s crucially relies on the recent
breakthrough result by by Lee, Raghavendra and Steurer~\cite{LRS15}. Despite being by far
the most successful route to obtain unconditional bounds on general family of
SDP relaxations, the LRS result
has a few serious limitations that make it difficult to extend general SoS hardness
result (see Section~\ref{sec:LRS} for technical details.)

\begin{enumerate}
\item The result of LRS only works on the boolean cube $\{0,1\}^n$ and applies to CSP
  problems. Recent works by Braun et al~\cite{BPZ15} demonstrate the possibility of
  extending LRS's result to non-CSP problems (such as VertexCover, Max-Multi-CUT) but
  still over boolean domains, through {affine reductions} (or some relaxation of affine
  reductions discussed in~\cite{BPR16}).
   
  It is, however, not a priori clear whether this approach can be extended to non-boolean
  domains. Even for finite domains, the proof of~\cite{LRS15} may not automatically
  generalize as it makes intricate use of the structure of $\{0,1\}^n$, in particular
  Fourier analysis of functions over this domain. It is rather less clear how to extend
  the LRS approach to the continuous domains considered here---the hypersphere, for the
  separability problem, and infinite-dimensional Hilbert spaces, for the entangled game
  problem---without making use of, e.g., the theory of spherical harmonics.

  On top of that, most reductions used in deriving our SoS hardness are {low-degree}
  rather than {affine} (as in~\cite{BPZ15, BPR16}). Finally, the particular class of
  relaxations that~\cite{LRS15} applies to, here referred to as \SDPEF{}s, must satisfy
  certain rather rigid technical constraints, in particular one which we formulate as the
  {embedding} property.  

  Our crucial observation is that our reductions inspired by quantum protocols actually
  allow one to {embed} hard problems over $\{0,1\}^n$ into larger, even
  infinite-dimensional, domains. Moreover, our reductions are naturally low-degree, which
  could be a technically interesting point comparing to the affine reductions
  in~\cite{BPZ15, BPR16}.
As a result, we can avoid extending~\cite{LRS15}'s analysis to each setting, while still
extending its results on $\{0,1\}^n$ to much larger domains, albeit with some loss in parameters.
   \item The SDP lower bounds obtained by this result will never be better than
     quasi-polynomial in the problem size because of technical constraints. This
     limitation seems essential for the random restriction analysis that is
     central to the proof of LRS. (Recent work~\cite{KMR16} has succeeded in
     removing this limitation in the special case of LP relaxations, but doing
     so for the full class of SDP relaxations considered by LRS remains open.)
     A unfortunate consequence is that the lower bounds on general SDPs obtained via this method can be much looser than the SoS lower bounds they are based on. This partially explains why we have stronger SoS lower bounds than general SDP lower bounds in our results. 
   \item The LRS lower bound also crucially relies on a certain
     ``self-reducible'' structure of the problems. This is because of the use of pattern matrices in the proof of LRS. Intuitively, LRS can only show that some problem of size $n$ is hard if one can simultaneously embed a hard instance of size $m <n$ into each size-$m$ subset of the size-$n$ input. 
This is the case for CSP problems as well as for the problems considered in~\cite{BPZ15, BPR16}.
   
    However, because of our use of quantum protocols that involve superposed quantum states over all of the input variables, it is no longer true that a similar requirement will be satisfied for the quantum problems. As a result, we will only be able to obtain SDP lower bounds in \emph{some} of the cases where we have SoS lower bounds, and even in those cases, our parameters will be worse than those of the SoS results. 
\end{enumerate}

In summary, to the authors' knowledge, we are the first to extend LRS to problems over
non-boolean domains. However, due to the above limitations of LRS, there are gaps between
what we could get for SoS lower bounds and for SDP lower bounds. We consider it a major
open problem to improve on these techniques and prove tighter SDP lower bounds.

\subsubsection*{Handling SoS with non-commutative variables}

As our last technical contribution, we also demonstrate how to derive an ncSoS solution from a SoS solution for the purpose of showing the hardness of approximating entangled non-local game values. This step is necessary as all our SoS hardness results stem from the 3XOR problem, even for the optimization problem over non-commutative variables in the context of entangled games. As a result, without the following step, we only obtain a SoS hardness result rather than an ncSoS hardness result for entangled games. 

Our idea is to embed a SoS solution into an ncSoS solution, where the embedding is due to the connection between each question and a prover's corresponding strategy (i.e., operators). 
Intuitively the ncSoS solution constructed in this way can be said to ``cheat'' in the same way that classical SoS solutions do, and not by exploiting entanglement the way a ``valid'' quantum strategy would. We refer curious readers to the proof of Theorem~\ref{thm:ncsos-nogo} for details.

We summarize the complete reductions that handle all technical details
in Figure~\ref{fig:reductions}.

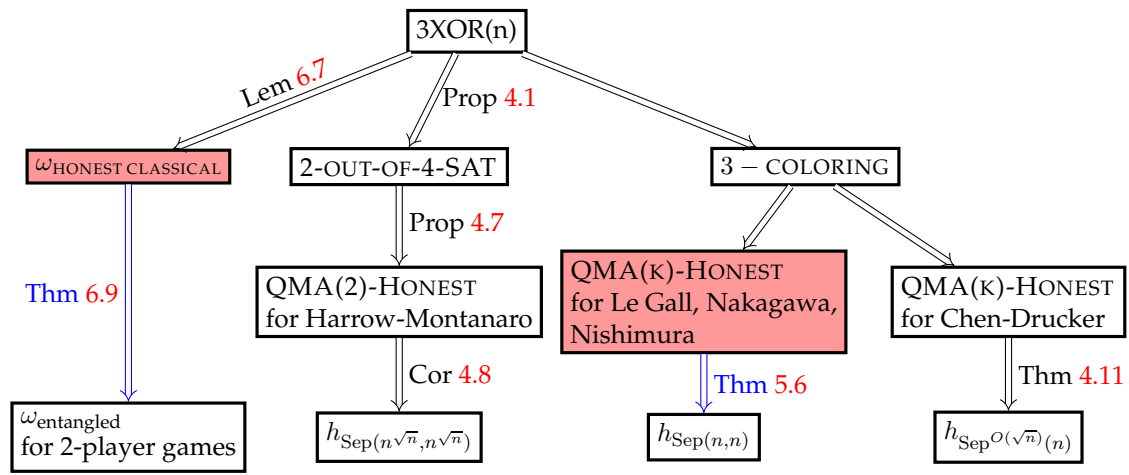
\begin{figure}[h]
\centering
\begin{tikzpicture}[scale = 0.9, transform shape]
  \tikzstyle{every node} = [rectangle, draw = black, very thick, align
  = left]
  \node (3XOR) at (5,6) {3XOR(n)};
  \node (tofsat) at (4,4) {\tofsat};
  \node (coloring) at (10, 4) {$3-\textsc{coloring}$};
  \node [fill =red!40] (game) at (0,4) {$\omega_{\textsc{honest classical}}$};
  \node (qma2cd) at (13, 2) {$\qmak$ \\ for Chen-Drucker};
  \node [fill = red!40] (qma2lnn) at (8.5, 2) {$\qmak$ \\ for Le Gall, Nakagawa, \\ Nishimura};
  \node (qma2hm) at (4, 2) {$\qmat$ \\ for Harrow-Montanaro};
  \node (hsep2) at (4, 0)  {$h_{\Sep(n^{\sqrt{n}}, n^{\sqrt{n}})}$};
  \node(hsep2onecopy) at (8.5,0) {$h_{\Sep(n, n)}$};
  \node(hsepmulti) at (13, 0) {$h_{\Sep^{O(\sqrt{n})}(n)}$};
  \node(gameentangled) at (0,0) {$\omega_{\text{entangled}}$ \\ for 2-player games};
  \draw  [-implies, double equal sign distance] (3XOR) -- (tofsat) node[midway, right, draw=none] {Prop~\ref{prop:reduction_txor_tofsat}};
  \draw  [-implies, double equal sign distance] (3XOR) -- (coloring); 
  \draw [-implies, double equal sign distance] (3XOR) -- (game) node[midway, sloped, above, draw=none] {Lem~\ref{lem:sos_lb_games}};
  \draw [-implies, double equal sign distance] (coloring) -- (qma2cd);
  \draw [-implies, double equal sign distance] (coloring) -- (qma2lnn);
  \draw [-implies, double equal sign distance] (tofsat) -- (qma2hm) node[midway, right, draw=none] {Prop~\ref{prop:witness_qmat}};
  \draw [-implies, double equal sign distance] (qma2hm) -- (hsep2) node[midway, right, draw=none] {Cor~\ref{cor:DPS}};
  \draw [-implies, double equal sign distance] (qma2cd) -- (hsepmulti) node[midway, right, draw=none] {Thm~\ref{thm:sos-lb-multipartite}};
  \draw [-implies, double equal sign distance, blue] (qma2lnn) -- (hsep2onecopy) node[midway, right, draw=none] {Thm~\ref{thm:lrs-gnn}};
  \draw [-implies, double equal sign distance, blue] (game) --  (gameentangled) node[midway, left, draw=none] {Thm~\ref{thm:game-nogo}};
\end{tikzpicture}
\caption{All our results are derived by applying a series of
  low-degree reductions to the integrality gap for 3XOR given
  by~\cite{Grigoriev01}. The red nodes indicate problems over the boolean cube to which the LRS theorem is applied. The blue arrows are ``embedding reductions'' that map problems over the boolean cube to problems over general domains, e.g. the set of separable states or the set of quantum entangled strategies.}
\label{fig:reductions}
\end{figure}

\subsection{Open Problems}

 Our work leaves open a number of
intriguing questions.
\begin{itemize}
\item Our results can be viewed as an extension of \cite{LRS15} beyond $\{0,1\}^n$ to sets
  such as the hypersphere.  However, as explained in the above, we face a lot of
  difficulties in doing so because of the limitations of LRS.  We consider it a major open
  problem to improve on these techniques and prove tighter SDP lower bounds over general
  domains. In particular, we believe that the quantum problems considered in
  this paper are a good motivation to prove extension complexity lower bounds
  for CSPs that are exponential, and not just superpolynomial. Progress in this
  direction has been achieved for LP extension complexity~\cite{KMR16}, but the
  SDP case remains open.
\item It raises the motivation to examine convex but non-SDP-based relaxations, such as
  the entropic bounds used in \cite{PH11}.
\item Our integrality gaps for the noncommutative hierarchies involve games whose
  entangled values cannot be effectively bounded using low-degree SoS proofs, but can be
  bounded using other methods.  These methods usually rely on specific features of the
  game; e.g.~consistency checks or other tests~\cite{IKM09,IV12}.  Can these upper-bound
  methods be understood more generally?  For example, can they be described using a high
  level of the SoS hierarchy?  A motivating example is 3XOR, where Gaussian elimination
  can be used to refute unsatisfiable instances once we get to level $n$ of the
  hierarchy~\cite{Grigoriev01}.  We note that, contrary to Barak's ``Marley
  principle''~\cite{Barak14}, the soundness bounds in \cite{IKM09,IV12} neither rely on
  the probabilistic method nor low-degree polynomials.
\item Our results on
quantum correlations should be strengthened to match the ETH-based
bounds (e.g.~\cite{IV12}). We also give new motivation for proving that random 3XOR
has low entangled value (cf. section 2.3 of \cite{Pala15}).  If known, this would give a much more direct and efficient no-go result for games.
\item We have constructed entangled states that appear
  separable to the lower levels of the DPS hierarchy. But are these
  states generic in the sense of \cite{Lancien15}?  We believe that
  this is the case; i.e.~a random state within the convex set accepted
  by DPS is likely to have distance from separable states that is
  comparable or better to our examples.  However, different techniques
  will be needed to prove this.
\end{itemize}

\section{Preliminaries}  \label{sec:prelim}
We provide a brief introduction to sum-of-squares proofs/optimization
in Section~\ref{sec:prelim_sos} and then summarize relevant background about quantum information and our terminology in Section~\ref{sec:prelim_qi}.
 
\subsection{Polynomial Optimization and Sum-of-Squares Proofs} \label{sec:prelim_sos}
In this section, we lay out the basics of the sum-of-squares (SoS) optimization algorithms.  They were introduced in \cite{Shor87,Nesterov00,Parrilo00,Lasserre01} and reviewed in \cite{Laurent09,Barak14}.

\paragraph{Symmetric subspace}
For any vector space $V$ and positive integer $k$, the permutation
group on $k$ letters $S_k$ acts on the vector space $V^{\ot k}$ by
$ P_{\pi} \ket{i_1} \ot  \dots \ot \ket{i_k} = \ket{i_{\pi^{-1}(1)}} \ot
\dots \ot\ket{ i_{\pi^{-1}(k)}}$. The $k$-partite symmetric subspace $\Sym^{k}V$ is
the subspace of $V^{\ot k}$ that is invariant under this action:
\[ \Sym^{k}V = \{ \ket v \in V^{\ot k} : P_{\pi}\ket v =\ket v \; \forall \pi \in
  S_k\}. \]

\paragraph{Polynomials and non-commutative polynomials.}  Let $\R[x]
:= \R[x_1,\ldots,x_n]$ be the set of real-valued polynomials over $n$
variables, and let $\R[x]_d$ be the subspace of polynomials of degree
$\leq d$.  We also define $\R\ipp{X}$ to be the set of {\em
  non-commutative polynomials} in $X_1,\ldots,X_n$, which we think of
as Hermitian operators that do not necessarily commute, and which we
call ``nc polynomials'' for short.   The
nc polynomials of degree $\leq d$ are denoted
$\R\ipp{X}_d$ and are isomorphic to $\bigoplus_{d'\leq d}
(\R^n)^{\otimes d'}$, while the ordinary commutative polynomials
$\R[x]_d$ can be viewed as $\bigoplus_{d'\leq d'} \Sym^{d'}\R^n$,
where $\Sym^{d'}V$ denotes the symmetric subspace of $V^{\otimes
  d'}$.
These notational conventions follow the optimization literature with
the roles of $n$ and $d$ reversed from the way they are usually used
in quantum information theory.

\paragraph{Polynomial optimization.}  Given polynomials $f,g_1,\ldots,g_m \in \R[x]$, the basic polynomial optimization problem is to find 
\be f_{\max} := \sup_{x\in\R^n} f(x) \text{ subject to } g_1(x)= \cdots = g_m(x)=0.
\label{eq:poly-opt}\ee
Equivalently we could impose inequality constraints of the form $g_i'(x)\geq 0$ but we will not explore this option here. In the non-commutative setting we need to optimize over both variables and a density.  Given $F,G_1,\ldots,G_m \in \R\ipp{X}$, define
\be F_{\max} := \sup_{\rho,X = (X_1,\ldots,X_n)} \tr[\rho F(X)] \text{ subject to }\rho\geq 0, \tr\rho=1, G_1(X)=\cdots=G_m(X)=0.
\label{eq:nc-poly-opt}\ee
Note that the supremum here is over density operators $\rho$ and Hermitian operators $X_1,\ldots,X_n$ that may be infinite dimensional; see \cite{Tsirel-SW} for a discussion of some of the mathematical difficulties here.

\paragraph{Sum-of-Squares (SoS) proofs.} Although \eq{poly-opt} and \eq{nc-poly-opt}
are in general $\NP$-hard to compute exactly, the SoS hierarchy is a
general method for approximating $f_{\max}$ or $F_{\max}$ from above.
This complements simply guessing values of $x$ or $(\rho,X)$ which
provides lower bounds on $f_{\max}$ or $F_{\max}$ when they satisfy
the constraints.  Focusing for now on the commuting case, a SoS proof is a bound that makes use of the fact that $p(x)^2\geq 0$ for any $p\in \R[x]$.   In particular, a SoS proof that $f(x)\leq c$ for all valid $f$ is a collection of polynomials $p_1,\ldots,p_k, q_1,\ldots,q_m\in \R[x]$ such that
\be c-f = \sum_{i=1}^k p_i^2  +\sum_{i=1}^m q_ig_i. \label{eq:SoS-proof}\ee
Observe that the RHS is $\geq 0$ when evaluated on any $x$ satisfying
$g_i(x)=0, \,\forall i$; for this reason, we refer to \eq{SoS-proof}
as a Sum-of-Squares (SoS) proof.  In particular, it is a proof that $c-f(x)\geq 0$ whenever $g_i(x)=0$ for all $i$.
This is a degree-$d$ SoS proof if each term $p_i^2$ and $q_ig_i$ is in $\R[x]_d$.  Finding an SoS proof of degree $\leq d$ can be done in time $n^{O(d)}m^{O(1)}$ using semidefinite programming~\cite{Laurent09}. 

If we find the minimum $c$ for which \eq{SoS-proof} holds, then we obtain a hierarchy of upper bounds on $f_{\max}$, referred to as the SoS hierarchy or the Lasserre hierarchy.  Denote this upper bound by $f_{\text{SoS}}^d$.   
Given mild assumptions on the constraints $g_1,\ldots,g_m$ one can prove that $\lim_{d\ra\infty}f_{\text{SoS}}^d = f_{\max}$~\cite{Laurent09}.  The tradeoff between degree $d$ and error ($f_{\text{SoS}}^d-f_{\max}$) is the key question about the SoS hierarchy.  We can also express this tradeoff by defining $\degsos(c-f)$ to be the minimum $d$ for which we can find a solution to \eq{SoS-proof}.  Note that $\degsos$ has an implicit dependence on the $g_1,\ldots,g_m$.

A non-commutative SoS proof can be expressed similarly as
\be c-F = \sum_{i=1}^k P_i^\dag P_i + \sum_{i=1}^m Q_i G_i R_i,\ee
for $\{P_i\}, \{Q_i\}, \{R_i\} \subset \R\ipp{X}$.  Likewise the best degree-$d$ ncSoS (noncommutative SoS) proof can be found in time $n^{O(d)}m^{O(1)}$, and we denote the corresponding value by $F_{\text{SoS}}^d$.  It is known that  $F_{\max} \leq F_{\text{SoS}}^d$ for all $d$ and $\lim_{d\ra\infty}F_{\text{SoS}}^d = F_{\max}$~\cite{HM03}.

\paragraph{Pseudo-expectations.}  We will work primarily with a dual version of SoS proofs that have an appealing probabilistic interpretation.  A degree-$d$ pseudo-expectation $\psE$ is an element of $\R[x]_d^*$ (i.e.~a linear map from $\R[x]_d$ to $\R$) satisfying
\bit
\item {\bf Normalization}. $\psE[1]=1$.
\item {\bf Positivity}. $\psE[p^2]\geq 0$ for any $p\in \R[x]_{d/2}$.
\eit
We further say that $\psE$ satisfies the constraints $g_1,\ldots,g_m$ if $\psE[g_iq]=0$ for all $i\in [n]$ and all $q\in \R[x]_{d-\deg(g_i)}$.
Then SDP duality implies that 
\be f_{\text{SoS}}^d = \max \{\psE[f] : \psE\text{ is a degree-$d$ pseudo-expectation satisfying } g_1,\ldots,g_m\}.\ee

The term ``pseudo-expectation'' comes from the fact that for any
distribution $\mu$ over $R^n$ we can define a pseudo-expectation
$\psE[f] := \E_{x\sim\mu}[f(x)]$.  Thus the set of pseudo-expectations
can be thought of as the low-order moments that could come from a
``true'' distribution $\mu$ or could come from a ``fake''
distribution. Indeed an alternate approach (which we will use only in \lemref{poly_map_pseudo_distribution}) proceeds from defining ``pseudo-distributions'' that violate the nonnegativity condition of probability distributions but in a way that cannot be detected by looking at the expectation of polynomials of degree $\leq d$~\cite{LRS15}.
We can define a noncommutative pseudo-expectation $\psE\in \R\ipp{X}_d^*$ similarly by the constraints $\psE[1]=1$ and $\psE[p^\dag p]\geq 0$ for all $p\in \R\ipp{X}_{d/2}$.

\paragraph{The boolean cube.}
Throughout this work, we will be interested in the special case of pseudo-expectations over the boolean cube $\zo^n$.  This set is defined by the constraints $x_i^2-1=0$, $i=1,\ldots,n$, and thus we say that $\psE$ is a degree-$d$ pseudo-expectation over $\zo^n$ if for any variable $x_i$ and polynomial $q$ of degree at most $d - 2$,
 \be \psE[(x_i^2-1) q] = 0.\label{eq:boolean-psE}\ee
This means we can define $\psE$ entirely in terms of its action on multilinear polynomials. 
 
\subsection{Quantum Information}  \label{sec:prelim_qi}
\begin{trivlist}

\item \textbf{Quantum States.}  The state space $\cA$ of  $m$-qubit states is the complex Euclidean space $\complex^{2^m}$.
An $m$-qubit quantum state is represented by a density operator $\rho$, i.e., a positive semidefinite matrix with trace 1, over $\cA$. The set of all quantum states
in $\cA$ is denoted by $\density{\cA}$. 
A quantum state $\rho$ is called a \emph{pure} state if $\rank(\rho)=1$; otherwise, $\rho$ is called a \emph{mixed} state.  An $m$-qubit pure state $\rho=\ket{\psi}\bra{\psi}$, where $\ket{\psi}$ is a unit vector in $\complex^{2^m}$. (We might abuse $\ket{\psi}$ for $\ket{\psi}\bra{\psi}$ when it is clear from the context.)
The Hilbert-Schmidt inner product on the operator space $\lin{\cA}$ is defined by $\ip{X}{Y}=\tr (X^\dag Y)$  for all $X,Y \in \lin{\cA}$, where $\dag$ is the adjoint operator.

An important operation that can be performed on bipartite or
multipartite states is the \emph{partial trace}. For a bipartite state
$\rho \in \density{\cH_A \otimes \cH_B}$, the partial trace over system
$A$ is a density matrix $\rho_B := \Tr_{A}[\rho]$, whose matrix
elements are given by
\[ \bra{i} \rho_{B} \ket{j} = \sum_{k} \bra{k}\otimes\bra{i} \rho
\ket{k} \otimes \ket{j}. \]
One can analogously compute the partial trace of a multipartite state
over any subset of its component subsystems. The state obtained by
partial tracing some of the subsystems is called the \emph{reduced}
state on the remaining subsystems.

Let $\Sigma$ be a finite nonempty set of \emph{measurement outcomes}. A {\em positive-operator valued measure (POVM)} on the state space $\cA$
with outcomes in $\Sigma$ is a collection of positive semidefinite
operators $\set{P_a : a \in \Sigma}$ such that $\sum_{a\in \Sigma} P_a=\I_\cA$.
When $P_a^2=P_a$ for all $a \in \Sigma$, such a POVM is called \emph{projective} measurement. 
When this POVM is applied to a quantum state $\rho$, the probability of each outcome $a \in \Sigma$ is  $\ip{\rho}{P_a}$. When outcome $a$ is observed, the quantum state $\rho$ becomes the state $\sqrt{P_a}\rho\sqrt{P_a} / \ip{\rho}{P_a}$.

\item \textbf{Distance Measures.} For any $X \in \lin{\cA}$ with singular values $\sigma_1,\cdots, \sigma_d$, where $d=\dim(\cA)$, the trace norm of $\cA$ is $\trnorm{X}=\sum_{i=1}^d \sigma_i$.
The \emph{trace distance} between two quantum states $\rho_0$ and
$\rho_1$ is defined to be 
\[
\frac{1}{2}\trnorm{\rho_0-\rho_1}.\] 

\item \textbf{Multipartite states and separability}
Suppose $\cH_{A}$ and $\cH_{B}$ are two state spaces with dimensions
$d_A, d_B$, and consider the bipartite state space $\cH_{A} \otimes
\cH_{B}$ obtained by taking their tensor product. It is clear that for any $\rho_A \in \density{\cH_{A}}$ and
$\rho_B \in \density{\cH_{B}}$, $\rho_A \otimes \rho_B$ is a density
matrix over $\cH_{A} \otimes \cH_{B}$. However, most density matrices
in $\density{\cH_{A} \otimes \cH_{B}}$ \emph{cannot} be written in
this form as a tensor product. We call any state in the convex hull of the tensor
product states \emph{separable}, and all other states
\emph{entangled}. More formally, define the set of separable states to be
\[ \Sep(d_A, d_B) = \conv(\{\rho \otimes \sigma: \rho \in \cH_{A},
\sigma \in \cH_{B}\}). \]
The problem $h_{\Sep(d_A, d_B)}(M)$ is, given Hermitian $M$ acting
over $\cH_{A} \otimes \cH_{B}$ with $\|M\| \leq 1$, compute
\[ h_{\Sep(d_A, d_B)}(M) := \max_{\rho \in \Sep(d_A, d_B)} \Tr[M
\rho]. \]
It is easy to see that the optimum value will be achieved on the extreme points of the separable states, which 
are simply the pure product states (i.e., $\ket{\psi_A}\bra{\psi_A} \ot \ket{\psi_B}\bra{\psi_B}$). 
We can interpret this as searching for the separable state that has
the highest chance of being accepted by the POVM measurement $\{M, \I
- M\}$. For simplicity, in the rest of the paper we will specialize to
the case where $d_A = d_B$; it is easy to reduce general instances of
the problem to this case~\cite{HM13}.

More generally, we can consider state spaces that are built of tensor
products of many factors, i.e. $\cH = (\cH_A)^{\otimes k}$, and define
the set of $k$-partite separable states as
\[ \Sep^k(d_A) = \conv(\{\rho_1 \otimes \rho_2 \otimes \dots \otimes
\rho_k : \rho_1, \dots, \rho_k \in \density{\cH_A}\}).\]
One can likewise consider the multipartite separability problem
$h_{\Sep^k(d_A)}(M)$.
\end{trivlist}

\subsection{SoS hierarchies for Quantum Problems}

As noted in the introduction, the problem $h_{\Sep}$ is an instance of
polynomial optimization: in the bipartite case, the objective function
is a degree-$4$ polynomial while the constraints are degree-$2$
polynomials in the coefficients of the state vector. It is possible to
derive an SoS hierarchy for this problem by starting with this
formulation and applying the procedure described in
Subsection~\ref{sec:prelim_sos}. However, for the separability problem
it is more convenient to consider an equivalent formulation couched
in the language of quantum information, called the DPS
hierarchy~\cite{DPS03}. The value of the $k$-th level of DPS is given
by $\mathrm{DPS}_k(M) = \max_{\rho \in k\mathrm{-ExtPPT}} \Tr[\rho M]$,
where the set $k\mathrm{-ExtPPT}$ is the set of ``$k$-extendable
PPT states'': all bipartite states $\rho$ that can be obtained as the
\emph{reduced} state of a $k$-partite state $\rho'$, whose support
lies entirely in the $k$-partite symmetric subspace, and whose \emph{partial
  transpose} (interchange of row and column indices) over any subset
of subsystems is positive semidefinite. This equivalence between this
formulation and the usual formulation of SoS can be seen by
interpreting the entries of $\rho$ as the pseudo-expectations of
degree-$k$ monomials of the coefficients of the state vector; see \cite{DPS03} or
\cite{BHKSZ12} for more discussion of this point.

In the context of entangled games, one can follow a similar principle and develop an SoS hierarchy for non-commutative variables to approximate the set of
correlations achievable with local measurements of quantum
states~\cite{NPA08,DLTW08, BFS15}.  
Here, the usual formulation of $k$-the level of ncSoS corresponds to the non-local correlations achievable by applying a degree-$k$ monomial operator (i.e., measurement) on the shared quantum state. 


\section{Framework of Deriving Lower Bounds} \label{sec:lb_framework}
In this section, we demonstrate our framework of deriving
sum-of-squares (SoS) or semidefinite programming (SDP) lower bounds
for optimization problems.  To this end, we formalize the familiar notions of
optimization problem, \SDPEF{}s and integrality gaps.  Then we
show general methods for reducing optimization problems to each other
as well as mapping integrality gaps for one problem/relaxation pair to
another.  

\subsection{Optimization problems and integrality gaps}

We formulate the following abstract definition of optimization
problem.  This definition does not address the computational
difficulty of {\em solving} the problem, which can often be NP-hard
or even uncomputable. 

\begin{definition}[Optimization Problem] \label{def:opt}
An optimization problem $A$,
 denoted by $\Delta^A =
\{\Delta_n^A\}_{n\in \N}$,  
is a family of
collections of optimization instances that are parameterized by the instance
size $n \in
\mathbb{N}$ and 
which consists of following components:
\begin{itemize}
  \item \textbf{Feasible Set}: $\cP^A_n$ is the set of feasible solutions.
  \item \textbf{Instances}: $\Delta^A_n$ is the set of instances (or
    objective functions), each of which is a
    map  $\Phi:\cP^A_n \mapsto [0,1]$.
  \item \textbf{Optimum Value:} Given $n$ and  $\Phi \in
    \Delta^A_n$, the optimum value of the instance $\Phi$ is
   \[
      \OPT(\Phi) :=\max_{x \in \cP^A_n} \Phi(x).
   \]
\end{itemize}
\end{definition}
In general we will choose our parametrization so that the number of variables in
the problem is equal to $n$ or to a polynomial function of $n$. Note that under this defintion, the feasible set is allowed to depend only on the
instance size $n$ and not on the instance itself. An example of an optimization
problem that fits under this definition is MAX-CUT, in which $n$ is the
number of vertices in a graph, $\cP^{\text{MAX-CUT}}_n = \{0,1\}^n$ and $\Delta^{\text{MAX-CUT}}_n$
is the set of functions of the form $\Phi(x) := \E_{(i,j)\sim E}
(x_i-x_j)^2$ for some $E \subset [n] \times [n]$.  As we can see from
the example, the functions $\Phi$ can usually be efficiently
specified (in this case by the edge set $E$), and can be thought of as
the computational ``question'' while the optimal value of $x$ can be
thought of as the ``answer.''

We will focus on the following important special cases of optimization problems.

\begin{definition}[Polynomial Optimization] \label{def:pol_opt}
A \emph{polynomial} optimization problem $A$ is an optimization
problem in which the feasible set $\cP_n^A$ is a variety of 
$\mathbb{R}^{n}$ defined by $m$ polynomial constraints for some
bounded function $m=m(n)$ and in which each instance $\Phi \in \Delta_n^A$ is a
polynomial function from
$\cP^A_n$ to $[0,1]$.   Here ``variety'' 
means that
$$\cP_n^A = \{x  \in \R^n : g_1(x)=\cdots = g_m(x)= 0\},$$
for some polynomials $g_1,\ldots,g_m$.
\end{definition}

\begin{definition}[Boolean Polynomial Optimization] \label{def:bol_opt}
A \emph{boolean polynomial} optimization problem $A$, denoted by
$\Pi^A$, is a polynomial optimization problem defined by the constraints $x_i^2=1$ for $i=1,\ldots,n$.  Thus the feasible set
$\cP^A_n$ for such a problem is always the boolean hypercube $\cP^{\text{bool}}_n
= \zo^n$.  
\end{definition}
It is easy to see that the above definitions of optimization problems
capture many problems of interest. For example, MAX-3-SAT, MAX-CUT and
other MAX-CSPs can be formulated as boolean polynomial optimization problems
with the objective function being a polynomial that counts the
fraction of satisfiable clauses, as indicated below.
\begin{definition}[Constraint Satisfaction Problem] \label{def:csp}
A (maximum) \emph{constraint satisfaction problem} ((MAX)-CSP) $A$ is a type of
optimization problem over the boolean hypercube $\zo^n$ specified by a
collection of \emph{clauses} $\{h_1, \dots, h_m\}$, where each clause
$h_i: \zo^n \to \{0,1\}$ is a boolean function, and the objective function $f =
\frac{1}{m} \sum_{i=1}^{m} h_i$ counts the fraction of clauses that evaluate to
$1$. 
\end{definition}
\begin{proposition}\label{prop:csp_to_poly}
Any CSP where each constraint depends on $\leq \kappa$ variables can be written as a boolean polynomial optimization problem, where the objective function is a polynomial of degree $\kappa$.
\end{proposition}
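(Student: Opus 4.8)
The plan is to reduce to the case of a single clause and then invoke multilinear interpolation over the boolean cube. First I would recall that by Definition~\ref{def:csp} the objective function is $f = \frac1m \sum_{i=1}^m h_i$, where each clause $h_i : \zo^n \to \{0,1\}$ depends on a set $S_i \subseteq [n]$ of at most $\kappa$ coordinates. Since the degree of a sum of polynomials is at most the maximum of the degrees, it suffices to show that each $h_i$ agrees on $\zo^n$ with a polynomial $p_i$ of degree $\leq \kappa$: then $f$ agrees with $\frac1m\sum_i p_i$, a polynomial of degree $\leq \max_i \deg p_i \leq \kappa$, and this polynomial maps $\zo^n$ into $[0,1]$ because each $h_i$ takes values in $\{0,1\}$ and we are averaging $m$ of them.

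For a single clause I would view $h_i$ as a function of its $|S_i| \leq \kappa$ relevant variables and use the standard fact that every function $\{\pm 1\}^{S_i} \to \R$ has a (unique) multilinear representation $p_i(x) = \sum_{T \subseteq S_i} \widehat{h_i}(T) \prod_{j \in T} x_j$; this polynomial has degree $\leq |S_i| \leq \kappa$ and agrees with $h_i$ on $\{\pm 1\}^{S_i}$, hence, regarded as a polynomial in all $n$ variables that simply ignores the coordinates outside $S_i$, on all of $\zo^n$. Equivalently one can write the interpolating polynomial down explicitly as $p_i(x) = \sum_{a \in \{\pm 1\}^{S_i}} h_i(a) \prod_{j \in S_i} \frac{1 + a_j x_j}{2}$, which visibly has degree $\leq \kappa$, so no appeal to Fourier analysis is strictly needed.

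Finally I would check that the resulting object really is a boolean polynomial optimization problem in the sense of Definition~\ref{def:bol_opt}: the feasible set is $\cP_n^A = \zo^n$, cut out by the $n$ polynomial constraints $x_j^2 - 1 = 0$, and the instance is the degree-$\leq\kappa$ polynomial $f = \frac1m\sum_i p_i$, which maps $\zo^n$ into $[0,1]$ as noted above. There is essentially no obstacle here; the only point that deserves a moment's care is that the interpolating polynomial need not take values in $[0,1]$, nor even be nonnegative, away from the cube, but this is irrelevant since the feasible set of a boolean polynomial optimization problem is exactly $\zo^n$, where $f$ does coincide with the CSP objective.
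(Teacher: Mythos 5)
Your proof is correct and follows essentially the same route as the paper's: the explicit interpolation formula you write down, $p_i(x) = \sum_{a \in \{\pm 1\}^{S_i}} h_i(a) \prod_{j \in S_i} \frac{1 + a_j x_j}{2}$, is exactly the indicator-function construction the paper uses, and you then sum over clauses in the same way. The extra remarks (the equivalent multilinear/Fourier phrasing and the observation that $f$ only needs to land in $[0,1]$ on the cube itself) are fine but not a genuinely different argument.
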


\begin{proof}
This is a simple example of polynomial interpolation.  Assume for ease
of notation that each constraint depends on exactly $\kappa$ variables.
  We first show that each constraint can be expressed as a low-degree polynomial. Given a string $(y_1, \dots, y_\kappa) \in \{\pm 1\}^{\kappa}$, we define the \emph{indicator} function
  \[ \mathbf{1}_{y_1, \dots , y_\kappa}(x_1, \dots, x_\kappa) =
  \prod_{i=1}^{\kappa} \frac{1+x_iy_i}{2} . \]
  This function is a polynomial of degree $\kappa$. It is easy to see that when $x_i = y_i$ for all $i$, then $\mathbf{1}_{y_1, \dots, y_\kappa}(x_1, \dots, x_\kappa) = 1$, and if  $x_i \neq y_i$ for any $i$, then $\mathbf{1}_{y_1, \dots, y_\kappa}(x_1, \dots, x_\kappa) = 0$. Using these indicator functions, we can express any boolean function $h$ over $\kappa$ variables 
  as a polynomial with degree $\kappa$:
  \[ h(x_1, \dots, x_\kappa) = \sum_{(y_1, \dots y_\kappa) \in \{\pm 1\}^\kappa}  h(y_1, \dots,y_\kappa) \mathbf{1}_{y_1, \dots, y_k} (x_1, \dots, x_\kappa). \]
  Thus, if we are given a CSP with clauses $\{h_1, \dots, h_m\}$, each of which
  depends on $\kappa$ variables, then the total objective function $f = \frac{1}{m}
  \sum_{i=1}^m h_i$ is a polynomial of degree $\kappa$. 
\end{proof}

Another important class of optimization
problems are operator norms of linear functions, defined as follows.
If $\cA,\cB$ are normed spaces and $T:\cA\ra \cB$ is a linear map then
$\|T\|_{\cA\ra \cB} := \sup_{x\neq 0} \|T(x)\|_{\cB} / \|x\|_{\cA}$ can be
thought of as an optimization problem where $\cP_n$ is the unit ball of
$\cA$ and $\Phi_n^A(x) = \|T(x)\|_{\cB}$.  If $\cA = \ell_p^n$ and $\cB
=\ell_q^m$ then this corresponds to a polynomial optimization problem.
Computing $h_{\Sep}$ (cf.~\eq{hSep}) can be similarly be
formulated as a polynomial optimization problem, where the feasible
set is the unit sphere~\cite{BHKSZ12}.

Our goal is to find optimization problems where the SoS hierarchy and other \SDPEF{}s fail.  These examples are known as ``integrality gaps,'' where the terminology comes from the idea of approximating integer programs with convex relaxations. For our purposes, an integrality gap will be an example of an optimization problem in which the true answer is lower than the output of the \SDPEF{}.  To achieve this, we need to demonstrate a feasible point of the SDP with a value that is larger than the true answer.  These feasible points are called {\em pseudo-solutions}, and we will define them for any polynomial optimization problem as follows.

\begin{definition}[Pseudo-Solution] \label{def:pseudo_sol}
Let $A$ be a polynomial optimization problem. Let $\Phi^A_n\in \Delta^A_n$ be an instance of optimization $A$ for some $n$. A {\em degree-$d$ value-$c$ pseudo-solution} for $\Phi^A_n$ is a degree-$d$ pseudo-expectation $\psE$ satisfying the constraints of $\cP^A_n$
 such that
\[
   \psE[\Phi^A_n(x)] \geq c
\]
\end{definition}

In the case of CSPs, we can also define a stronger type of
pseudo-solution that not only achieves an objective value of $1$, but
also satisfies the constraints of the CSP as ``hard'' constraints.  
\begin{definition}\label{def:hard_constraints}
Given a CSP $A$ whose objective function is a polynomial of degree $k$, we say a degree-$d$ pseudo-expectation $\psE[\cdot]$ \emph{perfectly satisfies} $A$ if for every constraint $g_i(x)$ of $A$, and every polynomial $p(x)$ with $\deg(p) \leq d - k$, 
\[ \psE[ p(x) (g_i(x) - 1)] = 0. \]
\end{definition}

A single degree-$d$ value-$c$ pseudo-solution for an instance
$\Phi^A_n$ implies that the sum-of-squares approach (up to degree $d$) believes the optimum value of $\Phi^A_n$ is at least $c$. If the true optimum value of $\Phi^A_n$ is smaller than $c$, then such a pseudo-solution serves as an integrality gap for the SoS approach, i.e.~an example where the SoS hierarchy gives the wrong answer.
To refute the power of the SoS hierarchy, we need to establish such pseudo-solutions as well as small true optimum values for any large $n$. 

\begin{definition}[Integrality gap] \label{def:SoS_lb_wit}
Let $A$ be any polynomial optimization problem. Let $d=d(n),c=c(n),s=s(n)$ be functions of $n$ such that $0\leq s< c \leq 1$. A degree-$d$ value-$(c,s)$ integrality gap for $A$ is a collection of $\Phi^A_n \in \Delta^A_n$ for each $n\geq n_0$, s.t. 
\begin{itemize}
 \item The true optimum value $\OPT(\Phi^A_n)\leq s$.
 \item For each $n \geq n_0$, there exists a degree-$d$ value-$c$ pseudo-expectation $\psE_n$ for $\Phi^A_n$ such that $ \psE_n[\Phi^A_n(x)] \geq c.$
\end{itemize} 
\end{definition}

\noindent We can relate integrality gaps to lower bounds on $\degsos$ as follows.

\begin{proposition} \label{prop:sos_lb}
Let $A$ be any polynomial optimization problem with a degree-$d$
value-$(c,s)$ integrality gap. For some $0<\delta \leq c-s$ let $f_n=
c -\delta - \Phi^A_n (x^A_n)$  where $\Phi^A_n$ is from the
integrality gap and $x^A_n \in  \cP^A_n$. Then $f_n$ is a polynomial
taking nonnegative values over $\cP^A_n$ and has  $\degsos(f_n)> d$. 
\end{proposition}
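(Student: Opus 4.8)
The plan is to prove the two asserted properties separately; both are short, the content being essentially the observation that a single degree-$d$ pseudo-solution for $\Phi^A_n$ certifies a $\degsos$ lower bound, repackaged as a statement about nonnegative polynomials. First, to see that $f_n$ is nonnegative on $\cP^A_n$, I would just unwind the definitions: the integrality gap hypothesis gives $\OPT(\Phi^A_n)\leq s$, so $\Phi^A_n(x)\leq s$ for every $x\in\cP^A_n$, and since $0<\delta\leq c-s$ we have $c-\delta\geq s$, whence $f_n(x)=c-\delta-\Phi^A_n(x)\geq c-\delta-s\geq 0$ for all $x\in\cP^A_n$. That settles the first claim.

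For the degree bound I would argue by contradiction. Suppose $\degsos(f_n)\leq d$ (with respect to the constraints $g_1,\dots,g_m$ defining $\cP^A_n$). Then there are polynomials $p_1,\dots,p_k$ and $q_1,\dots,q_m$ with $f_n=c-\delta-\Phi^A_n=\sum_i p_i^2+\sum_j q_j g_j$ and each $p_i^2$ and $q_j g_j$ lying in $\R[x]_d$. Apply the degree-$d$ pseudo-expectation $\psE_n$ furnished by the integrality gap to both sides. On the right-hand side, positivity gives $\psE_n[p_i^2]\geq 0$ (each $p_i$ has degree $\leq d/2$), and the fact that $\psE_n$ satisfies the constraints gives $\psE_n[q_j g_j]=0$ (each $q_j$ has degree $\leq d-\deg g_j$), so $\psE_n[f_n]\geq 0$. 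On the left-hand side, linearity together with $\psE_n[1]=1$ and $\psE_n[\Phi^A_n]\geq c$ gives $\psE_n[f_n]=c-\delta-\psE_n[\Phi^A_n]\leq c-\delta-c=-\delta<0$. These two bounds contradict each other, so no degree-$d$ certificate exists and $\degsos(f_n)>d$.

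The only point needing any care — the nearest thing to an obstacle — is the degree bookkeeping: one must check that the guarantees of a degree-$d$ pseudo-expectation (it is defined on all of $\R[x]_d$, nonnegative on squares of polynomials of degree $\leq d/2$, and annihilating $g_j q$ whenever $\deg q\leq d-\deg g_j$) match exactly the degrees of the pieces appearing in a degree-$d$ SoS proof of $f_n\geq 0$. This works because $\deg\Phi^A_n\leq d$ is implicit in the integrality gap hypothesis (the quantity $\psE_n[\Phi^A_n]$ must be defined), so $f_n\in\R[x]_d$ and the two degree budgets line up. No further ingredients are required.
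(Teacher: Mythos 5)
Your proof is correct and is exactly the argument the paper intends — the paper dismisses this proposition as ``immediate from the definitions,'' and what you have written is the standard unfolding of those definitions (nonnegativity from $\OPT\leq s\leq c-\delta$, and the degree bound by evaluating the hypothetical degree-$d$ certificate under the pseudo-expectation $\psE_n$ to reach $0\leq\psE_n[f_n]\leq-\delta$). Your remark on matching the degree budgets of the certificate against the guarantees of a degree-$d$ pseudo-expectation is the right point of care and is handled correctly.
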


\begin{proof}
Immediate from the definitions.
\end{proof}

\subsection{Reduction between optimization problems}
To obtain SoS lower bounds for optimization problems, it suffices to
establish integrality gaps. However, it is not clear how to obtain
such integrality gaps in general, which might be a challenging task on
its own. 
Here, we formulate an approach to establish such integrality gaps through reductions. Specifically, we start with some optimization problem with known integrality gaps and reduce it to an optimization problem that we want to establish integrality gaps. 

\begin{definition}[Reductions] \label{def:reduction}
A \emph{reduction} $R_{A \Rightarrow B}$ from optimization problem $A$ to
optimization problem $B$ is a map from $\Delta^A$ to $\Delta^B$;
i.e. $R(\Phi_{n_A}^A)\in \Delta^B_{n_B}$, where $n_B$ is a function of $n_A$.
\end{definition}

We remark that for the purpose of establishing integrality gaps, the reduction needs to be neither explicit or efficient. However, it is favorable to have the following properties for the reduction. 

\begin{definition}[Properties of Reductions] \label{def:reduction_prop}
A \emph{reduction} $R_{A \Rightarrow B}$ from optimization problem $A$ to optimization problem $B$ is called 
\begin{itemize}
  \item \textbf{$(s^B,s^A)$-approximate}\footnote{We write the parameters in
      this order in order to match the convention
$(c,s)$-approximate extended formulations, which are defined below. We will see
that a $(s^B, s^A)$-approximate reduction gives rise to a $(c=s^B, s
=s^A)$-approximate extended formulation.} if for any $n$ and any $\Phi^A_{n_A}$ and its corresponding $\Phi^B_{n_B}=R(\Phi^A_{n_A})$, we have 
  \[
     \OPT(\Phi^A_{n_A}) = \max_{x \in \cP^A_{n_A}} \Phi^A_{n_A}(x) \leq s^A 
\Rightarrow  \OPT(\Phi^B_{n_B}) = \max_{x \in \cP^B_{n_B}} \Phi^B_{n_B}(x) \leq s^B.
  \]
Here $s^A,s^B$ are understood to be functions of $n_A, n_B$, and $s^A \leq s^B$.
  \item \textbf{embedded} if for any $n_A$, there is an additional map $E:\cP^A_{n_A} \mapsto \cP^B_{n_B}$ such that for any $\Phi^A_{n_A}$ and its corresponding $\Phi^B_{n_B}$, any $x^A_{n_A} \in \cP^A_{n_A}$ and its corresponding $x^B_{n_B} = E(x^A_{n_A}) \in \cP^B_{n_B}$, we have 
  \[
     \Phi^A_{n_A}(x^A_{n_A})=\Phi^B_{n_B}(x^B_{n_B}).
  \]
\end{itemize}
\end{definition}

The first property shows the soundness of the reduction, while the second
property can be viewed as a strong statement about completeness.  Not only
should the optimum value of the reduced problem be at least as large, but each
$x\in\cP^A_n$ (i.e.~including non-optimal $x$) corresponds to some point in
$\cP^B_n$ with the same value under $\Phi^B_n$.  This condition was needed for
the recent SDP lower bounds in \cite{LRS15}.

For reductions between polynomial optimization problems,  the following property is crucial to establish pseudo-solutions, and eventually integrality gaps, for the reduced problems. 

\begin{definition}[Pseudo-solution Preserving Reduction] \label{def:reduction_pseudo_sol}
Let $R_{A \Rightarrow B}$ be a reduction from polynomial optimization problem $A$ to polynomial optimization problem $B$. It is called \textbf{$(d^A, c^A, d^B, c^B)$ pseudo-solution preserving} if for any degree-$d^A$ value-$c^A$ pseudo-solution for any instance $\Phi^A_{n_A}$, there is a degree-$d^B$ value-$c^B$ pseudo-solution for its corresponding instance $\Phi^B_{n_B}$, for any $n_A$.   Here $d^A,c^A,d^B,c^B$ should be thought of as functions of $n_A$.
\end{definition}

It is straightforward to verify that the above three properties are transitive. Thus, it is possible to design a chain of reductions for complicated reductions. 

\begin{proposition}[Transitivity of Properties of Reductions] \label{prop:transitivity}
Let $R_{A \Rightarrow B}$ and $R_{B \Rightarrow C}$ be reductions from optimization problem $A$ to optimization problem $B$ and from optimization problem $B$ to optimization problem $C$ respectively. Let $R_{A \Rightarrow C}$ be the natural composition of $R_{A \Rightarrow B}$ and $R_{B \Rightarrow C}$.  
\begin{itemize}
  \item If $R_{A \Rightarrow B}$ is $(s^B, s^A)$-approximate and $R_{B \Rightarrow C}$ is $(s^C, s^B)$-approximate, then $R_{A \Rightarrow C}$ is $(s^C, s^A)$-approximate. 
  \item If $R_{A \Rightarrow B}$ and $R_{B \Rightarrow C}$ are embedded, then $R_{A \Rightarrow C}$ is embedded. 
  \item If $R_{A \Rightarrow B}$ is $(d^A, c^A, d^B, c^B)$ pseudo-solution preserving and $R_{B \Rightarrow C}$ is $(d^B, c^B$, $d^C, c^C)$ pseudo-solution preserving, then $R_{A \Rightarrow C}$ is $(d^A, c^A, d^C, c^C)$ pseudo-solution preserving.
\end{itemize}
\end{proposition}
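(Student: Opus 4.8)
The plan is to prove all three bullets by directly unwinding the relevant definition for the composed reduction. Recall that $R_{A\Rightarrow C}$ sends an instance $\Phi^A_{n_A}$ to $R_{B\Rightarrow C}\bigl(R_{A\Rightarrow B}(\Phi^A_{n_A})\bigr)$; throughout I would write $\Phi^B_{n_B} := R_{A\Rightarrow B}(\Phi^A_{n_A})$ and $\Phi^C_{n_C} := R_{B\Rightarrow C}(\Phi^B_{n_B})$, where $n_B$ is the function of $n_A$ supplied by the first reduction and $n_C$ the function of $n_B$ supplied by the second, so that $n_C$ is indeed a function of $n_A$ as Definition~\ref{def:reduction} requires.

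For the first bullet, I would assume $\OPT(\Phi^A_{n_A})\le s^A$, apply the $(s^B,s^A)$-approximateness of $R_{A\Rightarrow B}$ to get $\OPT(\Phi^B_{n_B})\le s^B$, and then apply the $(s^C,s^B)$-approximateness of $R_{B\Rightarrow C}$ to conclude $\OPT(\Phi^C_{n_C})\le s^C$; chaining the two monotonicity hypotheses $s^A\le s^B$ and $s^B\le s^C$ also gives $s^A\le s^C$, which is the side condition demanded of a $(s^C,s^A)$-approximate reduction. For the second bullet, I would take the embedding maps $E_{A\Rightarrow B}\colon\cP^A_{n_A}\to\cP^B_{n_B}$ and $E_{B\Rightarrow C}\colon\cP^B_{n_B}\to\cP^C_{n_C}$ guaranteed by the two reductions and set $E_{A\Rightarrow C} := E_{B\Rightarrow C}\circ E_{A\Rightarrow B}$; then for any $x^A_{n_A}\in\cP^A_{n_A}$, writing $x^B_{n_B}:=E_{A\Rightarrow B}(x^A_{n_A})$ and $x^C_{n_C}:=E_{B\Rightarrow C}(x^B_{n_B})$, the two given identities compose to $\Phi^A_{n_A}(x^A_{n_A})=\Phi^B_{n_B}(x^B_{n_B})=\Phi^C_{n_C}(x^C_{n_C})$. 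For the third bullet, given a degree-$d^A$ value-$c^A$ pseudo-solution $\psE$ for $\Phi^A_{n_A}$, the $(d^A,c^A,d^B,c^B)$ property of $R_{A\Rightarrow B}$ yields a degree-$d^B$ value-$c^B$ pseudo-solution for $\Phi^B_{n_B}$, and feeding that into the $(d^B,c^B,d^C,c^C)$ property of $R_{B\Rightarrow C}$ yields a degree-$d^C$ value-$c^C$ pseudo-solution for $\Phi^C_{n_C}$, which is exactly the conclusion.

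There is essentially no hard step: this is a bookkeeping lemma, and none of the maps need be explicit or efficient, only to exist. The only point requiring a little care is that the parameters $s^\bullet$, $d^\bullet$, $c^\bullet$ are functions of the respective instance sizes, so one should check that after substituting $n_B=n_B(n_A)$ and $n_C=n_C(n_B(n_A))$ the composed inequalities and identities remain meaningful as statements about $n_A$ alone; this is immediate from associativity of composition. I would therefore present the final proof as three short paragraphs, one per bullet, each a two-line application of Definitions~\ref{def:reduction_prop} and~\ref{def:reduction_pseudo_sol} to the composed reduction.
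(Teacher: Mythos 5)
Your proof is correct and matches the paper's approach exactly; the paper simply states ``Immediate from the definitions,'' and your three paragraphs are the straightforward unwinding of those definitions that the authors had in mind.
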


\begin{proof}
Immediate from the definitions.
\end{proof}

We are ready to illustrate how reductions help establish integrality gaps. 

\begin{proposition} \label{prop:SoS_reduction}
Let $A, B$ be polynomial optimization problems. Let $R_{A \Rightarrow B}$ be the reduction from $A$ to $B$. Assuming there exists a degree-$d^A$ value-$(c^A,s^A)$ integrality gap for $A$, if $R_{A \Rightarrow B}$ is $(s^B, s^A)$-approximate and $(d^A, c^A, d^B, c^B)$ pseudo-solution preserving, then there exists a degree-$d^B$ value-$(c^B,s^B)$ integrality gap for $B$. 
\end{proposition}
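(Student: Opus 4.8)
The plan is to simply chase the definitions: the reduction hypotheses have been set up precisely so that an integrality gap for $A$ transports to one for $B$. Concretely, I would start with the given degree-$d^A$ value-$(c^A,s^A)$ integrality gap for $A$, which by \defref{SoS_lb_wit} supplies, for every $n_A \geq n_0$, an instance $\Phi^A_{n_A}\in\Delta^A_{n_A}$ with $\OPT(\Phi^A_{n_A})\leq s^A$ together with a degree-$d^A$ value-$c^A$ pseudo-expectation $\psE_{n_A}$ satisfying the constraints of $\cP^A_{n_A}$ and with $\psE_{n_A}[\Phi^A_{n_A}]\geq c^A$. I would then define the candidate integrality gap for $B$ to be the family $\Phi^B_{n_B} := R_{A\Rightarrow B}(\Phi^A_{n_A})$, where $n_B=n_B(n_A)$ is the size function of the reduction, so that each $\Phi^B_{n_B}\in\Delta^B_{n_B}$.

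Next I would verify the two bullets of \defref{SoS_lb_wit} for this family. For the soundness bullet, since $\OPT(\Phi^A_{n_A}) = \max_{x\in\cP^A_{n_A}}\Phi^A_{n_A}(x)\leq s^A$, the $(s^B,s^A)$-approximate property of $R_{A\Rightarrow B}$ immediately gives $\OPT(\Phi^B_{n_B}) = \max_{x\in\cP^B_{n_B}}\Phi^B_{n_B}(x)\leq s^B$. For the pseudo-solution bullet, apply the $(d^A,c^A,d^B,c^B)$ pseudo-solution-preserving property of $R_{A\Rightarrow B}$ to the degree-$d^A$ value-$c^A$ pseudo-solution $\psE_{n_A}$ for $\Phi^A_{n_A}$; by \defref{reduction_pseudo_sol} this yields a degree-$d^B$ value-$c^B$ pseudo-solution for the corresponding instance $\Phi^B_{n_B}$, i.e.\ a degree-$d^B$ pseudo-expectation satisfying the constraints of $\cP^B_{n_B}$ whose evaluation on $\Phi^B_{n_B}$ is at least $c^B$. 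Since $0\leq s^B < c^B\leq 1$ is inherited from the parameter conventions, both conditions of \defref{SoS_lb_wit} hold, so $\{\Phi^B_{n_B}\}$ is a degree-$d^B$ value-$(c^B,s^B)$ integrality gap for $B$, completing the argument.

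There is no real obstacle in this particular statement — it is immediate from the definitions, exactly as \propref{transitivity} is — and I would phrase the write-up in one or two sentences accordingly. The genuine work lies elsewhere: in checking that the concrete reductions used later (e.g.\ \propref{reduction_txor_tofsat}, \propref{witness_qmat}, and the reductions into $h_{\Sep}$) are in fact $(s^B,s^A)$-approximate and pseudo-solution preserving with acceptable parameter degradation, and in combining this proposition with \propref{transitivity} to compose a whole chain of reductions starting from Grigoriev's $\txor$ integrality gap. So in the proof I would only note that the claim follows directly from \defref{reduction_prop}, \defref{reduction_pseudo_sol}, and \defref{SoS_lb_wit}, and defer the substantive content to the sections that construct those reductions.
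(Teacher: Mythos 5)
Your proposal is correct and takes the same approach as the paper, which simply states that the claim follows directly from the definitions. Your expanded version spelling out how each of the two bullets of Definition~\ref{def:SoS_lb_wit} is verified (soundness via the $(s^B,s^A)$-approximate property, the pseudo-solution via the pseudo-solution-preserving property) is exactly the intended unpacking.
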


\begin{proof}
It follows directly by definition.
\end{proof}

As a direct consequence, Proposition~\ref{prop:SoS_reduction} suggests that we can make use of reductions to derive SoS lower bounds. The hard part is, however, the design of reductions with approximation and pseudo-solution preserving properties. Here, we describe a simple but useful observation that constructs a pseudo-solution from another under a polynomial map. 
We say that $p:\bbR^n\ra \bbR^m$ is a degree-$d$ polynomial map if $p(x) = (p_1(x),\ldots,p_m(x))$ where each $p_i \in \bbR[x]_d$.

\begin{lemma} \label{lem:poly_map_pseudo_distribution}
Let $A\subset \bbR^n, B \subset \bbR^m$ be algebraic varieties, meaning that 
\begsub{varieties}
A & = \{x \in \bbR^n : g_1(x) = \cdots = g_{n'}(x)=0\} \\
B & = \{x \in \bbR^m : h_1(x) = \cdots = h_{m'}(x)=0\} ,
\endsub
for some polynomials $\{g_i\}, \{h_i\}$. 

Suppose that $p$ is a degree-$d$ polynomial map from $\R^n \ra \R^m$ such that $p(A)\subseteq B$.
Let $\psE_A \in \bbR[x_1,\ldots,x_n]_\ell^*$ be a degree-$\ell$ pseudo-expectation that is compatible with the constraints $g_1,\ldots,g_{n'} \in \bbR[x_1,\ldots,x_n]$.  
Then there exists a degree-$\ell/d$ pseudo-expectation $\psE_B \in \bbR[y_1,\ldots,y_m]_{\ell/d}^*$ that is compatible with the constraints $h_1,\ldots,h_{m'}$.
\end{lemma}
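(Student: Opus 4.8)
The plan is to build $\psE_B$ as the pushforward of $\psE_A$ along the polynomial map $p$. For a polynomial $q\in\mathbb{R}[y_1,\ldots,y_m]_{\ell/d}$ I would define
\[
  \psE_B[q] \;:=\; \psE_A\big[\,q\circ p\,\big],
\]
where $q\circ p$ denotes the polynomial $x\mapsto q(p_1(x),\ldots,p_m(x))$ in $\mathbb{R}[x_1,\ldots,x_n]$. This is well-defined precisely because of the degree count: each $p_i$ has degree $\le d$ and $q$ has degree $\le \ell/d$, so $q\circ p$ has degree $\le \ell$ and hence lies in the domain $\mathbb{R}[x]_\ell$ of $\psE_A$. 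Linearity of $\psE_B$ is inherited from linearity of $\psE_A$ together with linearity of the map $q\mapsto q\circ p$.

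It then remains to verify the three defining properties of a pseudo-expectation. Normalization is immediate, since $1\circ p = 1$ gives $\psE_B[1]=\psE_A[1]=1$. For positivity, if $r\in\mathbb{R}[y]_{\ell/(2d)}$ then $r\circ p$ has degree $\le\ell/2$, so $\psE_B[r^2]=\psE_A[(r\circ p)^2]\ge 0$ by positivity of $\psE_A$. The substantive point is compatibility with the constraints $h_1,\ldots,h_{m'}$ of $B$: for each $i$ and each $s\in\mathbb{R}[y]$ with $\deg s\le \ell/d-\deg h_i$ we must show $\psE_B[h_i s]=\psE_A[(h_i\circ p)(s\circ p)]=0$. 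Here I would use the hypothesis $p(A)\subseteq B$: since $h_i$ vanishes on $B$, the polynomial $h_i\circ p$ vanishes on the common zero set $A$ of $g_1,\ldots,g_{n'}$, hence lies in the ideal $(g_1,\ldots,g_{n'})$, and can be written $h_i\circ p=\sum_j u_{ij}g_j$ with $\deg(u_{ij}g_j)\le \deg(h_i\circ p)\le d\cdot\deg h_i$ for each $j$. Substituting and using $\deg\big(u_{ij}(s\circ p)\big)\le(d\cdot\deg h_i-\deg g_j)+(\ell-d\cdot\deg h_i)=\ell-\deg g_j$, every term $\psE_A[u_{ij}(s\circ p)\,g_j]$ vanishes by the compatibility of $\psE_A$ with $g_j$, so the sum is $0$.

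The only delicate step, and the main obstacle, is the passage from the set containment $p(A)\subseteq B$ to an \emph{ideal} membership $h_i\circ p\in(g_1,\ldots,g_{n'})$ \emph{with a certificate whose terms have degree at most $\deg(h_i\circ p)$}: a pseudo-expectation only ``sees'' the ideal generated by its constraints, not the real radical, and even granting membership one needs a representation that fits inside the degree budget $\ell$ of $\psE_A$. For the varieties actually used in this paper this is routine — multilinear reduction modulo $x_k^2-1$ on the boolean cube, and divisibility by $\sum_k x_k^2-1$ on the hypersphere, both supply certificates of the required degree — so the argument goes through; stating the lemma in full generality would require carrying this as an explicit hypothesis on the defining polynomials $\{g_j\}$ and $\{h_i\}$.
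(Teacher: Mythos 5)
Your proof is correct, and it takes a genuinely different route from the paper's. You define $\psE_B$ directly as the pullback functional $q \mapsto \psE_A[q \circ p]$ and verify the pseudo-expectation axioms on this object. The paper instead first converts $\psE_A$ into a \emph{pseudo-density} $\mu_A$ on $A$ (assuming $A$ discrete, as in the application $A = \{\pm 1\}^n$), pushes this forward along $p$ to a pseudo-density $\mu_B$ supported on $B$, and then reads off $\psE_B$ from $\mu_B$. The two resulting linear functionals are the same, but the verification of the constraint condition differs materially. In the paper's version, $\mu_B$ is supported on $B$, so $\psE_B[h_i q] = \sum_{y\in B}\mu_B(y)h_i(y)q(y) = 0$ trivially because $h_i$ vanishes pointwise on $B$; no ideal-membership argument appears at that step. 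In your version, you have to show $\psE_A[(h_i\circ p)(s\circ p)] = 0$, which forces you to explicitly exhibit a degree-controlled certificate $h_i\circ p = \sum_j u_{ij} g_j$.

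You are right to flag the passage from $p(A)\subseteq B$ to degree-bounded ideal membership as the delicate point, and it is worth noting that the paper's route does not actually dodge it -- it relocates it. To lift $\psE_A$ to a pseudo-density $\mu_A$ the paper must know that $\psE_A$ factors through the evaluation map $e_A$, i.e.\ that $\psE_A[f]=0$ whenever $f\in\R[x]_\ell$ vanishes on $A$. That in turn is exactly a degree-bounded real Nullstellensatz for the constraints $g_j$, just applied to $A$-side polynomials rather than to $h_i\circ p$. For the boolean cube this is multilinear reduction modulo $x_k^2-1$; for the hypersphere it is division by $\sum_k x_k^2 - 1$; in both cases the certificate degree matches the degree of the polynomial being reduced, so both arguments close. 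So the tradeoff is: your argument is more general (it does not need $A$ discrete) and makes the needed ideal-membership assumption visible, while the paper's argument is slicker on the $B$-side at the cost of assuming discreteness and hiding the analogous $A$-side ideal-membership in the lift to $\mu_A$. Both are fine for the uses in the paper.

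One small quibble with the degree bookkeeping: your bound $\deg(u_{ij}g_j) \le \deg(h_i\circ p)$ is an extra hypothesis on the certificate (which you are honest about), not something that follows from ideal membership alone. It would read more cleanly if you phrased it as ``assume the Nullstellensatz certificate for $h_i\circ p$ has degree at most $\deg(h_i\circ p)$,'' and then observe this holds in the boolean and spherical cases; that is essentially what you wrote in the last paragraph.
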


Note that this is essentially the same statement as Fact A.8 in \cite{BKS}, which was stated there without proof.  

\begin{proof}
Assume that $A$ is a discrete set.  This matches our actual application in which $A = \{\pm 1\}^n$ and mostly affects only the notation.  
We will need to introduce the notion of a pseudo-density.  A degree-$l$ pseudo-density on $A$ is a function $\mu_A : A \ra \R$ such that $\sum_{x\in A} \mu_A(x)=1$ and $\sum_{x\in A} \mu_A(x) f(x)^2 \geq 0$ for all $f \in \R[x]_{l/2}$.  The term ``pseudo-'' refers to the fact that $\mu_A(x)$ can be negative.  Any true probability distribution is also a pseudo-density and in the case of $A=\{\pm 1\}^n$, degree-$n$ pseudo-densities are also probability distributions.  In general a degree-$\ell$ pseudo-density $\mu_A$ induces a pseudo-expectation $\psE_A\in \bbR[x]_\ell^*$ with 
\be \psE_A[f] := \sum_{x\in A} \mu_A(x) f(x) ,\label{eq:pseudery}\ee
for all $f\in \bbR[x]_l$.

 To obtain a pseudo-density from a pseudo-expectation we need to solve an underconstrained system of linear equations. This can be done as follows.  Let $e_A :\bbR[x] \ra \bbR^A$ denote the evaluation map on $A$; i.e. $e_A(f)$ is the tuple $(f(x))_{x\in A}$.  Note that $e_A$ is a linear map, and we can also view $\mu_A$ as a linear map from $\R^A\ra \R$.  Given a degree-$\ell$ pseudo-expectation $\psE_A$, \eq{pseudery} can be thought of as constraining $\mu_A$ on the subspace $e_A(\R[x]_\ell)$.  If we write $\R^A  = e_A(\R[x]_\ell) \oplus V$ for some subspace $V$ then we can extend $\mu_A$ to act arbitrarily on $V$.  As long as the action on $\R[x]_\ell$ is the same, this will still meet the definition of a pseudo-distribution.

Now define
\[
   \mu_B(y) = \sum_{x \in p^{-1}(y)} \mu_A(x), \quad\forall y \in B
\]
Since $p(A)\subseteq B$ we have $\sum_{y\in B} \mu_B(y) = \sum_{x\in A}\mu_A(x) = 1$.  And if $f \in \R[y_1,\ldots,y_m]_{\ell/2d}$ then 
\be\sum_{y\in B} \mu_B(y) f(y)^2 = 
\sum_{x\in A} \mu_A(x) f(p(x))^2 \geq 0,\ee
since $\deg(f\circ p)\leq \ell/2$.
Thus $\mu_B$ is a valid pseudo-density.

Finally we can define $\psE_B \in \R[y]_{\ell/d}^*$ by 
\be \psE_B[f] := \sum_{y\in B} \mu_B(y) f(y).\ee
  By the above arguments, $\psE_B[1]=1$ and $\psE_B[f^2]\geq 0$ whenever $\deg f \leq \ell/2d$.  Also, for any $i \in [m']$ and any $q\in \R[y]_{\ell/d - \deg(h_i)}$ we have
\be \psE_B[h_iq] = \sum_{y \in B} \mu_B(y) h_i(y) q(y)
= 0,\ee
since $h_i(B)=0$.  Thus $\psE_B$ is compatible with the constraints $h_1,\ldots,h_{m'}$.
\end{proof}

The previous lemma implies the following corollary, which allows us to obtain perfectly satisfying pseudo-solutions for CSPs via ``local'' reductions.
\begin{proposition}\label{prop:low-degree}
Let $A,B$ be CSPs with a reduction $R_{A\Ra B}$.  Suppose that
\begin{itemize}
\item there exists a map $f:\cP_{n_A}^A \ra \cP_{n_B}^B$, such that if $x_A \in \cP_{n_A}^A$ satisfies all the constraints of an instance of $A$, then $f(x_A) \in \cP_{n_B}^B$ satisfies all the constraints of the corresponding instance of $B$,
  \item each coordinate of $f(x_A)$ depends on at most $\kappa$ coordinates of $x_A$,  and
  \item there exists a degree-$d^A$ pseudo-solution that perfectly satisfies $A$.
    \end{itemize}
    Then there exists a degree-$d^A/\kappa$ pseudo-solution that perfectly satisfies $B$.
\end{proposition}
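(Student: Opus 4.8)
The plan is to reduce the combinatorial hypotheses to the hypotheses of Lemma~\ref{lem:poly_map_pseudo_distribution} and then propagate the ``perfectly satisfies'' property through the reduction by hand. First I would observe that each coordinate $f_\ell$ of $f$, being a Boolean function of at most $\kappa$ of the variables $x_1,\dots,x_{n_A}$, is a multilinear polynomial of degree at most $\kappa$ by the interpolation argument of Proposition~\ref{prop:csp_to_poly}; hence $f$ is a degree-$\kappa$ polynomial map $p\colon\R^{n_A}\ra\R^{n_B}$ with $p(\zo^{n_A})\subseteq\zo^{n_B}$. Viewing $\zo^{n_A}$ and $\zo^{n_B}$ as the varieties cut out by $\{x_i^2-1\}_i$ and $\{y_j^2-1\}_j$, I would feed $p$ together with the given degree-$d^A$ pseudo-expectation $\psE_A$ (which is in particular compatible with $\{x_i^2-1\}_i$) into Lemma~\ref{lem:poly_map_pseudo_distribution}, obtaining a degree-$d^A/\kappa$ pseudo-expectation $\psE_B$ on $\zo^{n_B}$.

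Next I would extract from the proof of Lemma~\ref{lem:poly_map_pseudo_distribution} the transfer identity $\psE_B[\phi]=\psE_A[\phi\circ p]$, valid whenever $\deg(\phi\circ p)\le d^A$: this is exactly how $\psE_B$ is built, by pushing the pseudo-density $\mu_A$ forward along $p$. Let $g$ be any clause of the instance $\Phi^B_{n_B}$, say of arity $\le k_B$; then $g\circ p$ has degree $\le\kappa k_B$ and depends on $\le\kappa k_B$ of the $x$-variables. For $q$ with $\deg q\le d^A/\kappa-k_B$, the polynomial $q(g-1)$ has degree $\le d^A/\kappa$, so its pullback has degree $\le d^A$ and the transfer identity yields $\psE_B[q(g-1)]=\psE_A[(q\circ p)(g\circ p-1)]$. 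Showing that this vanishes would establish that $\psE_B$ perfectly satisfies $B$ in the sense of Definition~\ref{def:hard_constraints}, and in particular $\psE_B[\Phi^B_{n_B}]=1$, so $\psE_B$ is the desired pseudo-solution.

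To show the vanishing I would invoke the completeness hypothesis on $f$. The function $g\circ p-1$ is Boolean-valued, depends on few $x$-variables, and equals $0$ on every assignment satisfying all clauses of $\Phi^A_{n_A}$, since such an $x_A$ has $f(x_A)$ satisfying all clauses of $\Phi^B_{n_B}$, hence $g(f(x_A))=1$. Using that the reduction is genuinely \emph{local}---each clause $g$ of $\Phi^B$ is forced to hold by a bounded number of clauses of $\Phi^A$ sitting on the $\le\kappa k_B$ relevant $x$-variables---I would write $g\circ p-1=\sum_j r_j(g_j^A-1)$ over those clauses $g_j^A$, with each $r_j$ of degree $O(\kappa k_B)$; in the clean case where a single clause $g_j^A$ suffices, the identity $(g\circ p-1)\,g_j^A\equiv 0$ on $\zo^{n_A}$ already forces $g\circ p-1=-(g\circ p-1)(g_j^A-1)$, so one may take $r_j=-(g\circ p-1)$. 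Substituting, $\psE_A[(q\circ p)(g\circ p-1)]$ becomes a sum of terms $\psE_A[(q\circ p)\,r_j\,(g_j^A-1)]$ whose multipliers have degree $\le d^A-O(1)$, so each vanishes because $\psE_A$ perfectly satisfies $A$.

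The hard part will be precisely this last step: converting the semantic statement ``$f$ sends satisfying assignments to satisfying assignments'' into the syntactic statement ``$g\circ p-1$ lies in the degree-$O(\kappa k_B)$ truncation of the ideal generated by the clauses of $\Phi^A$'', which is what makes $\psE_A$'s perfect satisfaction usable. This really relies on the locality of the reduction rather than just on the image of the solution set, and one must be careful that the degree loss stays a multiplicative $\kappa$ (up to an additive constant). The simplest way to keep everything honest is to use that, by construction, every clause of $\Phi^B$ depends through $p$ only on variables that co-occur in a single clause (or a constant number of clauses) of $\Phi^A$, so the ``single clause'' simplification above applies directly and the resulting $\psE_B$ has degree $d^A/\kappa$ up to an $O(1)$ additive loss.
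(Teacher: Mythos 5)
Your first two steps coincide with the paper's entire proof: the paper expresses each coordinate of $f$ as a degree-$\kappa$ polynomial via the interpolation of Proposition~\ref{prop:csp_to_poly} and then simply cites Lemma~\ref{lem:poly_map_pseudo_distribution}, asserting without further comment that the output is a \emph{perfectly satisfying} pseudo-solution for $B$. What you add in your second and third paragraphs is material the paper elides, and your instinct that this is ``the hard part'' is right. Lemma~\ref{lem:poly_map_pseudo_distribution}, applied with the two Boolean cubes as the varieties, only delivers compatibility with the constraints $y_j^2-1=0$; transferring perfect satisfaction of the \emph{clauses} requires exactly the transfer identity $\psE_B[\phi]=\psE_A[\phi\circ p]$ plus an argument that $g\circ p-1$ lies in the low-degree truncation of the ideal generated by the clauses of $A$ (together with $x_i^2-1$). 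The hypotheses as literally stated do not give this: the first bullet is a global semantic implication, and it is vacuous when the instance of $A$ is unsatisfiable --- which is precisely the regime in which the proposition is applied (Grigoriev's instance). So a purely formal derivation from the stated hypotheses cannot exist, and the locality you invoke (each clause of $B$ is forced, on its $\le\kappa k_B$ relevant $x$-variables, by a single clause of $A$ via the gadget) is the correct strengthened hypothesis; it does hold in both places the proposition is used (Proposition~\ref{prop:witness_tofsateq} and the 3-coloring reduction), and your identity $g\circ p-1=-(g\circ p-1)(g_j^A-1)$ on the cube is the right way to exploit it. Your degree accounting ($d^A/\kappa$ up to an additive $O(\kappa k_B)$) is also honest, whereas the paper states a clean $d^A/\kappa$. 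In short: same skeleton as the paper, but you have correctly located and repaired a real gap in both the statement and its two-line proof.
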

\begin{proof}
  By a similar procedure to the proof of Proposition~\ref{prop:csp_to_poly}, we can express each coordinate of the mapping $f$ as a polynomial of degree $\kappa$. 
  Now, suppose that the source problem $A$ has a perfectly satisfying degree-$d^A$ pseudo-solution, given by a pseudo-expectation operator $\psE^A[\cdot]$. Then \lemref{poly_map_pseudo_distribution} applied to the degree-$\kappa$ polynomial maps constructed above, yields a degree-$d^A/ \kappa $ perfectly satisfying pseudo-solution for $B$.
\end{proof}

\subsection{ $\txor$ with integrality gap} \label{sec:txor}
In this section, we will introduce the base hard problem underlying
our reductions,
which is the $\txor$ problem first discovered by Grigoriev~\cite{Grigoriev01} and subsequently rediscovered by Schoenebeck~\cite{Schoenebeck08}. 
It is analogous to the proof that 3-SAT is NP-hard, from which other hardness results can be derived by reducing those problems to 3-SAT.
In our framework, $\txor$ can be formulated as follows.

\begin{definition}[$\txor$] \label{def:txor}
$\txor$ is a boolean polynomial optimization problem with the following restriction:
\begin{itemize}
  \item \textbf{Instances}: for any $n$, an instance is parameterized by a formula
    $\Phi_n$ that consists of a set $\mathcal{C}$ of $m=m(n)$ $\txor$ clauses on $n$
    boolean ($\pm 1$) variables.  In other words, we have the constraints $x_i^2=1$ for
    each $i$ and the objective function is  
  \[
     \Phi_n(x) =  \frac{1}{m} \sum_{ (i,j,k) \in \mathcal{C}}
\frac{1 + a_{ijk} x_i x_j x_k }{2}.
  \]
\end{itemize}
Thanks to the $x_i^2=1$ constraints, these terms are equivalent to ones of the form $(1 - (x_ix_jx_k - a_{ijk})^2)/2$.
\end{definition}

Grigoriev's result~\cite{Grigoriev01} (reformulated by Barak~\cite{Barak14}) implies the following integrality gaps. We have a slightly different formulation from~\cite{Barak14} that is slightly stronger but guaranteed by~\cite{Grigoriev01}.

\begin{proposition}[Theorem 3.1 of~\cite{Barak14}, due to Grigoriev]
  For any $\eps>0$, for every $n$ large enough there exists a $\txor$
  instance $\Phi_n$ with $n$ variables and $m=O(n/\eps^2)$ 
  clauses, such that $\OPT(\Phi_n) \leq \frac{1}{2} + \eps$, but there
  exists a degree-$\Omega(n)$ \emph{perfectly satisfying}
  pseudo-solution $\psE$.  In other words there is a
  degree-$\Omega(n)$ value-$(1,\frac 1 2  +\eps)$ integrality gap for $\txor$.  
  \label{prop:grigoriev}
\end{proposition}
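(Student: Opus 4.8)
The plan is to take $\Phi_n$ to be a suitable \emph{random} $\txor$ instance and verify the two bullets of Definition~\ref{def:SoS_lb_wit} separately: a small true optimum via a first-moment/Chernoff argument, and a degree-$\Omega(n)$ perfectly satisfying pseudo-solution via Grigoriev's expansion-based construction of pseudo-expectations. Concretely, sample an instance with $n$ variables and $m=\lceil c\,n/\eps^2\rceil$ clauses by choosing, for each clause, a uniformly random triple of distinct variables together with a uniformly random target $a_{ijk}\in\{\pm1\}$.

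For the soundness bullet ($\OPT(\Phi_n)\le\frac12+\eps$): for any fixed assignment $x\in\{\pm1\}^n$ the events $\{x_ix_jx_k=a_{ijk}\}$ over the $m$ clauses are independent and each has probability exactly $1/2$ (since $a_{ijk}$ is uniform and independent of the triple), so a Chernoff/Hoeffding bound gives $\Pr[\Phi_n(x)\ge\frac12+\eps]\le e^{-2\eps^2 m}$. A union bound over all $2^n$ assignments makes the failure probability strictly less than $1/2$ once $c$ is a large enough absolute constant, so $m=O(n/\eps^2)$ clauses suffice for $\OPT(\Phi_n)\le\frac12+\eps$ to hold with probability $>1/2$.

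For the pseudo-solution bullet I would first show that the same random instance is, with probability $>1/2$, a good boundary expander: every set $T$ of at most $\delta n$ clauses touches more than $\tfrac32|T|$ distinct variables (equivalently, contains a variable occurring in exactly one clause of $T$), where $\delta=\delta(\eps)=\Omega(\eps^4)>0$. This is the standard first-moment estimate: the probability that some $t$-subset of clauses spans at most $\lceil\tfrac32 t\rceil$ variables is at most $\bigl[C\cdot(m/n)\cdot(t/n)^{1/2}\bigr]^t$, which with $m/n=\Theta(1/\eps^2)$ is $\le 2^{-t}$ for $t\le\delta n$, so summing over $t$ gives total mass $<1$. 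Expansion is exactly what makes a subsystem of at most $\delta n$ of the underlying $\F_2$-linear equations have no nontrivial linear dependency. Fixing an instance that is simultaneously sound and expanding, define (à la Grigoriev/Schoenebeck) a linear functional $\psE$ on multilinear polynomials of degree $\le D:=\Theta(\delta n)$ by setting, on monomials $\chi_S=\prod_{i\in S}x_i$ with $|S|\le D$,
$\psE[\chi_S]=(-1)^{\sum_{\ell\in T}b_\ell}$ whenever $S=\sum_{\ell\in T}C_\ell$ in $\F_2^n$ for some clause-set $T$ with $|T|\le D$ (where $C_\ell$ is the support and $b_\ell$ the target of clause $\ell$), and $\psE[\chi_S]=0$ otherwise, extended linearly to $\R[x]_D$. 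Expansion (no dependency among $\le 2D\le\delta n$ clauses) makes this well-defined, with $\psE[1]=1$; the identity $\psE[\chi_{C_\ell}\chi_S]=(-1)^{b_\ell}\psE[\chi_S]$ is immediate from the definition, which is precisely the ``perfectly satisfies'' property of Definition~\ref{def:hard_constraints} and forces $\psE[\Phi_n]=\frac1m\sum_\ell\psE[\tfrac{1+a_{ijk}x_ix_jx_k}{2}]=1$; and for positivity one partitions the monomials of degree $\le D/2$ into classes under the relation ``$S\equiv S'$ iff $S\triangle S'$ lies in the clause span'', verifies $\psE[\chi_{S\triangle S'}]=\alpha(S)\alpha(S')$ within each class (for a sign function $\alpha$ obtained by fixing a class representative) and $=0$ across classes, so that $\psE[p^2]=\sum_{\text{classes}}\bigl(\sum_S\hat p_S\,\alpha(S)\bigr)^2\ge 0$. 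Combining the two bullets yields a degree-$D=\Omega(n)$ (with the constant depending on $\eps$) value-$(1,\tfrac12+\eps)$ integrality gap for $\txor$.

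The main obstacle is the bookkeeping of degree budgets and expansion constants in the pseudo-expectation step: one must ensure that with $m=\Theta(n/\eps^2)$ clauses (forced by the soundness requirement) the hypergraph still expands up to $\Omega(n)$ clauses --- which it does, at the price of the threshold $\delta$ degrading to $\Theta(\eps^4)$ --- and that the cutoffs ``$|S|\le D/2$'' and ``derivable from $\le D$ clauses'' are chosen mutually compatibly so that well-definedness, the perfect-satisfaction identity, and positivity all hold at once. Since this is exactly Grigoriev's construction, the alternative is simply to invoke \cite{Grigoriev01} (reformulated as \cite[Theorem~3.1]{Barak14}), observing only that Grigoriev's pseudo-expectation satisfies the XOR constraints \emph{exactly} and is therefore perfectly satisfying in the sense of Definition~\ref{def:hard_constraints}, which is the mild strengthening asserted here over the statement in \cite{Barak14}.
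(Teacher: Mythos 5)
Your proposal is correct in outline, but it is worth noting that the paper gives no proof of this proposition at all: it is stated purely as a citation of Grigoriev's theorem in Barak's reformulation, with the one-sentence remark that the formulation here is ``slightly stronger but guaranteed by~\cite{Grigoriev01}.'' Your closing paragraph --- invoke \cite{Grigoriev01}/\cite{Barak14} and observe that Grigoriev's pseudo-expectation satisfies the XOR equations \emph{exactly}, hence is perfectly satisfying in the sense of Definition~\ref{def:hard_constraints} rather than merely achieving $\psE[\Phi_n]=1$ --- is therefore precisely the paper's (implicit) argument, and it is the one observation actually needed to justify the strengthened statement. The body of your proposal goes further and reconstructs the proof from scratch: random instance, Chernoff plus union bound for $\OPT\le\frac12+\eps$, the first-moment boundary-expansion estimate $\bigl[C(m/n)(t/n)^{1/2}\bigr]^t$ (which checks out, giving $\delta=\Theta(\eps^4)$), and the Grigoriev--Schoenebeck pseudo-expectation supported on the $\F_2$-span of the clause vectors with positivity via the coset decomposition of $\psE[p^2]$ into a sum of squares. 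This is the standard and correct template; the bookkeeping you flag (consistency of the ``derivable from $\le D$ clauses'' cutoff under multiplication by a single clause, which is what the perfect-satisfaction identity requires, and the factor-of-two loss $2D\le\delta n$ for well-definedness) is exactly where the real work lies in Grigoriev's and Schoenebeck's papers, and you have correctly identified it rather than glossed over it. In short: your reconstruction buys a self-contained proof with explicit constants ($D=\Omega(\eps^4 n)$), while the paper's citation buys brevity at the cost of leaving the ``perfectly satisfying'' strengthening to the reader's inspection of \cite{Grigoriev01}.
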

  Recall that ``perfectly satisfying'' means that for every clause
  $x_i x_j x_k = a_{ijk}$, it holds that
  $\psE[(x_i x_j x_k - a_{ijk})p(x)] = 0$ for all polynomials $p(x)$
  with degree at most $d- 3$.

\section{Lower bounds on $h_{\Sep}$ and its applications}\label{sec:lb_hsep}
In this section, we will explain how the lower bounds on $h_{\Sep}$
are derived through reductions in the framework introduced in
Section~\ref{sec:lb_framework}.  We will describe the
high-level reduction path here and then explain each reduction in detail in following subsections. 

\[
  \txor  \underset{R_1}{\Longrightarrow} \tofsateq \underset{R_2}{\Longrightarrow} \qmat  \underset{R_3}{\Longrightarrow} h_{\Sep}
\]

There are three reductions $R_1, R_2, R_3$ respectively in the reduction path from $\txor$ to $h_{\Sep}$. The starting point is $\txor$ as we introduced in Section~\ref{sec:txor}. 
We need to define two intermediate problems.
\begin{itemize}
 \item $\tofsateq$ is a boolean polynomial optimization in which each instance is parameterized by a formula $\Phi_n$ that consists of $\tof$ clauses and \textsc{EQ} clauses. 
    \begin{itemize}
      \item  Each $\tof$ clause involves 4 boolean variables $x_i,x_j,x_k,x_l \in \zo$. The clause is satisfied if and only if exactly 2 out of 4 variables $x_i,x_j,x_k,x_l$ are true ($+1$). 
      \item  Each \textsc{EQ} clause involves 2 boolean variable $x_a, x_b \in \zo$. The clause is satisfied if and only if $x_a=x_b$. 
    \end{itemize}
These acceptance conditions correspond to the 0/1-valued predicates
\be \frac{1 + x_ix_jx_kx_l}{2} \qand \frac{1 + x_ax_b}{2}.\ee
If we define $\Phi_n(x)$ to be the fraction of clauses in $\Phi_n$ satisfied by $x$ we can then see that $\Phi_n(x)$ is a degree-4 polynomial in $x_1,\ldots,x_n$.
 \item $\qmat$ is the ``honest prover'' acceptance probability of a QMA(2) protocol
(see \secref{tofsateq_qmat} for details) for $\tofsateq$
This QMA(2) protocol 
\end{itemize}

We note that most of this chain of reductions is implicit in the earlier works of
Aaronson, Beigi, Drucker, Fefferman~and Shor (ABDFS)~\cite{AaronsonBDFS08} and Harrow and
Montanaro~\cite{HM13}, which show reductions from 3SAT to
$h_{\Sep}$. 
The only exception is that we replace the application of PCP theorem, which used to be the first step in reductions and turns out to be a high-degree reduction, by some direct and low-degree construction inspired by part of the proof of the PCP theorem. 

Moreover, our argument requires explicit analysis of the
intermediate steps of the chain to make sure that individual reductions are pseudo-solution preserving (see
Definition~\ref{def:reduction_pseudo_sol}) and thus low-degree.
These explicit analysis of each step will help us further to enable application of the LRS result, which we will elaborate on in Section~\ref{sec:lrs}.


Precise definitions of each problem will appear in each corresponding
subsection. All three reductions will be elaborated on in
Section~\ref{sec:txor_tofsateq} and \ref{sec:tofsateq_qmat}, as well as the SoS hardness result of the $h_{\Sep}$ and 2-to-4 norm problem . We will briefly describe extensions to
other ETH-based hardness results in
Section~\ref{sec:other-bounds}. 

\subsection{From $\txor$ to $\tofsateq$} \label{sec:txor_tofsateq}

In this section we show an explicit reduction from $\txor$ to $\tofsateq$
that preserves the pseudo-solutions and has reasonable approximation parameters. 
The following proposition shows the reduction has reasonable approximation parameters as well as 
some other useful features for later reduction steps.

\begin{proposition} \label{prop:reduction_txor_tofsat}
  For all $m$, $n$, there exists a reduction that  maps a given $\txor$
  instance $\Psi$ with $m$ clauses and $n$
  variables, onto a $\tofsateq$ instance $\Phi$ satisfying the following properties:
  \begin{enumerate}
    \item Every variable in $\Phi$ appears in at most $O(1)$ clauses. \label{item:bounded_degree_csp}
    \item $\Phi$ has $O(n+m)$ variables and $O(m)$ clauses. \label{item:bounded_vars}
  \item If $\Psi$ is perfectly
    satisfiable, then so is $\Phi$. \label{item:completeness}
  \item If at most $1 - \delta$
    fraction of the clauses of $\Psi$ are satisfiable, then at most $1 -
    \Omega(\delta)$ fraction of the clauses of $\Phi$ are
    satisfiable. \label{item:soundness}
  \end{enumerate}
  \label{prop:reduction}
\end{proposition}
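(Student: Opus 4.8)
The plan is to compose two reductions: a \emph{degree-reduction} step that makes every variable occur in only $O(1)$ clauses, followed by a \emph{gadget} step that replaces each $\txor$ clause by a constant-size block of $\tof$ and $\EQ$ clauses. A secondary goal, needed by the later reduction steps, is to keep every map on assignments \emph{local}---each output coordinate depending on only $O(1)$ input coordinates---so that Proposition~\ref{prop:low-degree} lifts a perfectly satisfying pseudo-solution with only a constant-factor loss in degree.

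\emph{Step 1 (degree reduction).} Given a $\txor$ instance $\Psi$ with $m$ clauses on $n$ variables, replace each variable $x_i$, which occurs $d_i$ times (so $\sum_i d_i = 3m$), by $d_i$ fresh copies joined by a constant-degree expander graph of $\EQ$ clauses, routing the $j$-th occurrence of $x_i$ to the $j$-th copy. This standard transformation yields an instance $\widetilde{\Psi}$ with $O(m)$ variables, $O(m)$ $\txor$ clauses, $O(m)$ $\EQ$ clauses, and every variable in $O(1)$ clauses. The consistency map $x_i \mapsto (x_i,\dots,x_i)$ sends satisfying assignments to satisfying assignments and has each output coordinate depending on one input coordinate, so perfectly satisfying pseudo-solutions lift with no degree loss. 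Soundness is the usual majority-decoding argument: if $S$ is the set of minority copies, the expander forces $\Omega(|S|)$ violated $\EQ$ clauses while $S$ can rescue at most $|S|$ original clauses, so a $(1-\Omega(\delta))$-satisfiable $\widetilde{\Psi}$ forces a $(1-\delta)$-satisfiable $\Psi$.

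\emph{Step 2 (the $\txor$ gadget).} There is a genuine obstruction here: the $\tof$ and $\EQ$ predicates are both invariant under globally complementing all variables, but the relation ``$x \oplus y \oplus z = 0$'' is not, so no gadget on $\{x,y,z\}$ with only private auxiliaries can implement it. We break the symmetry with a single shared \emph{reference} variable $r$, itself spread into a bounded-degree expander cloud exactly as in Step~1. For a clause with right-hand side $b$ the gadget will implement ``$x \oplus y \oplus z \oplus r = b$'': this is the intended constraint when $r$ is in its canonical state and the globally complemented constraint when $r$ is flipped, which is harmless because a $\txor$ clause has odd arity, so globally complementing an instance leaves its optimum unchanged. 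To build the gadget one uses that $\tof(a,a',b,b')$ together with $\EQ(a,a')$ and $\EQ(b,b')$ forces $a \neq b$, supplying negation; then (writing $0,1$ for false, true) ``$a \oplus b \oplus c \oplus d = 0$'' is equivalent to the disjunction ``$a+b=c+d$ or $a+d=b+c$'' of two $\tof$ constraints---$a+b=c+d$ being $\tof(a,b,\bar c,\bar d)$---and this constant-arity disjunction is realizable by an $O(1)$-size block of $\tof$ and $\EQ$ clauses with an auxiliary selector, existence being guaranteed by standard constraint-language expressibility arguments (the predicate $\tof$ has no polymorphisms beyond the essentially unary ones, so every complement-invariant relation, in particular $4$-variable XOR, is gadget-definable from it). For a $b=1$ clause one first attaches a private negation of the reference copy and uses it in place of $r$. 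The $\EQ$ clauses of $\widetilde{\Psi}$ are carried through unchanged.

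It remains to verify items~\ref{item:bounded_degree_csp}--\ref{item:soundness}. Bounded degree holds because every visible variable, every reference copy, and every private auxiliary sits in $O(1)$ clauses, and the size bounds follow from the constant blow-up per clause. For completeness, a satisfying assignment of $\Psi$ extends to one of $\Phi$ by putting every reference copy at its canonical value and every auxiliary at its local witness, and this same $O(1)$-local map, plugged into Proposition~\ref{prop:low-degree}, turns a perfectly satisfying degree-$d$ pseudo-solution for $\Psi$ into one of degree $\Omega(d)$ for $\Phi$. For soundness, after using the Step~1 analysis to bound the cost of an internally inconsistent reference cloud, every satisfying assignment of $\Phi$ restricts on the visible variables to an assignment that satisfies either all clauses of $\widetilde{\Psi}$ or all clauses of its global complement; since the latter has the same optimum, the constant-factor gap survives. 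Composing Steps~1 and~2 by transitivity (Proposition~\ref{prop:transitivity}) then gives the claimed reduction. The one point needing real care, rather than bookkeeping, is exhibiting the explicit constant-size $4$-variable XOR gadget and checking by finite case analysis that it is satisfiable exactly when the XOR holds and admits a local choice of witnessing auxiliaries, while confirming that the shared reference cloud introduces neither unbounded degree nor spurious satisfying configurations; most of this is implicit in the reductions of~\cite{AaronsonBDFS08} and~\cite{HM13}, and the contribution here is to recast it so that every step is local and bounded-degree.
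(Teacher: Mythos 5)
Your high-level strategy is the right one: a parity reference bit to break the complement symmetry of $\tof$, expanderization to bound degrees, and Proposition~\ref{prop:low-degree} to lift pseudo-solutions. The one genuine gap is that the heart of the matter---the explicit constant-size gadget replacing one $\txor$ clause by $O(1)$ $\tof$ and $\EQ$ clauses---is never actually built. You appeal to pp-definability via polymorphism-clone considerations ("$\tof$ has no polymorphisms beyond the essentially unary ones"), but this claim is asserted rather than checked, and even granting it, the Galois correspondence gives only an abstract existence statement, not the explicit local form you need: to invoke Proposition~\ref{prop:low-degree} you must exhibit a map sending each satisfying assignment of the $\txor$ clause to a satisfying assignment of the gadget with each output coordinate depending on $O(1)$ inputs. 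Your sketch via the disjunction ``$a{+}b=c{+}d$ or $a{+}d=b{+}c$'' also needs a further selector gadget to implement disjunction inside a conjunctive CSP, which again is waved at rather than written down. You flag this yourself in the final sentence; the point is that this is not merely the bookkeeping part of the argument but the whole of it.

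The paper's gadget is much simpler and sidesteps all of this. For a clause $x_ix_jx_k = 1$ it introduces three fresh dummy variables $y^c_1,y^c_2,y^c_3$ and the three $\tof$ clauses $\tof(x_i,y^c_2,y^c_3,z)$, $\tof(x_j,y^c_1,y^c_3,z)$, $\tof(x_k,y^c_1,y^c_2,z)$, with $z$ the shared reference bit (replaced by $\neg z$ when $a_{ijk}=-1$). Since a satisfied $\tof$ clause has $\pm1$-product equal to $+1$, multiplying the three products gives $x_ix_jx_kz$, so if $z=1$ and $x_ix_jx_k=-1$ at least one clause fails; conversely when $x_ix_jx_k=1$ there is a satisfying assignment to $(y^c_1,y^c_2,y^c_3)$ that is an explicit local function of $(x_i,x_j,x_k)$. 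This is exactly the constant-locality witness map you need, with $\kappa=3$. The paper then expanderizes afterward. Your choice to expanderize first is also workable but is slightly self-defeating: after Step~1 promises bounded degree, introducing a single shared reference bit $r$ creates a variable of degree $\Theta(m)$, so you need a second expander pass for $r$ anyway; doing the gadget reduction first and expanderizing once at the end, as the paper does, is cleaner.
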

\begin{proof}
  We perform the reduction in two steps. First, we reduce the $\txor$ 
  instance to a $\tofsateq$ instance in a manner that preserves
  properties~(\ref{item:bounded_vars})-(\ref{item:soundness}). Next,
  we achieve property~(\ref{item:bounded_degree_csp}) without losing the
  others through an ``expanderizing'' step, similar to the degree
  reduction in Dinur's proof of the PCP theorem. We now describe the two steps
  in turn.

 \vspace{1mm} \noindent \textbf{Step 1}: we show how to transform each $\txor$ clause into
  three $\tof$ clauses, each acting on one of the original $\txor$ 
  variables and two new dummy variables. Altogether we introduce three
  new dummy variables per 3XOR clause. Additionally, in order to break
  the symmetry of $\tofsat$ under parity reversal, we
  introduce a \emph{parity reference bit}, which we denote
  $z$.
Suppose for now that $z = 1$.  
Now, suppose we have a 3XOR clause $c = [ x_i x_j x_k = a_{ijk}]$. First we treat
  the case where $a_{ijk}= 1$. We introduce
  three new variables $y^c_1,y^c_2,y^c_3$, and generate the following three
  $\tof$ clauses:  $\tof(x_i,y^c_2,y^c_3,z)$, $\tof(x_j, y^c_1,y^c_3,z)$, and $\tof(x_k,y^c_1,y^c_2,z)$.
  If we fix an assignment to $x_i, x_j, x_k$, it is easy to see that
  if $x_i x_j x_k = a_{ijk}$ then there exists an assignment to $y^c_1,y^c_2,y^c_3$
  that satisfies all three clauses; otherwise, at most two of the
  three clauses are satisfied for all assignments to $y^c_1, y^c_2, y^c_3$.  In particular if $(x_i,x_j,x_k)=(1,1,1)$ then we set $(y^c_1,y^c_2,y^c_3)=(-1,-1,-1)$ and if $(x_i,x_j,x_k)$ has Hamming weight 1 then we set 
$(y^c_1,y^c_2,y^c_3)=(x_i,x_j,x_k)$.  On the other hand, since each $y^c_i$ appears in two
clauses, multiplying all the clauses yields $(x_iy^c_2y^c_3z) (x_jy^c_1y^c_3z) (x_ky^c_1y^c_2z)=x_ix_jx_kz$.  If this equals $-1$ then not all of the $\tof$ clauses can be satisfied.
If  $a_{ijk} = -1$, then we simply replace $z$ with $\neg z$ in the
  $\tof$ clauses, and the same story holds. 

Applying this
  transformation to all the clauses of the $\txor$ instance yields a
  $\tofsat$ instance with $n + 3m + 1$ variables and $3m$ clauses. If the
  original satisfying fraction was $1$, then the resulting instance
  also has satisfying fraction $1$; otherwise, if the original
  satisfying fraction was $1 - \delta$, the new instance has
  satisfying fraction at most $1 - \delta/3$.

The above analysis holds only when $z=1$.  If we set $z=-1$ then all satisfied $\txor$ clauses become unsatisfied and vice-versa. However, this symmetry already existed in the original $\txor$ formula.  Indeed replacing $x_1,\ldots,x_n$ with $-x_1,\ldots,-x_n$ would have the same effect.  Thus we can assume WLOG that $z=1$.

\vspace{1mm} \noindent \textbf{Step 2}: The resulting $\tofsat$ instance may have some variables that occur in a large number of clauses. Indeed, the parity reference bit occurs
  in all of the clauses. To fix this, we shall apply Lemma~\ref{lem:expanderize} that fixes this issue while keeping all other properties.
\end{proof}

\begin{lemma}[Degree reduction] \label{lem:deg_red}
  There exists a process that maps any instance $G$ of $\tofsat$ to an instance $G'$
  of $\tofsateq$ 
  where 
  \begin{enumerate}
    \item Every variable appears in at most 4 constraints.
    \item If $G$ has $m$ clauses, then $G'$ has $\leq O(m)$ clauses.
    \item If $\OPT(G) = 1$, then $\OPT(G') = 1$.
    \item If $\OPT(G) = 1- \epsilon$, then $\OPT(G') \leq 1 - \eta
      \epsilon$ for constant $\eta$.
  \end{enumerate}
  \label{lem:expanderize}
\end{lemma}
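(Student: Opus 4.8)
The plan is to prove this by the standard ``expanderizing'' degree reduction used in Dinur's proof of the PCP theorem, adapted so that the consistency gadget produces \EQ{} clauses. Given the \tofsat{} instance $G$ with $m$ clauses, I would replace each variable $v$ by a \emph{cloud} $W_v$ of $d_v$ fresh variables, where $d_v$ is the number of clauses of $G$ in which $v$ occurs; identifying the $i$-th occurrence of $v$ with the $i$-th vertex of $W_v$, each original \tof{} clause on $(v_1,v_2,v_3,v_4)$ becomes a \tof{} clause on the four corresponding cloud vertices, so there are still $m$ \tof{} clauses and each cloud vertex lies in exactly one of them. To force the cloud $W_v$ to be (approximately) constant I would fix a $3$-regular graph $H_{d_v}$ on $d_v$ vertices with second eigenvalue $\lambda\le 3-\alpha_0$ for an absolute constant $\alpha_0>0$ (using the standard fact that such constant-degree expanders exist on every vertex count; for the finitely many values $d_v\le 4$ a complete graph works), and add one \EQ{} clause for each edge of $H_{d_v}$. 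Since $\sum_v d_v = 4m$ this adds $\le 6m$ \EQ{} clauses, so $G'$ has $O(m)$ clauses, and each cloud variable lies in one \tof{} clause and at most three \EQ{} clauses, giving variable degree $\le 4$; this settles items 1 and 2.

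For completeness (item 3): if $\sigma$ satisfies every clause of $G$, then setting every vertex of $W_v$ equal to $\sigma(v)$ satisfies all \EQ{} clauses (clouds are constant) and reproduces the original value of each \tof{} clause, so $\OPT(G')=1$. I would also note that this ``copying'' map is coordinate-local — each coordinate of the $G'$-assignment depends on a single coordinate of the $G$-assignment — so Proposition~\ref{prop:low-degree} applies with $\kappa=1$ and carries a perfectly satisfying degree-$d$ pseudo-solution for $G$ to one for $G'$; although not required by the statement as phrased, this is the property the later sections rely on.

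The heart of the argument is soundness (item 4). Suppose $\OPT(G)\le 1-\epsilon$, and let $\tau$ be any assignment of $G'$. Decode it to an assignment $\sigma$ of $G$ by majority vote inside each cloud (breaking ties arbitrarily), and let $b_v\le d_v/2$ be the number of vertices of $W_v$ disagreeing with the cloud majority, $B:=\sum_v b_v$. On one hand $\sigma$ violates at least $\epsilon m$ clauses of $G$, and a \tof{} clause violated by $\sigma$ can fail to be violated by $\tau$ only if one of its four cloud vertices disagrees with its cloud's majority; since each such disagreeing vertex lies in exactly one \tof{} clause, $\tau$ violates at least $\epsilon m - B$ of the \tof{} clauses of $G'$. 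On the other hand, since values are binary, the vertices of $W_v$ disagreeing with the majority form a monochromatic set $S_v$ of size $b_v\le d_v/2$, so by the easy direction of the discrete Cheeger inequality the number of edges of $H_{d_v}$ leaving $S_v$ — each a violated \EQ{} clause — is at least $\tfrac{3-\lambda}{2}b_v\ge \tfrac{\alpha_0}{2}b_v$; summing over $v$, $\tau$ violates at least $\tfrac{\alpha_0}{2}B$ \EQ{} clauses. Adding these two disjoint contributions, $\tau$ violates at least $\max(0,\epsilon m - B)+\tfrac{\alpha_0}{2}B\ge \min(1,\tfrac{\alpha_0}{2})\cdot\tfrac{\epsilon m}{2}$ clauses of $G'$ (split on whether $B\le \epsilon m/2$). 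Dividing by the total $O(m)$ clauses of $G'$ yields a violated fraction $\Omega(\epsilon)$, i.e. $\OPT(G')\le 1-\eta\epsilon$ for an absolute constant $\eta>0$.

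The one delicate point is the need for a single absolute degree $D$ (here $D=3$) and spectral gap $\alpha_0>0$ such that $D$-regular graphs with $\lambda\le D-\alpha_0$ exist on \emph{every} number of vertices, since a size-dependent degree would break the uniform degree bound in item 1; I would invoke the standard probabilistic (or explicit) construction of such expander families, exactly as Dinur does, and everything else is bookkeeping of constants.
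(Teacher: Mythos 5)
Your proposal is correct and follows essentially the same approach as the paper, which simply cites the Papadimitriou--Yannakakis expanderization~\cite{PY91} (replacing each over-used variable by a cloud of copies with equality checks along a degree-$3$ expander) without writing out the details. The only minor wrinkle you flag---existence of $3$-regular expanders on every vertex count (e.g.~odd $d_v$)---is a standard technicality handled in the usual references and does not affect the argument.
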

\begin{proof}
  We use the ``expanderization'' process introduced by Papadimitriou
  and Yannakakis~\cite{PY91}. Specifically, we replace every variable
  that occurs in too many clauses by copies, with equality checks
  between them arranged according to a degree-3 expander
  graph. 
\end{proof}

In the following we demonstrate the above reduction also preserves the pseudo-solutions. 

\begin{proposition} \label{prop:witness_tofsateq}
For some constant $0<\delta<1$, there exists a degree-$\Omega(n)$ value-$(1,1-\delta)$ integrality gap for the $\tofsateq$ problem. 
  Moreover, if for any $\tofsat$ clause $\tof(x_i, x_j, x_k, x_\ell)$ in any instance $\Phi_n$, we have for  any polynomial $p(x)$ of
  degree at most $d - 4$, 
  \[ \psE[ p(x) (x_i+ x_j +x_k+ x_\ell)] = 0, \]
  where $\psE$ is from the pseudo-solution of the integrality gap. 
  \label{prop:tofpseudo}
\end{proposition}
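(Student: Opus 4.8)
The plan is to combine Grigoriev's integrality gap for $\txor$ (Proposition~\ref{prop:grigoriev}) with the local reduction $R_1$ constructed in Proposition~\ref{prop:reduction_txor_tofsat}, using the pseudo-solution-transfer machinery of Proposition~\ref{prop:low-degree}. First I would start from the degree-$\Omega(n)$ value-$(1,\tfrac12+\eps)$ integrality gap for $\txor$, so that for each $n$ we have an instance $\Psi_n$ with $\OPT(\Psi_n)\le \tfrac12+\eps$ together with a perfectly satisfying pseudo-solution $\psE^{\txor}$ of degree $\Omega(n)$. Applying the reduction $R_1$ to $\Psi_n$ produces a $\tofsateq$ instance $\Phi_{n'}$ on $n' = O(n+m)=O(n)$ variables. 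By Proposition~\ref{prop:reduction_txor_tofsat}(\ref{item:bounded_degree_csp})-(\ref{item:soundness}), every variable of $\Phi_{n'}$ appears in $O(1)$ clauses, $\Phi_{n'}$ has $O(n)$ variables and clauses, completeness is preserved, and soundness degrades only by a constant factor: since $\OPT(\Psi_n)\le 1-(\tfrac12-\eps)$, we get $\OPT(\Phi_{n'})\le 1-\Omega(\tfrac12-\eps) = 1-\delta$ for some constant $\delta>0$ (choosing $\eps$ a small fixed constant).

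For the pseudo-solution side, I would invoke Proposition~\ref{prop:low-degree} with $A = \txor$ and $B=\tofsateq$. The key inputs it requires are: (i) a map $f$ sending satisfying assignments of $\Psi_n$ to satisfying assignments of $\Phi_{n'}$ — this is exactly the assignment-extension rule spelled out inside the proof of Proposition~\ref{prop:reduction_txor_tofsat} (set the dummy variables $y^c_1,y^c_2,y^c_3$ according to the Hamming weight of $(x_i,x_j,x_k)$, set the parity bit $z=1$, and copy variables appropriately through the expanderization of Lemma~\ref{lem:expanderize}); (ii) each coordinate of $f$ depends on at most $\kappa=O(1)$ coordinates of the input — true because each dummy variable is a function of the three $\txor$ variables in its clause, and the expanderizer's equality copies are just identity maps; (iii) a perfectly satisfying pseudo-solution for $\txor$ of degree $d^A=\Omega(n)$, which is Grigoriev's. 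The conclusion of Proposition~\ref{prop:low-degree} is then a perfectly satisfying pseudo-solution for $\Phi_{n'}$ of degree $d^A/\kappa = \Omega(n)$. Since $n' = O(n)$ this is still $\Omega(n')$ as required, and ``perfectly satisfying'' for the $\tofsateq$ CSP means precisely $\psE[p(x)(g_i(x)-1)]=0$ for each clause predicate $g_i$ and $\deg p \le d-4$, which gives both the value-$1$ claim ($\psE[\Phi_{n'}]=1$) and the integrality gap $(1,1-\delta)$.

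For the ``moreover'' clause, I would observe that the perfectly satisfying pseudo-solution built above inherits, for each $\tof$ clause on $(x_i,x_j,x_k,x_\ell)$, the identity $\psE[p(x)\cdot(x_i+x_j+x_k+x_\ell)] = 0$ for $\deg p\le d-4$. This follows because the clause predicate is $(1+x_ix_jx_kx_\ell)/2$ and perfect satisfaction forces $\psE[p(x)(x_ix_jx_kx_\ell-1)]=0$; but one can check directly that on the support of any genuine satisfying assignment $x_i+x_j+x_k+x_\ell = 0$ (exactly two of the four are $+1$), so this linear relation is among the polynomial consequences of the clause constraint and is preserved under the pseudo-density pushforward in the proof of Lemma~\ref{lem:poly_map_pseudo_distribution}. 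Concretely, since $\mu_B$ is supported on $f(\{\pm1\}^n)$ where the relation $x_i+x_j+x_k+x_\ell=0$ holds pointwise, we have $\sum_{y} \mu_B(y)\, p(y)(y_i+y_j+y_k+y_\ell) = 0$, so $\psE[p(x)(x_i+x_j+x_k+x_\ell)]=0$. The main obstacle is not any single calculation but the careful bookkeeping to confirm that the locality parameter $\kappa$ genuinely stays constant through \emph{both} Step 1 and the expanderization of Step 2 of Proposition~\ref{prop:reduction_txor_tofsat} — in particular that the expander-based equality gadget, which is what makes the construction bounded-degree, does not blow up the number of input coordinates any single output coordinate depends on; once that is verified, the degree loss is only a constant factor and the rest is routine.
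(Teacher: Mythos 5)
Your plan for the first clause of the proposition — Grigoriev's $\txor$ integrality gap, then the local reduction of Prop.~\ref{prop:reduction_txor_tofsat}, then the pseudo-solution pushforward of Prop.~\ref{prop:low-degree} with $\kappa=O(1)$ — is exactly the paper's route, and the parameter accounting is correct.

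The genuine gap is in your justification of the ``moreover'' clause. You claim the pushed-forward pseudo-density $\mu_B$ is ``supported on $f(\{\pm1\}^n)$ where the relation $x_i+x_j+x_k+x_\ell=0$ holds pointwise.'' That is false: $\mu_B$ is supported on \emph{all} of $f(\{\pm1\}^n)$, including the images of assignments $x$ that violate the underlying $\txor$ clause $c=[x_ix_jx_k=a_c]$, and for such $x$ no extension of the gadget map to $(y_1^c,y_2^c,y_3^c)$ can make all three sums $x_i+y_2^c+y_3^c+z$, $x_j+y_1^c+y_3^c+z$, $x_k+y_1^c+y_2^c+z$ vanish — adding them forces $y_1^c+y_2^c+y_3^c=-(x_i+x_j+x_k+3)/2$, which is not an odd integer when $x_ix_jx_k=-a_c$. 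Nor does it help to invoke perfect satisfaction of the derived $\tofsateq$ CSP, since perfect satisfaction of a $\tof$ clause only gives $\psE[p(x)(x_ix_jx_kx_\ell-1)]=0$, which is strictly weaker than what you need (the product $x_ix_jx_kx_\ell$ is also $1$ when all four variables are $+1$ or all four are $-1$). The argument that actually works goes through perfect satisfaction of the \emph{original} 3XOR constraints by the Grigoriev $\psE_A$: the polynomial $g(x):=x_i+y_2^c(x)+y_3^c(x)+1$ vanishes on $\{x:x_ix_jx_k=a_c\}$, hence $g\equiv -\tfrac{a_c}{2}(x_ix_jx_k-a_c)\,h(x)$ modulo the Boolean ideal for some low-degree $h$, and then $\psE_A[p(f(x))\,g(x)]=0$ by the 3XOR perfect-satisfaction constraints. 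Your sentence about ``polynomial consequences of the clause constraint'' is pointing in roughly this direction, but the ``concretely'' sentence that follows it substitutes an incorrect pointwise-support argument for the necessary ideal-membership one.
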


To keep the notation simple in the above proposition we ignore the fact that a $\tof$ clause could in general be of the form $\tof((-1)^a x_i, (-1)^b x_j, (-1)^c x_k, (-1)^d x_\ell)$.

\begin{proof}
We start with the degree-$\Omega(n)$ value-$(1,\frac{1}{2} + \eps)$ integrality gap from Proposition~\ref{prop:grigoriev} for $\txor$.
Using the reduction in Proposition~\ref{prop:reduction_txor_tofsat}, we obtain corresponding instances in the $\tofsateq$ problem that have 
true optimum value at most $1-\delta$ for some constant $0<\delta<1$.  

It then suffices to establish pseudo-solutions for these instances in the $\tofsateq$ problem. To do this, we recall the map between satisfying assignments defined in Proposition~\ref{prop:reduction_txor_tofsat}:
\begin{itemize}
  \item Each variable $x \in \zo$ from the original $\txor$ instance is mapped to a variable in the $\tofsateq$ instance with the same assigned value.
  \item The $\tofsateq$ instance has a parity reference bit $z$ that is set to be 1. 
  \item For each $\txor$ clause $c$, we introduce 3 dummy variables $y_1^c, y_2^c, y_3^c$ in the $\tofsateq$ instance. For every satisfying assignment to the clause $c$, there exists an satisfying assignment to the dummy variables that depends only on the assignments of the variables in the clause $c$.
  \item The copies of variables in the expanderization step (Lemma~\ref{lem:deg_red}). 
\end{itemize}
Thus, the hypotheses of Proposition~\ref{prop:low-degree} are satisfied with $\kappa = 3$. Hence, by applying that proposition to the perfectly satisfying pseudo-solution given by Proposition~\ref{prop:grigoriev}, we obtain a degree-$\Omega(n)$ perfectly satisfying pseudo-solution as desired. All in all, this gives us a degree-$\Omega(n)$ value-$(1,1-\delta)$ integrality gap for $\tofsateq$.

\end{proof}

\subsection{From $\tofsateq$ to $\qmat$} \label{sec:tofsateq_qmat}
We start with a description of a QMA(2) protocol for the problem $\tofsateq$, and the formal definition of
$\qmat$. We will show that the fact that our QMA(2) protocol solves
$\tofsateq$ implies a reduction from $\tofsateq$ reduces to $\qmat$.  
Later on, we will use the connection between QMA(2) and $h_{\Sep}$ to reduce
from $\qmat$ to $h_{\Sep}$. 

Our protocol is constructed by starting with a slight modification of the QMA(k)
protocol due to ABDFS~\cite{AaronsonBDFS08} and then converting this QMA(k)
protocol into a QMA(2) protocol using reduction of Harrow and
Montanaro~\cite{HM13}.  The following is a precise statement of the properties
of the resulting QMA(2) protocol.
\begin{proposition} \label{prop:qma2_protocol}
 For any constant $0<\delta<1$ and any constant $\eps>0$, there exists a QMA(2) protocol $P$ for $\tofsateq$ such that 
   \begin{enumerate}
     \item If any instance $\Phi$ of $\tofsateq$ has $\OPT(\Phi)=1$, then protocol P accepts with probability 1 with the following quantum witness $\ket{\psi_x} \ot \ket{\psi_x}$, where \footnote{We use the convention that $\tilde{O}(\cdot)$ hides constants as well as $\polylog(n)$ terms.}  
     \begin{equation} \label{eqn:psi_x}
         \ket{\psi_x}=\left( \frac{1}{\sqrt{n}}\sum_{i=1}^{n}{x_i}
  \ket{i}\right)^{\otimes \tilde{O}(\sqrt{n})}, \forall \text{ satisfiable assignment } x \in \zo^n. 
     \end{equation}
     \item  If any instance $\Phi$ of $\tofsateq$ has $\OPT(\Phi)\leq 1-\delta$, then protocol P accepts with probability at most $\eps$ on any separable quantum witness.
    \end{enumerate}

\end{proposition}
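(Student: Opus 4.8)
The plan is to obtain $P$ in two steps: first construct a $\QMA(k)$ protocol $P'$ for $\tofsateq$ with $k=\tilde O(\sqrt n)$ proofs by lightly modifying the protocol of Aaronson, Beigi, Drucker, Fefferman and Shor (ABDFS)~\cite{AaronsonBDFS08}, and then fold $P'$ into a two-prover protocol $P$ using the $\QMA(k)\to\QMA(2)$ reduction of Harrow and Montanaro~\cite{HM13}. In $P'$ the verifier expects $k$ registers, each a unit vector in $\bbC^n$ whose basis states are indexed by the variables of the instance $\Phi$, and in the honest case all $k$ registers equal the \emph{proper state} $\ket{\chi_x}=\tfrac1{\sqrt n}\sum_{i=1}^n x_i\ket{i}$ for a fixed satisfying assignment $x\in\zo^n$. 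The verifier picks uniformly among three subtests: a \textbf{properness test}, which applies collision (swap) tests among $\Theta(\sqrt n)$ of the registers --- here the $\sqrt n$ birthday threshold is exactly what makes the collision statistics informative --- to certify that each register is $\ell_2$-close to a state of the form $\tfrac1{\sqrt n}\sum_i\pm\ket{i}$; a \textbf{symmetry test}, a collection of swap tests between random pairs of registers certifying that the global state is close to a symmetric product $\ket{\phi}\otimes\cdots\otimes\ket{\phi}$; and a \textbf{satisfiability test}, which measures a single register in a basis that contains suitable ``violation-detector'' vectors --- for instance $\tfrac12(\ket{i}+\ket{j}+\ket{k}+\ket{\ell})$ for a $\tof$ clause on $(i,j,k,\ell)$, which is orthogonal to $\ket{\chi_x}$ exactly when $x_i+x_j+x_k+x_\ell=0$, i.e.\ the clause is satisfied --- together with the birthday-style consistency checks of ABDFS for the $\EQ$ clauses, rejecting iff a detector outcome occurs. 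The bounded-occurrence structure from Proposition~\ref{prop:reduction_txor_tofsat} lets us partition the clauses into $O(1)$ matchings, so the detector sets can be taken mutually orthogonal by checking one matching at random. Completeness is then immediate: on the honest witness $\ket{\chi_x}^{\otimes k}$ with $x$ satisfying $\Phi$, every register is uniform and real, all $k$ registers are identical so every swap test accepts, and each detector vector is orthogonal to $\ket{\chi_x}$ because $x$ satisfies the relevant clause, so $P'$ accepts with probability $1$.

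For soundness I would re-run ABDFS's analysis with the $\tof$ and $\EQ$ predicates in place of theirs. Suppose a state is accepted by $P'$ with probability $1-\gamma$. Passing the symmetry test with probability $1-O(\gamma)$ forces, by a product-test / de Finetti estimate, the reduced state on any constant number of registers to be $O(\sqrt\gamma)$-close to the corresponding tensor power of a single vector $\ket{\phi}$; passing the properness test then forces $\ket{\phi}$ to be $O(\sqrt\gamma)$-close to a proper state $\ket{\chi_w}$ for some $w\in\zo^n$, unique up to an irrelevant global sign; and the satisfiability test, applied to this $O(\sqrt\gamma)$-perturbation of $\ket{\chi_w}^{\otimes k}$, rejects with probability at least $\Omega(\text{fraction of clauses of }\Phi\text{ violated by }w)-O(\sqrt\gamma)$. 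Since $\OPT(\Phi)\le 1-\delta$, every $w$ violates an $\Omega(\delta)$ fraction of clauses, so $\gamma=\Omega(\delta)$; that is, $P'$ has soundness error $1-\Omega(\delta)$ against product proofs. Running $t=O(\log(1/\eps)/\delta)$ independent copies of $P'$ on disjoint groups of registers --- which keeps the total register count at $t\cdot\tilde O(\sqrt n)=\tilde O(\sqrt n)$ --- then drives this soundness error below any desired constant $\eps>0$.

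Finally, to pass from $P'$ to $P$ I invoke the reduction of~\cite{HM13}: $P$ receives two proofs which in the honest case are each $\ket{\psi_x}=\ket{\chi_x}^{\otimes\tilde O(\sqrt n)}=\bigl(\tfrac1{\sqrt n}\sum_i x_i\ket{i}\bigr)^{\otimes\tilde O(\sqrt n)}$, so that each of the two $\QMA(2)$ proofs bundles $\tilde O(\sqrt n)$ of the $\QMA(k)$ registers; the verifier swap-tests the two bundles against each other to enforce approximate equality and then runs $P'$ across all $2\cdot\tilde O(\sqrt n)=\tilde O(\sqrt n)$ sub-registers. Completeness on $\ket{\psi_x}\otimes\ket{\psi_x}$ is immediate, and the HM analysis converts the $\QMA(k)$ soundness of $P'$ into soundness of $P$ against any state separable across the cut between the two proofs (equivalently, since the acceptance probability is linear in the proof state, against any pure product witness), at the cost of only a polynomial adjustment to the soundness error that we absorb into the constant of the amplification step. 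I expect the main obstacle to be this soundness analysis: verifying that the product-test / de Finetti and properness estimates each lose only $O(\sqrt\gamma)$, so that the $\Omega(\delta)$ gap survives down to the amplification step, and that the modified satisfiability test really does certify the $\tof$ and $\EQ$ predicates. This quantitative bookkeeping is the delicate core of ABDFS, and re-deriving it for $\tofsateq$ rather than their original predicate is the one genuinely new ingredient.
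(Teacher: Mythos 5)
Your proposal follows the same high-level path as the paper: lightly modify the ABDFS $\QMA(k)$ protocol to handle $\EQ$ clauses, compose with the Harrow--Montanaro $\QMA(k)\to\QMA(2)$ reduction, and amplify. Two points are worth flagging.

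The substantive issue is the order of amplification. You propose to run $t=O(\log(1/\eps)/\delta)$ parallel copies of the $\QMA(k)$ protocol $P'$ \emph{first}, driving its soundness to $\eps$, and \emph{then} apply the HM reduction, treating the loss from HM as a ``polynomial adjustment'' absorbable into constants. This cannot work: the HM $\QMA(2)$ protocol interleaves the product test with the original $\QMA(k)$ tests, and a cheating prover who simply sends an \emph{honest product state} corresponding to a wrong assignment passes the product test (and the symmetry/uniformity tests) with probability $1$, so the HM-reduced protocol has soundness bounded away from $1$ by at most a fixed constant, \emph{no matter how small the $\QMA(k)$ soundness $s$ is}. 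Driving $s\to\eps$ at the $\QMA(k)$ level buys you nothing beyond a fixed constant in the $\QMA(2)$ soundness. The paper therefore applies HM first to get a $\QMA(2)$ protocol with some constant soundness $s<1$, and \emph{then} amplifies this $\QMA(2)$ protocol by parallel repetition, invoking Lemma~7 of~\cite{HM13}: parallel repetition of a $\QMA(2)$ protocol whose accepting POVM element is separable drives soundness to $s^{\ell}$. That separability hypothesis is the non-trivial ingredient that makes parallel repetition valid against non-product witnesses, and it is why the amplification must happen at the $\QMA(2)$ level rather than where you put it.

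A lesser point: your ``properness test'' is described as ``collision (swap) tests among $\Theta(\sqrt n)$ registers,'' but swap tests between registers can only certify that the registers are (approximately) equal to one another, not that the amplitudes within a register are uniform in magnitude. The ABDFS uniformity test that the paper actually uses is different: the verifier picks a random matching on $[n]$, rotates each register into the basis containing $\frac{1}{\sqrt2}(\ket u\pm\ket v)$ for matched pairs, and rejects if two registers report opposite signs on the same edge; the $\Omega(\sqrt n)$ register count is what makes this matching-consistency test discriminative. Apart from these two issues the route is essentially the paper's.
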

\begin{proof}
We construct such a QMA(2) protocol by composition of a slight modification of the QMA(k) protocol due to \cite{AaronsonBDFS08} and \cite{HM13}. 

 The QMA(k) protocol due to ABDFS~\cite{AaronsonBDFS08} can be used to solve $\tofsat$ when $k=\tilde{O}(\sqrt{n})$. So we just need to show how to modify the protocol to handle equality clauses.  
In the original ABDFS protocol one of the tests was to project onto a random subspace spanned by $\{\ket i, \ket j, \ket k, \ket\ell\}$ with $i,j,k,\ell$ corresponding to some $\tof$ clause.  With probability 1/2 we will replace this with a test that first projects onto the span of $\{\ket i, \ket j\}$ where $i,j$ come from a random equality clause $x_i = x_j$.  Next, we check whether the state is orthogonal to $\ket{i} - \ket{j}$. It is not hard to see the analysis therein works with this slight change, as long as we have the same regularity condition that each variable participates in $O(1)$ equality constraints.
  
Then we can apply the generic reduction from any QMA(k) to QMA(2) from~\cite{HM13}. The final protocol is the composition of the above two
steps. It takes as input a two-party separable state, where each half
of the state consists of $\sqrt{n} \,\polylog(n)$ qubits. These are
grouped into registers of $k = \lceil \log_2(n) \rceil$ qubits each. The protocol consists
of the following four tests, which we describe briefly (for full
descriptions, see the orignal works):
\begin{enumerate}
\item Product test~\cite{HM13}: Think of the proofs from the two provers as
  divided into pieces $A_1,\ldots,A_m$ and $B_1,\ldots,B_m$ respectively, of
  $k$ qubits each, where
  $m = \sqrt{n}\polylog(n)$.  In this test the verifier projects the registers
  $A_iB_i$ onto the bipartite symmetric subspace $\Sym^{2}\bbC^{2^k}$ for each $i$ and rejects if any $A_iB_i$ is found in the antisymmetric subspace.
\item Symmetry test~\cite{AaronsonBDFS08}: In this
  test, the verifier projects $A_1,\ldots,A_m$ onto the $m$-partite symmetric
  subspace $\Sym^{m} \bbC^{2^k}$ and similarly for $B_1,\ldots,B_m$.  (We could also
  combine this with the Product test and simply project $A_1,\ldots,B_m$ onto
  $\Sym^{2m}\bbC^{2^k}$; this would leave completeness the same and would only improve the
  soundness.  The only reason to consider the Product and Symmetry tests separately is so
  that we can use the analyses of \cite{AaronsonBDFS08} and \cite{HM13} as black boxes.)
\item Uniformity test~\cite{AaronsonBDFS08}: The verifier chooses a maximal matching
  $\mathcal{M}$ on the set $\{1, \dots, n\}$ at random, and constructs an
  orthonormal basis containing the vectors 
  \[ \frac{1}{\sqrt{2}}(\ket{u} \pm \ket{v}) \]
  for every edge $(u,v) \in \mathcal{M}$. The verifier then measures each $A_i$
  and each $B_j$ in this basis, and rejects if for some $(u, v) \in
  \mathcal{M}$, the outcomes $\frac{1}{\sqrt{2}}(\ket{u} \pm \ket{v})$ are both
  obtained on different subsystems.
\item Satisfiability test~\cite{AaronsonBDFS08}, modified to handle equality
  clauses as described above: The verifier divides the
  clauses into a constant number of blocks such within each block, no two
  clauses share a variable. For each piece of the proof $A_i$ or $B_j$, the
  verifier chooses a block $B$ at random, and applies a POVM whose elements are
  the projectors $\Pi_{C}$ onto the span $\mathrm{span}(\{\ket{i}: i \in C \})$
  of computational basis states corresponding to the variables in each clause
  $C$ in the block $B$. This is completed into a POVM by including the POVM element $M_{\text{default}}  = I - \sum_{(i,j,k,\ell) \in
    B} \Pi_{i,j,k,\ell}$. If the measurement returns outcome $C$ corresponding
  to some clause $C$, then the verifier checks whether this clause is satisfied
  by the witness. If the clause $C$ is a $\tofsat$ clause on variables
  $i,j,k,\ell$, this is done by checking whether the post-measurement state is orthogonal to
  $\frac{1}{2}(\ket{i} + \ket{j} + \ket{k} + \ket{\ell})$, and if it is an
  equality clause on variables $i, j$, we check whether the post-measurement state is orthogonal
  to $\frac{1}{\sqrt{2}} (\ket{i} - \ket{j})$. If ever any clause is discovered
    to be violated, the verifier rejects; otherwise, he accepts.
\end{enumerate}
By composing the completeness and soundness results
of~\cite{AaronsonBDFS08} and~\cite{HM13}, we obtain that the protocol
given above has completeness $1$ and constant soundness $s < 1$.

Finally, to achieve arbitrarily small constant soundness, we perform
an amplification procedure. Since the measurement operator corresponding
to an accepting outcome is separable, by Lemma 7 of~\cite{HM13}, we can amplify
the soundness of the protocol by performing parallel
repetition: if we start with soundness $s$ and repeat $\ell$ times in
parallel, the new soundness is at most $s^\ell$.

\end{proof}

We are ready to formally define $\qmat$. 

\begin{definition} \label{def:qmat}
$\qmat$ is a boolean polynomial optimization problem 
with objective function $\Psi: \zo^n \rightarrow [0,1]$ defined as follows.
Let $\Phi$ be an instance of $\tofsateq$
and $P$ the corresponding protocol from Proposition~\ref{prop:qma2_protocol}.
Then
  \begin{eqnarray*}
     \Psi(x) &  :=  & \Prob [ P \text{ accepts on input } \Phi \text{ and witness } \ket{\psi_x}\ot \ket{\psi_x}], \\
                &  = & \tr(M_n^\Phi \ket{\psi_x}\bra{\psi_x} \ot \ket{\psi_x}\bra{\psi_x}),       
  \end{eqnarray*}
  where
$x \in \zo^n$, 
 $M_n^\Phi$ is the POVM corresponding to the acceptance in protocol $P$ on input $\Phi$ and $\ket{\psi_x}$ is defined in (\ref{eqn:psi_x}). 
\end{definition}
It is important to note that the objective function $\Psi(x)$ is a function of
the boolean variables $x$, \emph{not} of the components of the witness state
$\ket{\psi_x}$, and hence $\qmat$ is indeed a boolean optimization problem. For
this to work, we rely on the fact that $\Psi(x)$ is a
degree-$\tilde{O}(\sqrt{n})$ polynomial in the variables $x$, which follows from
the fact that the components of the witness state $\ket{\psi_x}$ are themselves
degree-$\tilde{O}(\sqrt{n})$ polynomials of $x$.

\subsubsection*{Integrality gap for $\qmat$}
We are ready to establish the integrality gap for $\qmat$ from the integrality gap for $\tofsateq$ using the above reduction, i.e., each instance $\Phi \in \tofsateq$ is reduced to an instance $\Psi \in \qmat$ as in the Definition~\ref{def:qmat}. First, we note a useful fact that allows us to interpret pseudoexpectations as quantum states.

\begin{definition} \label{def:conn}
 Let $\psE_x[\cdot]$ be the pseudoexpectation operator corresponding to
 a degree-$2d$ pseudodistribution over $n$ variables $\{x_1, \dots,
 x_n\}$ satisfying the constraints $x_i^2 - 1 = 0$ for all $i$ from
 $1$ to $n$. (This means $\psE[(x_i^2 - 1) p(x)] = 0$ for all $i$ and for
 all polynomials $p(x)$ with degree at most $2d - 2$.) Then we define
 the associated mixed quantum state $\rho_x$ to be the following
 state over $d \log n$ qubits:
 \[ \rho_{x}  = \frac{1}{n^d} \sum_{\substack{i_1 i_2 \dots i_{d}\\ j_1 j_2 \dots j_d}}
 \psE_x[x_{i_1} \dots x_{i_d} x_{j_1} \dots x_{j_d}] \ket{i_1 \dots
   i_d} \bra{j_1 \dots j_d}.\]
\end{definition}
If $\psE_x[\cdot]$ is a valid pseudoexpectation operator, then
$\rho_x$ is PSD and has trace $1$, so it is indeed a quantum mixed
state. Moreover, by the $x_i^2 - 1 = 0$ constraint, we find that the
reduced density matrices correspond to the low-degree moments of the
pseudodistribution.

\begin{proposition} \label{prop:witness_qmat}
For any constant $\eps>0$, there exists a degree-$\Omega(n)$ value-$(1,\eps)$ integrality gap for $\qmat$. 
\end{proposition}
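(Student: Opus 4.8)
The plan is to read off the $\qmat$ integrality gap from the $\tofsateq$ integrality gap of Proposition~\ref{prop:witness_tofsateq} using the reduction built into Definition~\ref{def:qmat}. I would take the hard instances $\Phi_n$ of $\tofsateq$ supplied by Proposition~\ref{prop:witness_tofsateq}: they satisfy $\OPT(\Phi_n)\le 1-\delta$ for a constant $\delta>0$ and admit a degree-$\Omega(n)$ \emph{perfectly satisfying} pseudo-solution $\psE_n$. Let $P$ be the corresponding QMA(2) protocol of Proposition~\ref{prop:qma2_protocol}, tuned so its soundness on separable witnesses is the desired $\eps$, and let $\Psi_n$ be the $\qmat$ instance of Definition~\ref{def:qmat}, so that $\Psi_n(x)=\tr\!\big(M_n^{\Phi}\,\ketbra{\psi_x}\ot\ketbra{\psi_x}\big)$ is a degree-$\tilde O(\sqrt n)$ polynomial in $x\in\zo^n$. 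The soundness half of the integrality gap is then immediate: for every $x$ the honest witness $\ketbra{\psi_x}\ot\ketbra{\psi_x}$ is a product, hence separable, state, so clause~2 of Proposition~\ref{prop:qma2_protocol} (applicable because $\OPT(\Phi_n)\le1-\delta$) gives $\Psi_n(x)\le\eps$ for all $x$, and therefore $\OPT(\Psi_n)=\max_x\Psi_n(x)\le\eps$.

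The more substantial task is to exhibit a degree-$\Omega(n)$ value-$1$ pseudo-solution for $\Psi_n$, and I claim $\psE_n$ itself works with no loss of degree: it is a valid degree-$\Omega(n)$ pseudo-expectation over $\zo^n$ and already satisfies the only constraints of $\qmat$, namely $x_i^2=1$, so it remains only to show $\psE_n[\Psi_n(x)]=1$. It is convenient to invoke the correspondence of Definition~\ref{def:conn}: since each component of $\ket{\psi_x}$ is a polynomial of degree $\tilde O(\sqrt n)\ll\Omega(n)$, we have $\psE_n[\Psi_n]=\tr(M_n^{\Phi}\,\rho_n)$ where $\rho_n$ is the genuine quantum state obtained from the degree-$\tilde O(\sqrt n)$ moments of $\psE_n$, and the claim becomes that $\rho_n$ is accepted by $P$ with certainty. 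Equivalently, since $M_n^{\Phi}$ has the form $I-\E_t[\Pi_t^{\mathrm{rej}}]$ over the sub-tests $t$ of $P$ with rejection projectors $\Pi_t^{\mathrm{rej}}$, we may write $1-\Psi_n(x)=\E_t[r_t(x)]$ with $r_t(x)=\tr(\Pi_t^{\mathrm{rej}}\,\ketbra{\psi_x}\ot\ketbra{\psi_x})\ge 0$ of degree $\tilde O(\sqrt n)$, and it suffices to show $\psE_n[r_t]=0$ for each $t$; then $\psE_n[\Psi_n]=\psE_n[1]-\E_t\psE_n[r_t]=1$.

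For the product, symmetry, and uniformity tests, the honest witness $\ketbra{\phi_x}^{\ot 2K}$ — with $\ket{\phi_x}:=\tfrac{1}{\sqrt n}\sum_i x_i\ket i$ and $K=\tilde O(\sqrt n)$ the number of copies in \eqref{eqn:psi_x} — passes with probability $1$ for \emph{every} $x\in\zo^n$, not merely for satisfying $x$: each register pair $A_iB_i$ and each of the blocks $A_1\cdots A_K$, $B_1\cdots B_K$ holds identical copies of $\ket{\phi_x}$ and hence lies in the symmetric subspace, while in the uniformity test only the sign-consistent member of $\tfrac{1}{\sqrt 2}(\ket u\pm\ket v)$ has nonzero amplitude on any register. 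Consequently $r_t$ vanishes on all of $\zo^n$, so $r_t$ lies in the ideal $(x_1^2-1,\dots,x_n^2-1)$ with cofactors of degree $\tilde O(\sqrt n)$, and a degree-$\Omega(n)$ pseudo-expectation over $\zo^n$ annihilates it. For the satisfiability test, projecting $\ket{\phi_x}$ onto the span of the variables of a $\tof$ clause $(i,j,k,\ell)$ and checking orthogonality to $\tfrac{1}{2}(\ket i+\ket j+\ket k+\ket\ell)$ produces a rejection contribution of the form $q(x)\,(x_i+x_j+x_k+x_\ell)^2$, where $q(x)$ is a power of $\langle\phi_x|\phi_x\rangle=\tfrac{1}{n}\sum_v x_v^2$ collected from the partial trace over the other registers, of degree $\tilde O(\sqrt n)$; an equality clause $(i,j)$ analogously contributes $q(x)\,(x_i-x_j)^2$. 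Applying the perfectly-satisfying identities carried by $\psE_n$ from Proposition~\ref{prop:witness_tofsateq}, namely $\psE_n[p(x)(x_i+x_j+x_k+x_\ell)]=0$ and $\psE_n[p(x)(x_i-x_j)]=0$, with $p=q\cdot(x_i+x_j+x_k+x_\ell)$ and $p=q\cdot(x_i-x_j)$ respectively, kills each contribution, so $\psE_n[r_{\mathrm{sat}}]=0$. Combined with the soundness bound, this gives the degree-$\Omega(n)$ value-$(1,\eps)$ integrality gap.

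The main obstacle is the explicit per-test bookkeeping in the previous paragraph: one has to verify, for the concrete protocol obtained by composing the modified ABDFS protocol with the Harrow--Montanaro $\QMA(k)\to\QMA(2)$ reduction and the final parallel-repetition amplification, that every rejection operator contributes a polynomial that — modulo the boolean relations $x_i^2=1$ — lies in the ideal generated by the $\tofsateq$ constraints with cofactors of degree only $\tilde O(\sqrt n)$, safely below $\deg\psE_n=\Omega(n)$. The slightly delicate points are keeping track of the $\langle\phi_x|\phi_x\rangle$ normalization factors picked up from partial traces, so that they do not inflate cofactor degrees past $\deg\psE_n$, and confirming that parallel repetition — being a product and average of the same kinds of rejection polynomials over $O(1)$ copies — preserves this structure, so that the degree bounds and the annihilation by $\psE_n$ all survive.
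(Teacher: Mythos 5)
Your proposal is correct and follows essentially the same route as the paper: take $\mu_n$ from Proposition~\ref{prop:witness_tofsateq} unchanged, observe that the objective $\Psi_n$ has degree $\tilde O(\sqrt n)\ll\Omega(n)$, pull the pseudo-expectation inside to a genuine density matrix $\tilde\rho$ via the correspondence of Definition~\ref{def:conn}, and then check the tests of the composed ABDFS/Harrow--Montanaro protocol one by one, using the perfectly-satisfying identities exactly where the paper does (satisfiability and equality tests). The one place you diverge, modestly in your favor, is the treatment of the product/symmetry/uniformity tests: the paper verifies these by explicit computation under $\psE$ (e.g.\ expanding $(x_i+x_j)^2(x_i-x_j)^2$ and reducing via $x_i^2=1$), whereas you shortcut this by noting that the corresponding rejection polynomials vanish identically on $\{\pm1\}^n$, hence lie in the ideal $(x_1^2-1,\dots,x_n^2-1)$ with low-degree cofactors and are automatically annihilated by any degree-$\Omega(n)$ pseudo-expectation over the cube. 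That is a valid simplification. You also correctly flag the two bookkeeping points the paper handles implicitly — that parallel repetition turns $1-\Psi_n$ into a polynomial in the rejection terms with no constant term rather than a simple average, and that the normalization factors from partial traces contribute only $\tilde O(\sqrt n)$ extra degree — so the degree budget closes.
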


\begin{proof}
For any $n$, let $\Phi_n \in \tofsateq$ be the instance from the degree-$\Omega(n)$ value-$(1,1-\delta)$ integrality gap for $\tofsateq$ from Proposition~\ref{prop:witness_tofsateq} and $\mu_n$ be the corresponding pseudo-solution ($\delta$ is the constant therein). Let $\Psi_n$ be its reduced instance of $\qmat$. 
For any constant $\eps>0$, 
by Property (2) of Proposition~\ref{prop:qma2_protocol} and the fact $\OPT(\Phi_n) \leq 1-\delta$, we have $\OPT(\Psi_n) \leq \eps$.
Then it suffices to establish a $(\Omega(n),1)$ pseudo-solution for $\Psi_n$. 

To that end, we claim that $\mu_n$ is also a degree-$\Omega(n)$ value-1 pseudo-solution for $\Psi_n$.  Note that the feasible sets for $\Phi_n$ and $\Psi_n$ are the same. It suffices to show $\psE_{x \sim \mu} [\Psi_n(x)]=1$. Observe that by linearity of $\psE[\cdot]$ and $\tr(\cdot)$, we have
\be
  \psE_{x \sim \mu} [\Psi_n(x)] = \psE_{x \sim \mu} \left [\tr(M^{\Psi_n} \ket{\psi_x}\bra{\psi_x} \ot \ket{\psi_x}\bra{\psi_x})\right]= \tr (M^{\Psi_n}\tilde\rho),
\label{eq:MPsi-exp}\ee
where we have defined $\tilde{\rho}$ as 
\begin{equation} \label{eqn:trho} \tilde{\rho}=\psE_{x \sim \mu} \left [ \ket{\psi_x}\bra{\psi_x} \ot \ket{\psi_x}\bra{\psi_x} \right ] . \end{equation} 
We note that this is precisely the state obtained by applying a partial trace to all but $\tilde{O}(\sqrt{n})$ subsystems of the state $\rho_x$ defined above.

Now, we need to calculate the expectation value of $M^{\Psi_n}$, the POVM element
corresponding to the ``yes'' outcome of the protocol. Recall from \propref{qma2_protocol}
that our protocol is obtained by parallel repetition of the protocol of~\cite{HM13}, where
the number of repetitions is constant.  (More precisely the number of repetitions is
$O(\log 1/\eps)$ but we neglect this dependence since we take $\eps$ to be a constant and
we allow the constant in the  $\Omega(n)$ degree to depend on $\eps$.) Thus,
$M^{\Psi_n}$ is a linear combination of tensor products of a constant number of  terms, each of
which implements a randomly chosen test on one of the registers of the witness state. The
complementary POVM element $1 - M^{\Psi_n}$ consists of a linear combination of tensor
products, where each product contains at least one ``no'' outcome of a test. To show that
the state $\tilde{\rho}$ passes $M^{\Psi_n}$ with certainty, it suffices to show that the
expectation value of any such term is $0$. Below, we verify this for each test. 

  \begin{enumerate}
  \item {\bf Symmetry and Product tests:} These tests consist of
    applying the swap test to various pairs of registers in the
    state. Since $\tilde{\rho}$ is fully symmetric under any permutation of
    the indices, we pass these tests with certainty, i.e. $\Tr[(M^{\text{``no'',
        symmetry test}} \otimes M^{\text{rest}}) \tilde{\rho}] = 0$. 
  \item {\bf Uniformity test:} 
  Recall that in the uniformity test, Arthur chooses a matching
  $\mathcal{M}$ on $[n]$, and the measures each subsystem in an
  orthonormal basis containing 
  \[ \ket{\pm}_{ij} \equiv \frac{1}{\sqrt{2}}(\ket{i} \pm \ket{j}) \]
  for every $(i,j) \in \mathcal{M}$. The test fails if for some
  $(i,j)$, outcomes of different subsystems are different. 
  We claim this won't happen with $\tilde{\rho}$. Without loss of generality, let the first two subsystems have different outcomes. 
  The probability for this to happen is given by 
  \begin{align*}
    \Prob[\text{Uniformity test failure}] &= \Tr[\tilde{\rho} \ket{+}_{ij} \bra{+}_{ij} \otimes \ket{-}_{ij}
        \bra{-}_{ij} \otimes M^{\text{rest}} ] \\
      &\propto \psE_{x \sim \mu}[ (x_i + x_j)^2(x_i - x_j)^2 q(x) ], \text{ for some polynomial } q(x) \\
      &= \psE_{x \sim \mu}[(x_i^2 + x_j^2 + 2x_i x_j)(x_i^2 + x_j^2 - 2x_i x_j)
        q(x)] \\
      &= \psE_{x \sim \mu}[(2 + 2 x_i x_j)(2 - 2x_ix_j)q(x)] = \psE_{x \sim \mu}[4(1 - x_i^2 x_j^2) q(x)] = 0.
  \end{align*}
  In the above calculation, we used \eq{boolean-psE}
repeatedly to simplify the terms. Also note that since there are
  $\tilde{O}(\sqrt{n})$ registers in the witness state, the degree of
  $q(x)$ is $\tilde{O}(\sqrt{n})$, which is less than the degree of
  the pseudoexpectation $\Omega(n)$.
  \item {\bf Satisfiability test: } In the satisfiability test, we
    choose a set of clauses to
  measure that have no variables in common with each other. Now, we
  perform the following procedure on the witness: first perform a measurement to project the witness
  into the subspace $\mathrm{Span}(\{ \ket{i} : i \in C \})$ spanned
  by the variables occurring in a clause $C$. 
  
     If we end up in the subspace associated with $C = \tof(x_{i_1}, x_{i_2}, x_{i_3},
  x_{i_4})$, then we perform another projective measurement to check
  that the state is orthogonal to 
  \[ \ket{C}=\frac{1}{2} ( \ket{i_1} + \ket{i_2 } +  \ket{i_3} +
 \ket{i_4}). \]
  Let $\Pi_C = \ket{C}\bra{C}$.   
    Let $m$ be the number of copies of the witness, and suppose that the
  first stage of this test projects us onto clauses $C_1, \dots ,	
  C_m$. Then the probability of the second stage passing is
  \begin{align*}
    \Prob[\text{Success}] &= \Tr\left[ (I - \Pi_{C_1}) \otimes (I - \Pi_{C_2}) \otimes
        \dots \otimes (I - \Pi_{C_m}) \otimes M^{\text{rest}} \tilde{\rho}\right] \\
    &\propto \psE_{x \sim \mu}\left[ (1 - \frac{1}{4} (\sum_{x \in C_1} x)^2)
      \dots (1 - \frac{1}{4} (\sum_{x \in C_m} x)^2) \dots\right].
  \end{align*}
  Now, we know that the pseudo-solution $\mu$ has degree $\Omega(n)$ and satisfies all the $\tof$ constraints.
  In particular for every clause $\psE_x[(\sum_{x \in C_1} x) q(x)]
  = 0$ for all polynomials $q(x)$ with degree $o(n)$. But in the
  expression above, we have a product of $m = \tilde{O}(\sqrt{n})$ terms, each
  of degree $2$, so every term containing a factor of $\sum_{x
    \in C} x$ will vanish under the pseudo-expectation. This leaves us
  with $\Prob[\text{Success}]  = 1$ as desired.
  
  Similarly, if we end up in the subspace associated with $C=\textsc{EQ}(x_{i_1},x_{i_2})$. We do the same thing except now we choose
  \[
    \ket{C} =\frac{1}{\sqrt{2}} ( \ket{i_1} -\ket{i_2}). 
  \]
  The analysis is analogous to the above and we end up with $\Prob[\text{Success}]  = 1$ as desired.
\end{enumerate}
Note that it is crucial that even in the parallel-repeated protocol, the number of subsystems is $\tilde{O}(\sqrt{n})$. This means that all the tests in the protocol translate to polynomials of degree at most $\tilde{O}(\sqrt{n})$ under the pseudoexpectation. Since the pseudoexpectation is valid up to degree $\Omega(n)$, this means that the tests cannot tell that $\tilde{\rho}$ is not an honest witness.
\end{proof}

We can understand Proposition~\ref{prop:witness_qmat} as an
explicit lower bound on the Doherty-Parrilo-Spedalieri~\cite{DPS03}
hierarchy for $h_{\Sep}$.
\begin{corollary} \label{cor:DPS}
  For any constant $\eps$, there exists a family of measurements $M_d$
  acting on a bipartite Hilbert space with local dimension $d$, such
  that $h_{\Sep}(M) \leq \eps$, but the $k^{\text{th}}$-level of DPS estimates this value to be 1, i.e., $\mathrm{DPS}_k(M) = 1$ for $k
  \leq o(\log d / \polylog\log d)$. 
\end{corollary}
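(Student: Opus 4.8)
The plan is to read both the measurement family and the DPS-feasible point off the $\qmat$ integrality gap of Proposition~\ref{prop:witness_qmat}, which already carries out most of the work. Fix the constant $\eps>0$, and for each $n$ let $\Phi_n\in\tofsateq$, $\Psi_n\in\qmat$ and the degree-$\Omega(n)$ value-$1$ pseudo-solution $\mu_n$ be as in Proposition~\ref{prop:witness_qmat}, so that $\OPT(\Psi_n)\le\eps$. Let $M_d:=M_n^{\Phi}$ be the acceptance POVM element of the underlying QMA(2) protocol of Proposition~\ref{prop:qma2_protocol}; it acts on $\mathcal{H}_A\otimes\mathcal{H}_B$ with $\mathcal{H}_A\cong\mathcal{H}_B$ of dimension $d=n^{\tilde O(\sqrt n)}$, so $\sqrt n=\tilde\Theta(\log d)$. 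The bound $h_{\Sep}(M_d)\le\eps$ is then immediate from Property~(2) of Proposition~\ref{prop:qma2_protocol}: the protocol accepts every pure product witness with probability at most $\eps$, and by convexity the same holds over all of $\Sep(d,d)$.

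It remains to exhibit, for $k$ up to the claimed threshold, a level-$k$ DPS point of value $1$. I take $\rho^{AB}:=\tilde\rho=\psE_{x\sim\mu_n}[\ket{\psi_x}\bra{\psi_x}\otimes\ket{\psi_x}\bra{\psi_x}]$, which Proposition~\ref{prop:witness_qmat} already shows satisfies $\Tr[M_d\,\tilde\rho]=1$, and as its $k$-extension the $2k$-fold analogue
\[
  \tilde\rho^{(2k)}\;:=\;\psE_{x\sim\mu_n}\big[(\ket{\psi_x}\bra{\psi_x})^{\otimes k}\otimes(\ket{\psi_x}\bra{\psi_x})^{\otimes k}\big],
\]
the first $k$ tensor factors playing the role of $A_1,\dots,A_k$ and the last $k$ that of $B_1,\dots,B_k$. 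Writing $t=\tilde O(\sqrt n)$ for the number of linear forms composing $\ket{\psi_x}$, this $\tilde\rho^{(2k)}$ is exactly the state $\rho_x$ of Definition~\ref{def:conn} built from $\mu_n$ on $2kt$ registers, regrouped into $2k$ blocks of $t$ registers; in particular it is positive semidefinite and trace-$1$ as soon as $\mu_n$ is valid up to degree $\ge 4kt$. I then verify the three DPS requirements. First, tracing out every block but $A_1$ and $B_1$ returns $\tilde\rho$, using $x_i^2=1$ to collapse the norm factors. Second, $\ket{\psi_x}^{\otimes k}\in\Sym^{k}\mathcal{H}_A$, so $P_{\Sym^{k}}(\ket{\psi_x}\bra{\psi_x})^{\otimes k}P_{\Sym^{k}}=(\ket{\psi_x}\bra{\psi_x})^{\otimes k}$ as a polynomial identity in $x$; since $\psE_{x\sim\mu_n}$ acts entrywise and commutes with the constant matrix $P_{\Sym^{k}}$, the reduced states $\tilde\rho^{A_1\dots A_k}$ and $\tilde\rho^{B_1\dots B_k}$ are supported in the symmetric subspace. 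Third --- the only genuinely new point --- each $\ket{\psi_x}$ has real entries, so partial transposition over any subset of tensor factors fixes $(\ket{\psi_x}\bra{\psi_x})^{\otimes 2k}$ identically in $x$; applying $\psE_{x\sim\mu_n}$ again shows $\tilde\rho^{(2k)}$ equals its own partial transpose over every subset of subsystems, hence is PPT. (Equivalently, one may compose $\mu_n$ with the degree-$t$ polynomial map $x\mapsto(\ket{\psi_x},\ket{\psi_x})$ and invoke Lemma~\ref{lem:poly_map_pseudo_distribution} to obtain a pseudo-expectation of degree $\Omega(n)/t$ on the feasible set of $h_{\Sep(d,d)}$ with objective value $1$; by the equivalence between the DPS and SoS hierarchies this is a DPS point, and the real-entries remark is what certifies PPT-ness.)

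All three checks, together with positivity and normalization of $\tilde\rho^{(2k)}$, succeed exactly when the polynomials handed to $\psE_{x\sim\mu_n}$ stay within its degree budget. The entries of $\tilde\rho^{(2k)}$ have degree $O(kt)=O(k\sqrt n\,\polylog n)$ in $x$, and the positivity check squares these, so everything goes through provided $k\sqrt n\,\polylog n=O(n)$, i.e.\ $k=O(\sqrt n/\polylog n)$. Feeding in $\sqrt n=\tilde\Theta(\log d)$ turns this into $k=O(\log d/\polylog\log d)$, which comfortably contains every $k\le o(\log d/\polylog\log d)$; for such $k$ we get $\mathrm{DPS}_k(M_d)=1$ while $h_{\Sep}(M_d)\le\eps$, as claimed. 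I expect the only real subtlety to be exactly this degree accounting: because the honest witness $\ket{\psi_x}$ is already a degree-$\tilde O(\sqrt n)$ object rather than a single linear form, taking $2k$ copies of it consumes the degree-$\Omega(n)$ pseudo-expectation budget multiplicatively, and it is precisely this trade-off between the $\tilde O(\sqrt n)$ subsystems per prover and the $\Omega(n)$ degree that yields the $\log d/\polylog\log d$ threshold rather than, say, $\poly\log d$ or $d^{\Omega(1)}$.
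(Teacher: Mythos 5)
Your proof is correct and takes essentially the same route as the paper: the paper likewise sets $M_d=M^{\Psi}$, takes $\tilde\rho$ from Proposition~\ref{prop:witness_qmat}, and uses the fully symmetric pseudo-moment state $\rho_x$ of Definition~\ref{def:conn} (regrouped into $\tilde\Theta(\sqrt n)$ blocks of $\tilde O(\sqrt n)$ registers each) as the level-$k$ DPS extension, with $k$ up to $\tilde\Theta(\sqrt{n}) = \Omega(\log d/\polylog\log d)$. The only cosmetic difference is that the paper certifies the PPT condition via invariance of $\rho_x$ under all permutations of indices (including those exchanging row and column indices) rather than via realness of $\ket{\psi_x}$; your explicit degree accounting just spells out what the paper leaves implicit.
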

\begin{proof}
  We take $M_d$ to be $M^\Psi$ from the QMA(2) protocol, and
  $\tilde{\rho}$ from (\ref{eqn:trho}).
  The state $\tilde{\rho}$ arises as the reduced density matrix of a
  fully symmetric state $\rho$ on $\Omega(n)$ registers and is thus a
  $\tilde{O}(\sqrt{n})$-extendible state. Moreover, since $\rho$ is
  invariant under \emph{all} permutations of indices (including those
  that exchange row and column indices), it is \emph{a fortiori}
  invariant under partial transposes, and hence PPT. Thus, $\rho$ lies
  within the set of states explored by DPS at level $k =
  \tilde{O}(\sqrt{n})$. The statement follows because $d=2^{\tilde{O}(\sqrt{n})}$. 
\end{proof}

In \cite{BHKSZ12} it was shown that computing the $2\ra 4$ norm was a
special case of computing $h_{\Sep}$ and that in turn there was an
approximation-preserving reduction from $h_{\Sep}$ to the $2\ra 4$
norm.  Examining that construction, we see that it is $O(1)$-degree
and this lets us immediately obtain the following bound.

\begin{corollary}\label{cor:24}
The SoS relaxation needs at least $\Omega(\log(d) /
  \polylog\log(d))$ levels to approximate $\|A\|_{2\ra 4}$ up to
multiplicative error of $C = O(1)$.
\end{corollary}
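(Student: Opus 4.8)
The plan is to transfer the integrality gap of Corollary~\ref{cor:DPS} along the approximation-preserving reduction of \cite{BHKSZ12} from $h_{\Sep}$ to the $2\ra 4$ norm, using the reduction framework of Section~\ref{sec:lb_framework} to keep track of the pseudo-solution. First note that the level-$k$ DPS relaxation of $h_{\Sep(d,d)}$ coincides, up to the constant factor relating ``level'' and ``degree,'' with the SoS relaxation of $h_{\Sep(d,d)}$ regarded as a degree-$4$ polynomial optimization over the variety $\{\|x\|_2^2=\|y\|_2^2=1\}$ (the equivalence recalled in Section~\ref{sec:prelim}). So Corollary~\ref{cor:DPS} already furnishes, for every constant $\eps>0$, a degree-$\Omega(\log d/\polylog\log d)$ SoS integrality gap for $h_{\Sep(d,d)}$: a family of instances on which SoS reports value $1$ but whose true optimum is $\leq\eps$.

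It then remains to compose with the $h_{\Sep}\ra 2\ra 4$ reduction of \cite{BHKSZ12}, which sends a measurement $M$ on $\bbC^d\otimes\bbC^d$ to a matrix $A$ of dimension $\poly(d)$ with $\|A\|_{2\ra 4}$ a constant-factor approximation of $h_{\Sep(d,d)}(M)^{1/4}$, and to check that it satisfies the hypotheses of Proposition~\ref{prop:SoS_reduction}. The soundness side is just the statement that the reduction is approximation-preserving, so it is $(s^B,s^A)$-approximate with $s^B$ a constant multiple of $(s^A)^{1/4}$. For pseudo-solution preservation we appeal to Lemma~\ref{lem:poly_map_pseudo_distribution}: inspecting the construction of \cite{BHKSZ12}, the map $E$ that carries a feasible point of $h_{\Sep}$ (a unit vector, after passing to real and imaginary coordinates) to the corresponding feasible point of the $2\ra 4$ instance is built from tensoring, coordinate projections and a few elementary algebraic gadgets, each of bounded degree, so $E$ is a polynomial map of degree $O(1)$ that maps the $h_{\Sep}$ variety into the sphere underlying the SoS formulation of $\|\cdot\|_{2\ra 4}$, and on the value-$1$ instances it preserves the objective up to a fixed constant. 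Hence a degree-$\ell$ pseudo-expectation for an $h_{\Sep}$ instance pushes forward, via $E$, to a degree-$\Omega(\ell)$ pseudo-expectation for the $2\ra 4$ instance whose value is bounded below by a constant. I expect this verification---confirming that every gadget of \cite{BHKSZ12} is genuinely $O(1)$-degree and lands in the correct variety---to be the only real obstacle; nothing conceptual remains afterwards.

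Putting the two steps together through Proposition~\ref{prop:SoS_reduction} yields, for matrices $A$ of dimension $d'=\poly(d)$, a degree-$\Omega(\log d/\polylog\log d)$ integrality gap for $\|\cdot\|_{2\ra 4}$ on which the SoS hierarchy believes $\|A\|_{2\ra 4}$ equals a constant $\gamma$ while the true value is at most a constant multiple of $\gamma\eps^{1/4}$. Since $\eps$ can be taken to be any positive constant, the multiplicative gap exceeds any prescribed $C=O(1)$, and since $d'=\poly(d)$ we have $\log d/\polylog\log d=\Omega(\log d'/\polylog\log d')$; renaming $d'$ as $d$ recovers exactly the claim that $\Omega(\log d/\polylog\log d)$ levels of SoS are needed to approximate $\|A\|_{2\ra 4}$ within a multiplicative factor $C=O(1)$.
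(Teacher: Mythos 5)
Your proposal is correct and follows essentially the same route as the paper: the paper's entire argument for Corollary~\ref{cor:24} is the one sentence preceding it, namely that the $h_{\Sep}\to 2\ra 4$ reduction of \cite{BHKSZ12} is approximation-preserving and $O(1)$-degree, so the integrality gap of Corollary~\ref{cor:DPS} transfers with only a constant-factor loss in degree. You have simply made explicit the bookkeeping (via Proposition~\ref{prop:SoS_reduction} and Lemma~\ref{lem:poly_map_pseudo_distribution}) that the paper leaves implicit.
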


\subsection{DPS lower bounds from other protocols}\label{sec:other-bounds}

There are two other ETH-based hardness results that we can also make
unconditional, but we will only sketch the proof here. Both of these
results are obtained by an argument similar to the proof of \corref{DPS}, but with a different choice of Constraint Satisfaction Problem (CSP) in
place of $\tofsat$, and a correspondingly different reduction from
3XOR and choice of $\text{QMA}(2)$ or
$\text{QMA}(k)$ protocol. 

The first result is an SDP hardness result for $(1,1-\tilde O(1/n))$
approximations to $h_{\Sep(n,n)}$, using a protocol
of~\cite{GNN12}. 

\begin{theorem}\label{thm:sos-lb-gnn}
Let $\qmat$ be the problem of maximizing the acceptance probability of
the protocol of~\cite{GNN12} over honest strategies (to be described
below). Then for every $n$, there exists an instance of the problem
$\qmat$ on $\poly(n)$ variables with true optimum value $\leq 1 -
\tilde{\Omega}(1/n)$. Moreover, there exists a pseudosolution that
achieves value $1$ on this problem. As a consequence, we obtain a
family of measurements $M_d$ with $h_{\Sep}(M) \leq 1 -
\tilde{\Omega}(1/n)$, but for which $\mathrm{DPS}_k(M) = 1$ for $k
\leq o(n)$.
\end{theorem}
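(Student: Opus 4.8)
The plan is to mirror the proof of \propref{witness_qmat} and \corref{DPS}, replacing the intermediate problem $\tofsateq$ and the protocol of \propref{qma2_protocol} by the $3$-coloring CSP and the two-proof $\QMA(2)$ protocol of~\cite{GNN12} (which builds on~\cite{BT07,Bei10}). That protocol takes a graph promised to be either $3$-colorable or at constant Hamming distance from $3$-colorable and produces a $\QMA(2)$ verifier with local dimension $d=O(n)$, perfect completeness on the honest witness $\ket{\psi_x}\ot\ket{\psi_x}$ with $\ket{\psi_x}=\tfrac{1}{\sqrt n}\sum_v\ket{v}\ket{x(v)}$ (here $x$ denotes the coloring), and soundness $1-\tfrac{1}{d\,\poly\log(d)}=1-\tilde\Omega(1/n)$ against every separable witness. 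The structural feature we exploit is that the amplitudes of $\ket{\psi_x}$ are degree-$O(1)$ polynomials in the boolean variables encoding $x$, so each entry of $\ketbra{\psi_x}\ot\ketbra{\psi_x}$---and hence the induced objective $\Psi(x):=\Tr[M^{\Phi_n}\ketbra{\psi_x}\ot\ketbra{\psi_x}]$---is a degree-$O(1)$ polynomial in $x$; thus $\qmat$ for this protocol is a boolean polynomial optimization problem in $\poly(n)$ variables, exactly as in \defref{qmat}.

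\emph{Step 1: an integrality gap for $3$-coloring.} Rather than invoking the PCP theorem to build a constant-gap $3$-coloring instance (which would be a high-degree step), we compose Grigoriev's degree-$\Omega(n)$ value-$(1,\tfrac{1}{2}+\eps)$ integrality gap for $\txor$ (\propref{grigoriev}) with the textbook gadget reduction $\txor\Ra 3\text{-coloring}$, followed---if~\cite{GNN12} requires a bounded-degree graph---by a local degree-reduction step in the spirit of \lemref{deg_red}. Each gadget has constant size, so the composite reduction is \emph{local} in the sense of \propref{low-degree}: every color-variable of the output instance depends on $O(1)$ variables of the $\txor$ instance, and an assignment violating a $\delta$-fraction of the $\txor$ clauses yields a coloring violating an $\Omega(\delta)$-fraction of the edges. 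Hence the image instances have true optimum $\leq 1-\delta_0$ for an absolute constant $\delta_0>0$, and \propref{low-degree} applied to the perfectly satisfying pseudo-solution of \propref{grigoriev} endows them with a degree-$\Omega(n)$ perfectly satisfying pseudo-solution.

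\emph{Step 2: pushing the gap through the $\QMA(2)$ reduction.} Fix such a $3$-coloring instance $\Phi_n$, with pseudo-expectation $\psE$ coming from a pseudo-distribution $\mu$, and let $\Psi_n\in\qmat$ be its image under the reduction of \defref{qmat}. Since $\ketbra{\psi_x}\ot\ketbra{\psi_x}$ is a particular separable witness and $\OPT(\Phi_n)\leq 1-\delta_0$, the soundness of~\cite{GNN12} gives $\OPT(\Psi_n)=\max_x\Psi_n(x)\leq 1-\tilde\Omega(1/n)$. For the pseudo-solution we set $\tilde\rho:=\psE_{x\sim\mu}[\ketbra{\psi_x}\ot\ketbra{\psi_x}]$, so that $\psE[\Psi_n]=\Tr[M^{\Phi_n}\tilde\rho]$ by linearity, and it remains to show $\tilde\rho$ is accepted with certainty. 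We verify this test by test, exactly as in \propref{witness_qmat}: the consistency test between the two proofs passes because $\tilde\rho$ is permutation invariant; the equipartition/uniformity test passes because its rejection probability is proportional to an expression of the form $\psE[(x(u)+x(v))^2(x(u)-x(v))^2 q(x)]$, which vanishes after repeated use of $\psE[(x_i^2-1)r(x)]=0$; and the edge (satisfiability) test passes because, on the branch measuring an edge $(u,v)$, its rejection polynomial factors modulo the boolean ideal as (a low-degree polynomial)$\cdot(1-g_{(u,v)}(x))$, which is killed by the perfectly satisfying pseudo-solution. Since $\ket{\psi_x}$ has only $O(1)$ registers, every test polynomial has degree $O(1)\ll\Omega(n)$, so $\psE$ cannot tell $\tilde\rho$ from an honest witness. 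This gives a degree-$\Omega(n)$ value-$(1,\,1-\tilde\Omega(1/n))$ integrality gap for $\qmat$, which is the first part of the theorem.

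\emph{Step 3: the DPS consequence, and the main obstacle.} As in \corref{DPS}, take $M_d:=M^{\Phi_n}$ with $d=\poly(n)$: this is a valid measurement on $\Sep(d,d)$ with $h_{\Sep}(M_d)=\OPT(\Psi_n)\leq 1-\tilde\Omega(1/n)$, while $\tilde\rho$ is accepted with probability $1$. The state $\tilde\rho$ is the $O(1)$-register reduced state of the fully permutation-symmetric state $\rho_x$ associated, as in \defref{conn}, to the degree-$\Omega(n)$ pseudo-solution, which lives on $\Omega(n)$ registers; restricting $\rho_x$ to $2k$ register-blocks supplies a $k$-extendible symmetric extension of $\tilde\rho$ for every $k\leq\Omega(n)$, and $\rho_x$ is PPT across every cut by the same index-symmetry argument as in \corref{DPS}. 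Hence $\tilde\rho$ lies in the DPS set at level $k$ for all $k\leq o(n)$, giving $\mathrm{DPS}_k(M_d)=1$ there. The main obstacle is Step 2 for the \emph{specific} tests of~\cite{GNN12}: one must check that the consistency test tying the two short proofs together---the ingredient that makes the small soundness gap possible---becomes a genuine polynomial identity under $\psE$, and that the color encoding used in $\ket{\psi_x}$ really keeps each edge test's rejection polynomial divisible by $1-g_{(u,v)}(x)$ modulo the boolean ideal; a secondary and routine point is confirming that the $\txor\Ra 3\text{-coloring}$ gadget reduction, together with the degree-reduction add-on, is at once $O(1)$-local and linearly gap-preserving, so that the high-degree PCP step is genuinely avoided.
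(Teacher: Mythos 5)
Your proposal follows the same route as the paper's proof: compose Grigoriev's $\txor$ integrality gap with the textbook $\txor\Ra 3\text{-coloring}$ gadget reduction (the paper invokes Proposition 2.27 of Goldreich and uses \propref{low-degree} with constant locality, just as you do); re-express the induced coloring via constant-degree indicator polynomials $g_{c,i}(x)$ and form the dishonest witness $\tilde\rho=\psE[\ketbra{\psi_x}\ot\ketbra{\psi_x}]$; and then use soundness of the~\cite{GNN12} protocol to get the $(1,1-\tilde\Omega(1/n))$ gap, with the DPS conclusion following exactly as in \corref{DPS} since $\tilde\rho$ is a two-register marginal of the degree-$\Omega(n)$ symmetric state from \defref{conn}.

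One detail to correct: your stated form for the uniformity/equipartition rejection polynomial, $\psE[(x(u)+x(v))^2(x(u)-x(v))^2 q(x)]$, is taken from the ABDFS-style uniformity test of \propref{witness_qmat}, not from the GNN protocol. The GNN uniformity test instead applies a quantum Fourier transform to the color register and, conditioned on outcome $0$, an inverse Fourier transform to the index register; the resulting rejection polynomial is a different constant-degree polynomial in the indicators $g_{c,i}(x)$, and the identity it must satisfy under $\psE$ comes from the fact that exactly one $g_{c,i}(x)$ equals $1$ for each $i$ (together with $\sum_i g_{c,i}(x)$ terms arising from the Fourier transform). You flag exactly this as the ``main obstacle,'' which is the right instinct. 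The paper itself is terse here, saying only that one verifies each test ``passes with certainty'' via a low-degree polynomial argument mirroring \propref{witness_qmat}; your write-up adds welcome structure but should replace the ABDFS-specific formula with the GNN one before claiming the step closed. Everything else — the locality argument giving $\kappa=O(1)$, the use of \lemref{poly_map_pseudo_distribution} to validate $\tilde\rho$ as a density matrix, the $(1-\tilde O(1/n),1-\eta)$ approximate embedding reduction to $h_{\Sep(n,n)}$, and the $k\leq o(n)$ DPS bound from the $\Omega(n)$-register symmetric extension — matches the paper.
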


\begin{proof}
The  protocol of~\cite{GNN12} solves an NP-hard
graph coloring problem in $\text{QMA}(2)$ with completeness $1$ and soundness
$1 - \tilde{\Omega}(1/n)$. Schematically, the proof of the hardness result
is:

\[ \txor(n) \Longrightarrow \textsc{Graph-3-coloring}(n) \Longrightarrow
\qmat \Longrightarrow h_{\Sep(n, n)}, \]
where $\textsc{Graph-3-coloring}(n)$ is the problem of deciding
whether a graph of $n$ vertices is 3-colorable, and $\qmat$ is defined
as before but with reference to the honest witnesses of the protocol
of~\cite{GNN12}. 
 To achieve hardness for $h_{\Sep}$, we need to show
that the reductions represented by the first two arrows of the diagram is
pseudosolution-preserving, and that the last arrow is an approximate
embedding reduction. The first arrow is a standard construction (see
for instance proposition 2.27 of the textbook by Goldreich~\cite{Goldreich08}), quite
similar to the gadget reduction for $\tofsat$ we considered earlier.
It is straightforward to verify that it satisfies the hypothesis
of proposition~\ref{prop:low-degree} with a constant value of $\kappa$; hence, it is pseudosolution
preserving. 

For the second arrow, we use a strategy similar
to the proof of Proposition~\ref{prop:witness_qmat}, arguing that a
pseudosolution to $\textsc{Graph-3-coloring}(n)$ can be turned into a
dishonest quantum witness state $\ket{\psi}$, and that each test of
the $\text{QMA}(2)$ protocol evaluates a low-degree polynomial on the
coefficients of $\ket{\psi}$, and thus passes with certainty. One
difference from the previous case is that
$\textsc{Graph-3-coloring}(n)$ is not strictly speaking a Boolean
problem. We remedy this by considering colorings that are
\emph{induced} through the reduction from an underlying assignment $x$
to the variables 3XOR instance. For every vertex $i$ and color $c$,
let $g_{c,i}(x)$ be equal to $1$ if vertex $i$ is colored with $c$ in
this induced coloring, and $0$ otherwise. Then, due to the locality of the gadgets in
the reduction from 3XOR, the functions $g_{c,i}(x)$ are
constant-degree polynomials in the variables $x$. Next, we need to
verify that the connection between pseudosolutions and dishonest
quantum witness states still holds in this protocol. Indeed, the
honest witness state in this protocol for a given assignment $x$ is
$\ket{\Psi_x}\bra{\Psi_x} = \left(\ketbra{\psi_x}\right)^{\otimes 2}$, where
\[ \ket{\psi_x} = \sum_{i,c} g_{c,i}(x) \ket{i} \ket{c}.\]
Here the index $i$ runs over the vertices of the graph. Following our
strategy in~\ref{prop:witness_qmat}, we choose the following dishonest
quantum witness state
\[ \rho = \psE_x[ \ketbra{\psi_x} \otimes \ketbra{\psi_x}]. \]
That this is indeed a valid quantum state follows from the properties
of the pseudoexpectation operator, and from
\lemref{poly_map_pseudo_distribution} (the functions $g_{c,i}(x)$
play the role of the polynomial map $p$ in the lemma). 
This gives us a degree-$\Omega(n)$ value $1$ pseudo-solution for $\qmat$. Finally,
the reduction in the last arrow is a $(1 - \tilde{O}(1/n), 1 -
\eta)$-approximate embedding reduction for some constant $\eta$, by
the soundness of the protocol. This yields a degree-$\Omega(n)$ lower
bound for SoS approximations $h_{\Sep(n,n)}$ achieving approximation
factor $(1, 1- \tilde{O}(1/n))$.
\end{proof}

The second result applies for multipartite separability. 
To obtain it,
we replace the protocol of~\cite{HM13} with that \cite{CD10}, which
is a $\text{QMA}(O(\sqrt{n}))$ protocol for 3-SAT with
completness $1 - \exp(-\Omega(\sqrt{n}))$ and soundness $1 -
\Omega(1)$, and which only performs Bell measurements
(i.e.~each party measures individually and then the outcomes are
classically processed). We use this to prove hardness for the problem
$h_{\Sep^{O(\sqrt{n})}(n)}(M)$, where $\Sep^k(n)$ means $k$-partite separable
states, and $M$ is restricted to be a bell measurement. The schematic diagram for this case is
\[ \txor(n) \implies \textsc{Graph-3-coloring}(n) \implies \qmat
\implies h_{\Sep^{O(\sqrt{n})}(n)}(M). \]
The arguments are very similar to those in the previous result; when
we work out the parameters, we obtain that SoS needs at least $\Omega(n)$
levels to achieves a $(1-\exp(-\Omega(\sqrt{n})),1 - \eta)$ approximation to
$h_{\Sep^{O(\sqrt{n})}(n)}(M)$ for general Bell measurements $M$, where
$\eta$ is an appropriately chosen constant.

\begin{theorem}\label{thm:sos-lb-multipartite}
  For a sufficiently small constant $\eta$, there exists a family of
  measurements $M_n$ over $O(\sqrt{n})$-partite states with local
  dimension $n$, such that $h_{\Sep^{O(\sqrt{n})}(n)}(M) \leq 1 -
  \eta$, but $\mathrm{DPS}_k(M) \geq 1- \exp(-\Omega(\sqrt{n}))$ for
  all $k \leq o(n)$.
\end{theorem}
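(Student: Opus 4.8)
The plan is to run the reduction chain
\[ \txor(n) \Longrightarrow \textsc{Graph-3-coloring}(n) \Longrightarrow \qmat \Longrightarrow h_{\Sep^{O(\sqrt{n})}(n)}(M), \]
exactly as in the proofs of Corollary~\ref{cor:DPS} and Theorem~\ref{thm:sos-lb-gnn}, but replacing the ABDFS/Harrow--Montanaro construction by the Chen--Drucker $\QMA(\tilde{O}(\sqrt{n}))$ protocol~\cite{CD10} applied to the $3$-coloring instance, whose decisive feature is that it performs only Bell measurements. Starting from Grigoriev's degree-$\Omega(n)$ value-$(1,\tfrac12+\eps)$ integrality gap for $\txor$ with its perfectly satisfying pseudo-solution (Proposition~\ref{prop:grigoriev}), the first arrow is the textbook constant-locality gadget from $\txor$ to $3$-coloring: it maps true value $\tfrac12+\eps$ to true value $\le 1-\Omega(1)$, and since each vertex's colour depends on only $O(1)$ of the $\txor$ variables, Proposition~\ref{prop:low-degree} converts the $\txor$ pseudo-solution into a degree-$\Omega(n)$ perfectly satisfying pseudo-solution for the $3$-coloring instance. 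As in Theorem~\ref{thm:sos-lb-gnn} we sidestep the non-Boolean nature of $3$-coloring by working only with colourings \emph{induced} from an assignment $x$ to the $\txor$ variables: for each vertex $i$ and colour $c$ the indicator $g_{c,i}(x)\in\{0,1\}$ is a constant-degree polynomial in $x$ with $\sum_c g_{c,i}(x)=1$ identically.

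We then define $\qmat$ (now with respect to the Chen--Drucker protocol) as the honest-prover acceptance probability $\Psi(x)=\Tr\big[M^\Psi\,\ketbra{\psi_x}^{\otimes K}\big]$, where $K=\tilde{O}(\sqrt{n})$ and $\ket{\psi_x}\propto\sum_{i,c}g_{c,i}(x)\ket{i}\ket{c}$; because the $g_{c,i}$ have constant degree and $K=\tilde{O}(\sqrt{n})$, $\Psi$ is a degree-$\tilde{O}(\sqrt{n})$ polynomial in $x$, so $\qmat$ is a Boolean polynomial optimization problem. The integrality gap for $\qmat$ is established just as in Proposition~\ref{prop:witness_qmat}: the $\txor$-derived pseudo-solution $\mu$ is simultaneously a degree-$\Omega(n)$ pseudo-solution for $\qmat$ of value $\ge 1-\exp(-\Omega(\sqrt{n}))$. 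Indeed $\tilde{\rho}:=\psE_{x\sim\mu}\big[\ketbra{\psi_x}^{\otimes K}\big]$ is a genuine quantum state (by the correspondence of Definition~\ref{def:conn}), it is permutation-symmetric, and its vertex-register marginal is uniform (as $\sum_c g_{c,i}=1$), so it passes the ``structural'' tests of the protocol --- symmetry, consistency, and uniformity checks among the $K$ copies --- with certainty; the remaining ``satisfiability'' tests only inspect whether the two colour registers attached to a single edge agree, hence evaluate to polynomials of degree $\tilde{O}(\sqrt{n})=o(n)$ that vanish under $\psE_\mu$ because $\mu$ perfectly satisfies the edge constraints (using the hypercube manipulations $\psE_\mu[(g_{c,i}^2-g_{c,i})q]=0$ and $\psE_\mu[g_{c,i}g_{c',i}q]=0$ for $c\ne c'$, exactly as in Proposition~\ref{prop:witness_qmat}). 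Combined with Chen--Drucker soundness and $\OPT\le 1-\Omega(1)$ for the $3$-coloring instance, this yields $\psE_{x\sim\mu}[\Psi(x)]\ge 1-\exp(-\Omega(\sqrt{n}))$ while $\OPT(\Psi)\le 1-\eta$.

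Finally the last arrow is an embedding reduction: take $M:=M^\Psi$, a Bell measurement on $K=O(\sqrt{n})$ systems of local dimension $O(n)$, and $E(x):=\ketbra{\psi_x}^{\otimes K}\in\Sep^K(n)$, so that $\Psi(x)$ equals $h_{\Sep^K(n)}(M)$ evaluated at $E(x)$; soundness of the protocol over \emph{arbitrary} separable witnesses then gives $h_{\Sep^K(n)}(M)\le 1-\eta$. The pseudo-state $\tilde{\rho}$ extends to $\psE_{x\sim\mu}\big[\ketbra{\psi_x}^{\otimes m}\big]$ for $m$ as large as $\Omega(n)$ (the amplitudes of $\ket{\psi_x}$ have constant degree while $\mu$ has degree $\Omega(n)$); this state is invariant under \emph{all} permutations of its subsystems, and since the $g_{c,i}$ are real it is also invariant under the partial transpose of any subset, hence PPT across every cut. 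Thus, exactly as in Corollary~\ref{cor:DPS}, $\tilde{\rho}$ is accepted by the multipartite DPS relaxation up to level $k=\tilde{\Omega}(\sqrt{n})$, so $\mathrm{DPS}_k(M)\ge\Tr[M\tilde{\rho}]\ge 1-\exp(-\Omega(\sqrt{n}))$ throughout that range, and relabelling the $O(n)$-dimensional colour-and-vertex register as $n$-dimensional gives the theorem. I expect the only real work to be the second step: re-deriving, for the Chen--Drucker protocol in place of ABDFS, that each of its tests is a low-degree polynomial in the coefficients of $\ket{\psi_x}$ and that the structural tests pass on $\tilde{\rho}$ with certainty while the consistency and satisfiability tests are annihilated by the perfectly satisfying pseudo-solution; everything else is bookkeeping within the framework of Section~\ref{sec:lb_framework}.
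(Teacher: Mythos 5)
Your proposal follows exactly the route the paper itself takes: the paper proves Theorem~\ref{thm:sos-lb-multipartite} only by a brief sketch in the paragraph preceding the statement, giving precisely the chain $\txor \Rightarrow \textsc{Graph-3-coloring} \Rightarrow \qmat \Rightarrow h_{\Sep^{O(\sqrt{n})}(n)}$, replacing Harrow--Montanaro with Chen--Drucker, and invoking ``arguments very similar to the previous result.'' Your fill-in is faithful to that plan (induced colourings $g_{c,i}$, the correspondence of Definition~\ref{def:conn}, PPT via realness of the amplitudes, Lemma~\ref{lem:poly_map_pseudo_distribution} / Proposition~\ref{prop:low-degree} for the low-degree pseudosolution transport).

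Two minor imprecisions worth flagging. First, the claim that the pseudo-state ``passes the structural tests with certainty'' cannot be quite right: the Chen--Drucker protocol has completeness only $1-\exp(-\Omega(\sqrt{n}))$ even on an honest product witness, and that loss lives precisely in the Bell-measurement structural/uniformity checks, not in the satisfiability check. The correct argument is that $1-\Psi(x)$ splits into a constant $\leq \exp(-\Omega(\sqrt{n}))$ (the intrinsic test-failure rate, independent of $x$) plus terms that contain a factor annihilated by the perfectly-satisfying pseudo-expectation; you land on the right final bound $\psE_\mu[\Psi]\geq 1-\exp(-\Omega(\sqrt{n}))$, but the ``with certainty'' phrasing should be replaced by this decomposition. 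Second, your concluding ``accepted up to level $k=\tilde{\Omega}(\sqrt{n})$'' conflicts with both the theorem's $k\leq o(n)$ and your own (correct) observation that the pseudo-state extends to $\Omega(n)$ registers: the paper is counting DPS level as SoS degree, i.e.\ total registers in the symmetric extension, not as number of copies of the full $K$-partite witness, so the right statement from your extension is $k\leq o(n)$.
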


%
%
%

\section{\SDPEF{} Lower Bounds for $h_{\Sep}$}\label{sec:lrs}
In this section, we will leverage our SoS lower bounds to prove a
lower bound on the size of \emph{any} SDP relaxation approximating
$h_{\Sep}$, provided the relaxation is of a certain type called an extended formulation. To that end, we make use of a recent result of Lee, Raghavendra,
and Steurer that relates extension complexity to SoS degree. We start by reviewing the
techniques used by them.

\subsection{\SDPEF{} lower bounds from LRS} \label{sec:LRS}
In this section we illustrate the recent celebrated result of Lee, Raghavendra,
and Steurer (LRS)~\cite{LRS15} lower-bounding the size of \SDPEF{}s of boolean polynomial optimization problems in terms of the sum-of-squares degree.
We will restate their main result (in a slightly more general form) in
our current framework. Note that the LRS result plays a crucial role
in extending our results on SoS hierarchies to more general SDP relaxations.
Our contribution to this topic can be viewed more broadly as developing
techniques to apply LRS to general optimization problems.

To that end, we first define the notion of \SDPEF{}s as follows. 

\begin{definition}[SDP] \label{def:sdp}
A semidefinite program (SDP) $A$ is an optimization problem, parametrized by $n \in \mathbb{N}$,
with the following restrictions.
\begin{itemize}
 \item The feasible set $\cP^A_n$ is a spectrahedron contained in $L_\geq(\R^{r})$, where $r=r(n)$ is called the size of this SDP and $L_\geq(V)$ denotes the set of positive-semidefinite matrices.   By spectrahedron we mean simply a space of the form $W\cap L_\geq(\R^{r})$ for $W$ an affine subspace of $L(\R^{r})$.
 \item Any instance $\Phi^A_n$ is an affine function from $L(\mathbb{R}^{r})$ to $[0,1]$. 
\end{itemize}
\end{definition}

\begin{definition}[\SDPEF{}] \label{def:sdp_relax}
For any optimization problem $A$, an SDP $B$ is called a
\emph{$(c,s)$-approximate extended formulation} of $A$ if there exists an
\emph{embedded} reduction $R_{A \Rightarrow B}$ that is \emph{$(s^B = c,s^A =s)$-approximate}.
\end{definition}
We note that the \SDPEF{} defined above is a more stringent
concept than the conventional notion of an SDP \emph{relaxation}, both because
of the embedding property and because we require that the constraints do not
depend on the objective function.  The intuition behind this definition is that one can decide whether a given
instance of problem $A$ has value $\geq c$ (the ``yes'' case) or $< s$ (the
``no'' case) by solving the SDP corresponding to
$(c,s)$-approximate extended formulation of the problem $A$; if the objective
value of the SDP is at least $c$, one concludes that one is in the ``yes'' case, and
otherwise, that one is in in the ``no'' case.

It is not hard to see that any \SDPEF{} for $B$ is also a \SDPEF{} for $A$ if there is an embedded reduction from $A$ to $B$ with matching approximation parameters. Precisely, 

\begin{proposition} \label{prop:sdp_extend}
Let a SDP $C$ be a $(s^C, s^B)$-approximate \SDPEF{} of an optimization problem $B$. If there is an embedded reduction $R_{A \Rightarrow B}$ that is $(s^B, s^A)$-approximate, then $C$ is a $(s^C, s^A)$-approximate \SDPEF{} of the optimization problem $A$.
\end{proposition}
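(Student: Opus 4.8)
The plan is to reduce the statement to the transitivity of the structural properties of reductions, i.e.\ to Proposition~\ref{prop:transitivity}. First I would unfold Definition~\ref{def:sdp_relax}: since the SDP $C$ is a $(s^C,s^B)$-approximate \SDPEF{} of $B$, by definition there exists an \emph{embedded} reduction $R_{B \Rightarrow C}$ that is \emph{$(s^C,s^B)$-approximate}. By hypothesis we are also handed an embedded reduction $R_{A \Rightarrow B}$ that is $(s^B,s^A)$-approximate. Let $R_{A \Rightarrow C} := R_{B \Rightarrow C} \circ R_{A \Rightarrow B}$ be the natural composition; this is again a map from $\Delta^A$ to $\Delta^C$, hence a reduction in the sense of Definition~\ref{def:reduction}.

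Next I would invoke Proposition~\ref{prop:transitivity} directly: the composition of two embedded reductions is embedded, and the composition of a $(s^B,s^A)$-approximate reduction with a $(s^C,s^B)$-approximate reduction is $(s^C,s^A)$-approximate. (The monotonicity requirement $s^A \leq s^B \leq s^C$ built into Definition~\ref{def:reduction_prop} holds by the chain of inequalities coming from the two given reductions.) Therefore $R_{A \Rightarrow C}$ is an embedded $(s^C,s^A)$-approximate reduction from $A$ into the SDP $C$. Re-folding Definition~\ref{def:sdp_relax}, this is precisely the assertion that $C$ is a $(s^C,s^A)$-approximate \SDPEF{} of $A$; in particular the size $r=r(n)$ of the SDP is unchanged, since $C$ itself — its spectrahedral feasible set and its family of instances — is never modified, only the subset of $\Delta^C$ onto which we map changes.

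The only point that needs a moment's care, and it is a very mild one, is to confirm that passing to the composed reduction does not violate the extra rigidity demanded of $C$ as an SDP in Definitions~\ref{def:sdp} and~\ref{def:sdp_relax}: namely that $\cP^C_n$ is a spectrahedron and that each $\Phi^C_n$ is an affine function on $L(\R^{r})$ that does not depend on the objective. These are intrinsic properties of $C$ and are automatically preserved, because the composition only affects which instance of $C$ is produced from a given instance of $A$. So the proof is immediate once Proposition~\ref{prop:transitivity} is available, and there is no genuine obstacle beyond bookkeeping the parameters.
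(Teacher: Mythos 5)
Your proof is correct and is exactly the argument the paper gives: unfold Definition~\ref{def:sdp_relax} to get an embedded $(s^C,s^B)$-approximate reduction $R_{B\Rightarrow C}$, compose with $R_{A\Rightarrow B}$, and invoke the transitivity of the embedding and approximation properties from Proposition~\ref{prop:transitivity}. The extra remarks about the SDP $C$ itself being unchanged are sound but not strictly needed, since the definition only asks for a reduction into $C$.
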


\begin{proof}
This claim follows by definition and the transitivity of approximating and embedding properties of reductions in Proposition~\ref{prop:transitivity}.
\end{proof}

The above definitions allow us to translate lower bounds on SDPs for one problem to another. To obtain such lower bounds in the first place, we use a technique developed by~\cite{LRS15}, which bounds the positive semidefinite rank of a particular ``pattern matrix.'' In particular, suppose we would like to give an SDP lower bound on $(c,s)$-approximations for an optimization problem. The first observation that LRS make is that to achieve this, it suffices to lower bound the psd rank of the matrix $M^n_{(c,s)}$, whose rows are indexed by instances $\Phi$ whose true optimum value is $\leq s$, and whose columns are indexed by feasible points $x \in \{0,1\}^n$. The value of an entry of this matrix is given by $M^n_{(c,s)}(\Phi, x) = c - \Phi(x)$. Note that all entries of this matrix are nonnegative by construction, so the psd rank is well defined. The second key observation of LRS is that there is a relation between SoS degree of the function $c - \Phi(x)$, and the psd rank of a \emph{different} matrix
\[ M^n_\Phi : [n]^{m} \times \{0,1\}^n \mapsto \mathbb{R}_{\geq 0}, M^n_\Phi(S,x) = c - \Phi(x_S). \]
Here the rows are indexed by subsets $S$ of size $m$, and the notation $x_S$ means the values of $x$ on the coordinates in the subset $S$. The key technical lemma of LRS is the following:
\begin{lemma}[Theorem 3.8 of~\cite{LRS15}]\label{lem:LRS}
  Suppose $\Phi$ is an instance of an optimization problem over $m$ variables, and $\mathrm{deg_{SoS}}(c - \Phi(x)) \geq d$. Then for $n \geq  m^{d/4}$, $\mathrm{rk_{psd}}(M^n_\Phi) \geq \Omega(m^{d^2/8})$. 
\end{lemma}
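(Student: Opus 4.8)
\noindent The plan is to follow the argument of Lee, Raghavendra and Steurer~\cite{LRS15}, which proves exactly this statement; the only thing that needs checking on our side is that it applies to boolean optimization problems $\Phi$ of the form we use, which it does since those are precisely the class \cite{LRS15} treats. The argument runs by contraposition: from a small positive-semidefinite factorization of the pattern matrix $M^n_\Phi$ one manufactures a low-degree sum-of-squares certificate for $c-\Phi$ on $\{0,1\}^m$, contradicting $\degsos(c-\Phi)\ge d$. So the first thing I would do is fix a psd factorization $M^n_\Phi(S,x)=\langle P_S,Q_x\rangle$ of rank $r$, with each $P_S$ and $Q_x$ an $r\times r$ positive-semidefinite matrix, and regard the columns $x\mapsto Q_x$ as a matrix-valued function on $\{0,1\}^n$, with Fourier expansion $Q_x=\sum_{T\subseteq[n]}\widehat Q(T)\,\chi_T(x)$; after normalizing so that $\E_x\operatorname{tr}Q_x$ is bounded, the total Fourier mass $\sum_T\|\widehat Q(T)\|_F^2$ is controlled.

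The crux is a random restriction. Keep a uniformly random $m$-subset $I\subseteq[n]$ free and fix the remaining $n-m$ coordinates to a uniformly random assignment $z$. Because the rows of the pattern matrix range over \emph{all} size-$m$ subsets $S$ (this is exactly the ``self-reducible'' structure the problem has to have), for the particular choice $S=I$ the restricted matrix reproduces the size-$m$ instance: $\langle P_I,Q_{(y,z)}\rangle=c-\Phi(y)$ for all $y\in\{0,1\}^I$. One then shows that in expectation over $(I,z)$ the Fourier weight of the restricted matrix-valued function $y\mapsto Q_{(y,z)}$ above degree $\Theta(d)$ is negligible, provided $n$ is large enough relative to $m$ and $r$ — this is precisely where the hypothesis $n\ge m^{d/4}$ is used — so that truncating $Q_{(\cdot,z)}$ to its low-degree part perturbs $\langle P_I,\cdot\rangle$ only by an error one can absorb. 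The truncated object $\widetilde Q(y)$ is then a degree-$O(d)$, approximately positive-semidefinite, matrix-valued function with $\langle P_I,\widetilde Q(y)\rangle\approx c-\Phi(y)$; writing $\widetilde Q(y)=\sum_i v_i(y)v_i(y)^{\dagger}$ with each $v_i$ a vector of degree-$O(d)$ polynomials, and then repairing the approximation error so that the resulting expression is a genuine nonnegative combination of squares, produces a degree-$O(d)$ SoS proof that $c-\Phi\ge 0$ on $\{0,1\}^m$. This contradicts $\degsos(c-\Phi)\ge d$, and chasing the quantitative trade-offs among $m$, $n$, $d$ and the rank $r$ through the estimates above forces $r\ge\Omega(m^{d^2/8})$.

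The main obstacle is the random-restriction analysis itself: controlling the high-degree Fourier tail of a matrix-valued function that is only assumed positive-semidefinite with bounded trace (and not small in any stronger norm), and then converting the resulting \emph{approximately} psd, low-degree Gram object back into an \emph{exact} low-degree sum-of-squares certificate without inflating the degree. Both of these are the technical heart of \cite{LRS15} (their pattern-matrix method together with the correspondence between SoS degree and psd rank), and I would simply invoke those statements rather than re-derive them; the adaptation to our setting is then immediate because the instances $\Phi$ to which we apply the lemma are boolean polynomial optimization problems of exactly the type covered there.
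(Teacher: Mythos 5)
Your proposal and the paper take the same route: Lemma~\ref{lem:LRS} is invoked verbatim as Theorem~3.8 of~\cite{LRS15} and is not re-proved in this paper, and your write-up ultimately does the same, sketching the LRS random-restriction/Fourier-truncation argument before explicitly deferring to~\cite{LRS15} for the technical heart. Your high-level description of their mechanism (psd factorization of the pattern matrix, random restriction to a size-$m$ window, low-degree Fourier concentration yielding an approximate low-degree SoS certificate, then repairing it to an exact one) is a faithful summary of what happens inside LRS, so there is nothing to correct.
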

To relate this to the original problem, we need to show that $M^n_\Phi$ is a \emph{submatrix} of $M^n_{(c,s)}$.

While this approach is so far the most successful route to general SDP
lower bounds, it has two limitations. First, the requirement that $n \geq
m^{d/4}$ implies that the lower bound on psd rank (and hence SDP size)
obtained will never be better than \emph{quasi-polynomial} in $n$. This
requirement seems essential for the random restriction analysis which
is central to the proof of LRS. This means that the bounds obtained
via this method can be much looser than the SoS lower bounds they are
based on. A second limitation appears when we try to use the technique
for settings \emph{other} than CSPs. Essentially, the problem is that
we need to interpret an instance of the problem on $m$ variables as an
instance on $n \gg m$ variables, in order for the matrix $M^n_\Phi$ to be
a submatrix of $M^n_{(c,s)}$. This is straightforward in the case of
CSPs, but not for other problems. For instance, the problems we will
consider in this work arise from particular quantum proof protocols;
these protocols involve states that are superpositions over all of the
variables in the problem, and as a result break down when only a small
number of variables enter into the objective function. As a result of
these limitations, we will only be able to obtain SDP lower bounds in
\emph{some} of the cases where we have SoS lower bounds, and even in
those cases, our parameters will be worse than those of the SoS
results. We consider it a major open problem to improve on these
techniques and prove tighter SDP lower bounds. 

\subsection{Applying LRS to $h_{\Sep}$}
To apply the LRS techniques to $h_{\Sep}$, we need to re-examine each reduction in our chain of 
reductions to ensure the embedding property. First, we need to investigate in more detail the 3-coloring proof system of~\cite{GNN12}, which consists of the following steps:

\begin{enumerate}
\item The verifier receives a product state $\ket{\psi_A}\otimes \ket{\psi_B}$ from the two provers. Each prover's state consists of one register of $\log(n)$ qubits, holding an index from $1$ to $n$, and a second register consisting of a single qutrit, whose three states correspond to the three possible colors in the graph.
\item The verifier performs one of the following tests:
  \begin{itemize}
  \item \textbf{Uniformity test:} For each proof state, the verifier performs a quantum Fourier transform on the color register, and measures it. If he obtains $0$, then he performs an inverse Fourier transform on the index register, and measures it. He accepts if he measures $0$ and rejects otherwise.
  \item \textbf{Satisfiability test:} The verifier measures both proof states in the computational basis, obtaining two tuples $(i, c_i)$ and $(j, c_j)$ of indices and colors. If there is no edge between $i$ and $j$ in the graph, then the verifier accepts. If there is an edge, then the verifier accepts if $c_i \neq c_j$ and rejects otherwise. 
  \end{itemize}
\end{enumerate}
For this protocol, the honest witness states are those of the form
\[ \ket{\Psi} = \left( \frac{1}{\sqrt{n}} \sum_i \ket{i}\otimes \ket{c_i} \right)^{\otimes 2}. \]

When we try to apply LRS directly to the problem $\qmat$ for this
protocol, we run into several obstacles:
\begin{enumerate}
\item LRS requires that the feasible set of the optimization problem
  be the entire Boolean cube $\{0,1\}^n$. This means that we cannot
  take the problem $\qmat$ to be an optimization over all colorings,
  but rather we must restrict ourselves to colorings \emph{induced} by
  Boolean assignments to the variables of an underlying $\txor$
  instance, as we did in the proof of~\thmref{sos-lb-gnn}.
\item The embedding property of LRS means that the SDP feasible point
  corresponding to an assignment $x$ must be \emph{independent} of the
  objective function, i.e. the choice of graph. However, in the
  construction given in the proof of~\thmref{sos-lb-gnn}, the induced 
  coloring depends on the graph.
\end{enumerate}
To address these issues, we make some tweaks to the protocol. First,
for every input size $n$, we choose a \emph{universal} graph $G_n$,
which is induced by a $\txor$ instance with a \emph{complete}
constraint graph. For every assignment $x$ to the $\txor$ variables,
we let the induced coloring $c$ be the coloring induced on the
universal graph $G_n$. It is not hard to see that, in the standard
gadget reduction, the graph obtained will be a subgraph of this
universal graph $G_n$, and the induced coloring will match the
universal induced coloring. 

Now, using this modified protocol, we can prove our main
result. First, we state the soundness property of the protocol in a
form that will be useful to us.
\begin{lemma}[Soundness analysis of~\cite{GNN12}]\label{lem:gnn-soundness}
  There exist a constant $\eta < \frac{1}{2}$ such that if $\Phi$ is a
  $\txor$ instance on $n$ variables with $O(n)$ clauses and value at most $1 -
  \eta$\footnote{In~\cite{GNN12}, these properties are obtained by applying
    Dinur's PCP theorem.}, then the LNN
  protocol accepts with probability at most $1 - \Omega(1/(n \polylog
  n))$. Moreover, for any unsatisfiable $\Phi$ on $n$ variables, then the acceptance
  probability of LNN is at most $1 -\Omega(1/n^2)$.
\end{lemma}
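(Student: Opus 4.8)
The plan is to recover Lemma~\ref{lem:gnn-soundness} from the soundness analysis of~\cite{GNN12}, adapted to the modified (universal-graph) protocol described above, so the effort is largely bookkeeping rather than new ideas. First I would reduce to a clean worst case: since this is a $\QMA(2)$ protocol and in every application the witness of interest is separable (the associated optimum is an $h_{\Sep}$ value), it suffices by convexity to bound the acceptance probability on a \emph{pure product} witness $\ket{\alpha}\otimes\ket{\beta}$, each half a unit vector in $\bbC^n\otimes\bbC^3$ (a $\log n$-qubit index register tensored with a color qutrit). Write $\ket{\alpha}=\sum_{i,c}\alpha_{i,c}\ket{i}\ket{c}$ and likewise for $\ket{\beta}$, and let $\gamma$ be the larger of the rejection probabilities of the uniformity-type tests on $\ket{\alpha}$ and on $\ket{\beta}$ (conditioned on such a test being chosen). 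The argument then dichotomizes on the size of $\gamma$: if $\gamma$ already exceeds the target bound we are done, so the interesting case is $\gamma$ small.

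For $\gamma$ small I would invoke the Fourier-analytic heart of~\cite{GNN12}: passing the uniformity test with probability $1-\gamma$ forces each prover's index marginal to be close to uniform, forces almost all the mass at each index to sit on a single color, and forces the two provers to use essentially the same coloring---quantitatively, there is a coloring $c:[n]\to\{0,1,2\}$ such that $\ket{\alpha}$ and $\ket{\beta}$ are each within distance $\epsilon$ of the honest state $\tfrac{1}{\sqrt{n}}\sum_i\ket{i}\ket{c_i}$, where $\epsilon$ is controlled by $\gamma$ up to factors polynomial in $n$, and it is here that the $\polylog n$ factor in the final bound originates. I would isolate the precise quantitative statement I need and cite the corresponding lemma of~\cite{GNN12} for its proof. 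Re-deriving this step for the modified protocol built on the universal graph $G_n$---checking that the quantum Fourier transforms over $\bbZ_n$ (index) and $\bbZ_3$ (color) still behave as the analysis requires, and that the prover-consistency part is unaffected---is the step I expect to be the main obstacle; everything else is elementary.

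Given this near-honest structure, the satisfiability test is easy. Measuring $\ket{\alpha}$ and $\ket{\beta}$ in the computational basis produces $(i,c_i)$ and $(j,c_j)$ with $(i,j)$ within total variation $O(\epsilon)$ of a uniformly random ordered pair of vertices. For the first claim I would use that the underlying $\txor$ instance has value at most $1-\eta$, so at least an $\eta$-fraction of its $O(n)$ clauses is violated by the assignment inducing $c$; by the locality of the standard gadget reduction from $\txor$ to $3$-coloring (the same reduction used in~\propref{low-degree} and~\thmref{sos-lb-gnn}), each violated clause forces $\Omega(1)$ monochromatic edges, so an $\Omega(1)$-fraction of the $\Theta(n)$ edges are monochromatic. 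A uniform ordered pair of vertices lands on an edge with probability $\Theta(1/n)$ and, conditioned on that, on a monochromatic one with probability $\Omega(1)$, so the satisfiability test rejects with probability $\Omega(1/n)-O(\epsilon)$. For the ``moreover'' part I would use only that an unsatisfiable $\Phi$ has at least one clause violated by every assignment, hence at least one monochromatic edge; since a uniform ordered pair of vertices equals a prescribed edge with probability $\Omega(1/n^2)$, the satisfiability test then rejects with probability $\Omega(1/n^2)-O(\epsilon)$.

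Finally I would combine the tests. As the protocol picks the uniformity or the satisfiability test with constant probability each, its overall rejection probability is at least a constant times $\max\bigl(\gamma,\; r_{\mathrm{sat}}-O(\epsilon)\bigr)$, where $r_{\mathrm{sat}}$ is $\Omega(1/n)$ in the first case and $\Omega(1/n^2)$ in the second and $\epsilon$ is bounded by a polynomial in $n$ times $\gamma$. Choosing the threshold that separates ``$\gamma$ small'' from ``$\gamma$ large'' exactly as in~\cite{GNN12}---so that the polynomial-in-$n$ amplification of the state error never swamps $r_{\mathrm{sat}}$ before $\gamma$ itself reaches the target---gives overall rejection probability $\Omega(1/(n\polylog n))$ in the first case and $\Omega(1/n^2)$ in the second. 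I would then double-check that the only $\polylog n$ loss enters through the uniformity-test step, so that the two stated exponents come out as claimed.
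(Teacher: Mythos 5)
The paper does not give its own proof of this lemma: it is stated as a black-box citation to the soundness analysis of~\cite{GNN12} and used directly in~\thmref{lrs-gnn}, so your proposal is necessarily a reconstruction rather than a match-or-mismatch with an in-paper argument. Judged on its own terms, the high-level structure (dichotomize on the uniformity-test rejection rate, deduce near-honest structure, and then do elementary combinatorics on the satisfiability test with edge density $\Theta(1/n)$ and a single violated edge giving $\Omega(1/n^2)$) is the right shape, and the endgame is correct.

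However, there is a real gap in the middle step as you state it. You claim that passing the uniformity tests alone forces each prover's state to put almost all its index mass on a single color and forces the two provers to use essentially the same coloring. The uniformity test, as described in the paper, acts on each prover's register \emph{independently}; a direct calculation shows its rejection probability on $\ket{\alpha}=\sum_{i,c}\alpha_{i,c}\ket{i}\ket{c}$ depends only on the vector $\beta_i:=\sum_c\alpha_{i,c}$ (it rejects only when $\beta$ is far from a multiple of $\onevec$). In particular, the pair of dishonest states $\ket{\alpha}=\tfrac{1}{\sqrt n}\sum_i\ket{i}\ket{0}$, $\ket{\beta}=\tfrac{1}{\sqrt n}\sum_i\ket{i}\ket{1}$ passes both uniformity tests perfectly while fooling the satisfiability test on every graph. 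So the uniformity test cannot by itself enforce either ``one color per index'' or inter-prover consistency; those must come from an additional \emph{swap/consistency test} between $\ket{\alpha}$ and $\ket{\beta}$ (present in the actual LNN protocol, though elided from the two-test description given in this paper). Your argument would be correct if you assign the ``near-honest, shared coloring'' conclusion to the swap test plus uniformity test together and cite~\cite{GNN12} for that quantitative closeness statement; as written, it assigns consequences to a test that cannot deliver them, and a careful reader will notice the counterexample above.
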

\begin{theorem}\label{thm:lrs-gnn}
Any \SDPEF{}
achieving a $(1-\eps(d), 1-\delta(d))$-approximation to
$h_{\Sep(d,d)}$ where $\delta(d) = O(1/d^2)$ and $\eps(d) < \delta(d)$
has size at least $d^{\log d / \poly\log\log d}$.
\end{theorem}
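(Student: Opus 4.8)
The plan is to push the degree-$\Omega(m)$ SoS lower bound for the $\qmat$ problem attached to the (modified, universal-graph) GNN $3$-coloring protocol through the LRS machinery of \lemref{LRS}, along the chain $\txor \Rightarrow \textsc{Graph-3-coloring} \Rightarrow \qmat \Rightarrow h_{\Sep(d,d)}$. Concretely, start from a Grigoriev $\txor$ instance on $m$ variables with value $\le \frac{1}{2}+\eps$ and a perfectly satisfying degree-$\Omega(m)$ pseudo-solution (\propref{grigoriev}). As in the proof of \thmref{sos-lb-gnn}, the first two reductions have constant locality, so by \propref{low-degree} they yield a $\qmat$ instance $\Psi^\ast$ whose objective is a degree-$O(1)$ polynomial in the $m$ Boolean variables and which carries a degree-$\Omega(m)$ value-$1$ pseudo-solution; since the underlying $\txor$ value is $\le \frac{1}{2}+\eps \le 1-\eta$, the soundness clause of \lemref{gnn-soundness} forces the true value of $\Psi^\ast$ to be $\le 1-\tilde\Omega(1/m)$, which is in particular $\le 1-\delta$ for $\delta = O(1/d^2)$ with enormous room to spare. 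Because the pseudo-solution has value $1 > 1-\eps$, this gives $\degsos\big((1-\eps) - \Psi^\ast\big) = \Omega(m)$ for the parameter $\eps = \eps(d) < \delta(d)$ of the statement.

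Next I would verify the \emph{embedded} property (\defref{reduction_prop}) of every arrow, so that the pattern matrix $M^N_{\Psi^\ast}$ of \lemref{LRS} is literally a submatrix of the matrix $M^N_{(1-\eps,\,1-\delta)}$ whose psd-rank lower-bounds the size of any $(1-\eps,1-\delta)$-approximate extended formulation of $h_{\Sep(d,d)}$ (by the first LRS observation together with \propref{sdp_extend}). The gadget reductions $\txor\Rightarrow\textsc{3-coloring}\Rightarrow\qmat$ become embedded once one works with colorings \emph{induced} by an underlying $\txor$ assignment, and the last arrow is embedded because the honest witness $\ketbra{\psi_x}^{\otimes 2}$ is a separable state and the soundness of the protocol bounds $\Tr[M\rho]$ over \emph{all} separable $\rho$, not just honest ones. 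The genuinely new and delicate point—and the reason for passing to the fixed \emph{universal graph} $G_N$ induced by the complete $\txor$ constraint graph on $N$ variables—is the self-reducibility required by LRS: for every length-$m$ tuple $S$ of the $N$ inflated variables one needs a bona fide $N$-variable $\qmat$ instance whose objective, as a function of $x\in\{0,1\}^N$, equals $x\mapsto\Psi^\ast(x_S)$, while the SDP feasible point attached to $x$ must be the \emph{same} regardless of $S$. Fixing $G_N$ and letting every $x$ induce one coloring of $G_N$ makes the honest-witness map $x\mapsto\ketbra{\psi_x}^{\otimes 2}$ instance-independent, and any gadget-reduced instance (in particular $\Psi^\ast$ placed on the coordinates of $S$) appears as a subgraph of $G_N$ with matching induced coloring, which is what lets $M^N_{\Psi^\ast}$ sit inside $M^N_{(1-\eps,\,1-\delta)}$. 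Reconciling the ``uniform superposition over all variables'' form of the honest witness with the requirement that it localize correctly onto each subset $S$ is the step I expect to be the main obstacle, and it is precisely where $h_{\Sep}$ is harder than the games setting.

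Finally one invokes \lemref{LRS} with the degree-$r$, $r=\Omega(m)$, lower bound above: taking $N := m^{r/4}$ variables, the lemma gives $\mathrm{rk_{psd}}(M^N_{\Psi^\ast}) \ge \Omega(m^{r^2/8}) = m^{\Omega(m^2)}$, so any $(1-\eps,1-\delta)$-approximate extended formulation of $h_{\Sep(d,d)}$ has size at least $m^{\Omega(m^2)}$. It remains to express this in terms of the local dimension: the $\qmat\Rightarrow h_{\Sep}$ reduction for the GNN protocol produces local dimension $d = \poly(|V(G_N)|) = \poly(N)$, so $\log d = \Theta(m\log m)$, whence $m = \Theta(\log d/\log\log d)$; substituting into $\log(\mathrm{size}) = \Omega(m^2\log m)$ gives $\log(\mathrm{size}) = \Omega((\log d)^2/\log\log d)$, i.e. $\mathrm{size} \ge d^{\,\Omega(\log d/\log\log d)} \ge d^{\log d/\poly\log\log d}$ after absorbing the constant and $\polylog$ losses from the reductions. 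Two consistency checks close the argument: the instances have $h_{\Sep}$-value $\le 1-\tilde\Omega(1/m) \ll 1 - 1/d^2$, so they are legitimate rows of $M^N_{(1-\eps,\,1-\delta)}$ whenever $\delta = O(1/d^2)$ and $\eps<\delta$; and the constraint $N\ge m^{r/4}$ is exactly what forces the final bound to be only quasi-polynomial in $d$, explaining the gap from the SoS result.
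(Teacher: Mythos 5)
Your high-level plan matches the paper's proof: start from Grigoriev's $\txor$ instance, push the degree-$\Omega(m)$ pseudo-solution through the GNN chain, and lower-bound the PSD rank of a pattern matrix via \lemref{LRS}. But there is a genuine gap at exactly the place you flag as ``the main obstacle,'' and the fix you propose does not fix it.

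You correctly observe that LRS needs, for every size-$m$ subset $S$ of the $N$ inflated variables, an $N$-variable $\qmat$ instance whose objective as a function of $x\in\{0,1\}^N$ equals $\Psi^\ast(x_S)$, with a column map $x\mapsto$ feasible point that is independent of $S$. The universal-graph construction handles the second requirement (instance-independence of $\ketbra{\psi_x}^{\otimes 2}$), but it does \emph{not} give you the first: the witness $\ket{\psi_x}$ lives over all $N$ coordinates, so the $N$-variable protocol running on $\ketbra{\psi_x}^{\otimes 2}$ samples indices from $[N]$ almost none of which land in $S$. Consequently the $N$-variable objective restricted to the subgraph on $S$ is \emph{not} the same function as the $m$-variable $\Psi^\ast(x_S)$, and $M^N_{\Psi^\ast}$ is not a submatrix of $M^N_{(1-\eps,1-\delta)}$. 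The paper's resolution is different from yours: it replaces $\Psi^\ast$ by a new $m$-variable function $f'(x)$, defined as the acceptance probability of the protocol \emph{run on all $N$ variables} with witness $\ket{\Psi_{x'}}$ ($x'$ extending $x$) and with the $\txor$ instance restricted to the first $m$ coordinates. This $f'$ genuinely is a restriction of the $N$-variable objective, so LRS applies.

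Your soundness accounting is also off as a consequence of the missing step. You assert the relevant instances have value $\leq 1-\tilde\Omega(1/m)$ ``with enormous room to spare'' against $\delta=O(1/d^2)$. But once you pass to the $N$-variable protocol (which you must in order for the pattern matrix to embed), the verifier tests the restricted $m$-variable instance while sampling from a superposition over all $N$ variables, and the soundness degrades to $1-\Omega(1/N^2) = 1-\Omega(1/d^2)$ (this is exactly the second clause of \lemref{gnn-soundness}). So the $\delta(d)=O(1/d^2)$ hypothesis of the theorem is not slack at all: it is forced by the construction. Your parameter bookkeeping at the end (the $N\geq m^{\Omega(m)}$ constraint giving the quasi-polynomial bound, $d=\poly(N)$, hence $m=\Theta(\log d/\log\log d)$) is correct once the $f'$ replacement is in place.
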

\begin{proof}
  The proof follows the strategy outlined in \secref{LRS}. First we
  will rule out efficient \SDPEF{}s to $\qmat$, and then show that this
  implies the nonexistence of \SDPEF{}s for $h_{\Sep}$ as well.

  To do this, we use the pattern matrix technique of LRS. 
  Let $\Phi_m(x): \{0,1\}^m \to \mathbb{R}_{\geq 0}$ be the objective
  function of $\qmat$ on instances of size $m$, induced by the hard
  3XOR instance of Grigoriev. Then we know that $\max_{x} \Phi_m(x) \leq 1
  - 1/(n \polylog n)$, and $\mathrm{deg_{SoS}}(1
  -  \Phi_m(c)) \geq \Omega(m)$, i.e. there is a degree-$\Omega(m)$
  pseudodistribution under which $\psE[(1 - \Phi(c))] = 0$. 
  Now, by \lemref{LRS}, this implies that the matrix
  $M^n_{\Phi_m}(S, x) = 1 - \eps(n) - \Phi_m(x_S)$ has PSD rank at least $n^{\Omega(m)}$ for
  $n = m^{\Omega(m)}$. We would like this to be a submatrix of
  $M_{c,s}(\Phi_n, x) = c - \Phi_n(x)$ for some choice of $c,s$, where the instances $\Phi_n$
  are now over $n$ variables. However, this would require that we be able to ``simulate''
  the action of the protocol on $m$ variables using larger instances on $n$ variables, and
  achieve exactly the same objective value. Since the LNN protocol involves sampling
  variables from witness states in uniform superposition, it is not obvious how to do
  this---if we run the protocol over $n$ variables, then most of the samples will lie
  outside any particular subset of size $m$ of the variables.

  In order to avoid this obstacle, we define a new function
  $f'(x):\{0,1\}^m \to \mathbb{R}_{\geq 0}$, which is equal to the success probability of
  the LNN protocol when the witness state is $\ket{\Psi_{x'}}$ where $x' \in \{0,1\}^n$
  agrees with $x$ on the first $m$ coordinates, and when the verifier is given the
  Grigoriev $\txor$ instance applied only to the first $m$ coordinates of $x'$. The
  soundness properties in \lemref{gnn-soundness} imply that
  $\max_{x} f'(x) \leq 1 - \Omega(1/n^2) := 1 - \delta(n)$. At the same time, the
  Grigoriev pseudosolution also yields a pseudosolution for which $\psE_x[f'(x)] =
  1$.
  Hence, $\mathrm{deg_{SoS}}(1 - f'(x)) \geq \Omega(m)$, and thus by \lemref{LRS}, the
  matrix
  \[ M_n^{f'}(S,x) = 1 - \eps(n) - f'(x_S) \]
  has PSD rank at least $n^{\Omega(m)}$. Now, \emph{this} matrix is
  indeed an exact submatrix of $M_{c,s}(\Phi_n,x)$ as defined above for
  $c = 1 - \eps(n), s = 1  - \delta(n)$. So any \SDPEF{} to
  $\qmat$ has size at least $n^{\Omega(m)} = n^{\log n/
    \poly\log\log n}$.

  Now, note that the trivial reduction from $\qmat$ to $h_{\Sep}$ is
  an \emph{embedding} reduction. Moreover, we claim that it is
  $(1 - \delta(n), 1 - \delta(n))$-approximate. To see this, note that whenever $\qmat \leq 1-\delta(n)$, the
  underlying $\txor$ instance must be infeasible, and thus $h_{\Sep}
  \leq 1 - \delta(n)$ by \lemref{gnn-soundness}. Finally, it remains
  to translate our result into a bound in terms of the dimension of
  the state $d$. Recall that the
  dimension $d$ of the $h_{\Sep}$ instance is polynomially related to
  the number of variables $n$, since the witness states consist of
  $O(\log n)$ qubits. Hence, by the previous
  result for $\qmat$ and \propref{sdp_extend}, we conclude that for an
  appropriate $\delta'(d) = O(1/d^2)$, any
  $(1 - \eps'(d), 1-\delta'(d))$-approximate \SDPEF{} to
  $h_{\Sep(d,d)}$ with $\eps'(d) < \delta'(d)$ must
  have size at least $d^{\log d/ \polylog\log d}$.
\end{proof}
Using the equivalence between $2\ra 4$ norm and $h_{\Sep}$, we can
likewise obtain an SDP hardness result for the $2\ra 4$ norm.
\begin{corollary}\label{cor:24lrs}
  Any \SDPEF{} to $\|A\|_{2\ra 4}$ for $d$-dimensional tensors achieving a multiplicative
  error of $C = 1/O(d^2)$ must have size at least  $\Omega(d^{\log(d)/\polylog \log(d)})$.
\end{corollary}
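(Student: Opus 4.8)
The plan is to obtain \corref{24lrs} from \thmref{lrs-gnn} in exactly the way \corref{24} was obtained from \corref{DPS}: compose the \SDPEF{} lower bound for $h_{\Sep(d,d)}$ with the reduction from $h_{\Sep}$ to the $2\ra 4$ norm of \cite{BHKSZ12}. The one extra thing we need, beyond what \corref{24} used, is that this reduction fits the framework of \secref{lb_framework} as an \emph{embedded}, \emph{approximate} reduction in the sense of \defref{reduction_prop}, so that \propref{sdp_extend} transfers the size lower bound. As elsewhere in \secref{lrs}, one models $h_{\Sep(d,d)}$ and the $2\ra 4$ norm problem as optimization problems in the sense of \defref{opt} (feasible sets: pairs of unit vectors, resp.\ the $\ell_2$ unit ball; objectives rescaled into $[0,1]$). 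Recall from the introduction that $\|A\|_{2\ra 4}^4 = h_{\Sep(d,d)}(M_A)$ for the explicit $M_A$, so $2\ra 4$ is literally a sub-case of $h_{\Sep}$; the content of \cite{BHKSZ12} is the converse map, sending a general $d^2\times d^2$ PSD $M$ with $0\le M\le I$ to a tensor $A$ of dimension $D=\poly(d)$ whose $2\ra 4$ norm approximates $h_{\Sep(d,d)}(M)$.

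First I would re-open that reduction and record three properties of it. (i) It is \emph{embedded}: besides $\max\|Ax\|_4$ tracking $h_{\Sep}(M)$, each pure product state $\ketbra{\psi_A}\otimes\ketbra{\psi_B}$ should map, through the gadgets of the construction, to an explicit unit vector $x(\psi_A,\psi_B)$ with $\|A\,x(\psi_A,\psi_B)\|_4^4$ equal (up to the fixed affine rescaling into $[0,1]$) to $\Tr[M\,(\ketbra{\psi_A}\otimes\ketbra{\psi_B})]$; since $h_{\Sep}$ is optimized over extreme points, this supplies the map $E$ from feasible points of $h_{\Sep(d,d)}$ to feasible points of the $2\ra 4$ norm problem demanded by \defref{reduction_prop}. (ii) It is $(s^B,s^A)$-approximate with $s^A,s^B$ related in a controlled way, which is a restatement of approximation-preservation in \cite{BHKSZ12}. (iii) It blows up dimension only polynomially, $D=\poly(d)$. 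Points (ii)--(iii), together with the low-degree property, were already needed and verified for \corref{24}; the genuinely new point is (i).

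With these in hand the rest is bookkeeping. \thmref{lrs-gnn} gives $h_{\Sep(d,d)}$ instances separating value $\ge 1-\eps(d)$ from $\le 1-\delta(d)$, with $\delta(d)=O(1/d^2)$ and $\eps(d)<\delta(d)$, for which every \SDPEF{} has size $\ge d^{\log d/\poly\log\log d}$. Feeding these through the embedded approximate reduction produces $2\ra 4$ instances $A$ of dimension $D=\poly(d)$; since $\|A\|_{2\ra 4}^4$ equals $h_{\Sep}(M)$ up to the rescaling, the $h_{\Sep}$-gap becomes a separation of $\|A\|_{2\ra 4}$ into a ``large'' and a ``small'' value whose ratio is $1+\Theta(1/d^2)$, i.e.\ a multiplicative error of the claimed $1/O(D^2)$ order in the output dimension $D$. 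If there were an \SDPEF{} for the $2\ra 4$ norm on $D$-dimensional tensors with this multiplicative error and size $o(D^{\log D/\poly\log\log D})$, then, reading it as a $(c,s)$-approximate \SDPEF{} (\defref{sdp_relax}) with $c/s=1+\Theta(1/D^2)$ and applying \propref{sdp_extend} with the embedded approximate reduction $h_{\Sep(d,d)}\Rightarrow\|\cdot\|_{2\ra 4}$, we would obtain an \SDPEF{} for $h_{\Sep(d,d)}$ meeting the hypotheses of \thmref{lrs-gnn} but of size $o(d^{\log d/\poly\log\log d})$ (using $D=\poly(d)$, so $d^{\log d}=D^{\Theta(\log D)}$ and the $\poly\log\log$ factor is unaffected) --- a contradiction. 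The step I expect to be the real obstacle is (i): the \cite{BHKSZ12} reduction was built as a worst-case approximation reduction, and its gadgets (tensoring against a fixed block structure, the symmetrization producing $A$ from $M$) do not a priori carry a point-wise correspondence of feasible solutions, so one has to open up the construction and exhibit $E$ explicitly --- possibly after a harmless modification that only perturbs lower-order terms, much as \corref{24} required ``examining that construction'' to see it was low-degree. A secondary nuisance is converting the multiplicative-error statement for $\|\cdot\|_{2\ra 4}$ into additive $(c,s)$ parameters at the $1/\poly(d)$ scale without slippage.
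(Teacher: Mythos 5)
The paper gives no explicit proof of this corollary; it is introduced by the single sentence ``Using the equivalence between $2\ra 4$ norm and $h_{\Sep}$, we can likewise obtain an SDP hardness result,'' following the precedent of \corref{24}, whose preceding paragraph says only that the \cite{BHKSZ12} reduction is $O(1)$-degree. Your proposal reconstructs exactly the argument the authors intend: compose \thmref{lrs-gnn} with the \cite{BHKSZ12} reduction from $h_{\Sep}$ to $\|\cdot\|_{2\ra 4}$ and invoke \propref{sdp_extend}, checking that the reduction is embedded (\defref{reduction_prop}), $(s^B,s^A)$-approximate, and only polynomially dimension-increasing. So the approach matches. Your observation that the embedding property, item (i), is the genuinely new thing one must verify compared to \corref{24} is correct and is precisely the point the paper elides; a careful reading of \cite{BHKSZ12} is needed there, since the construction of $A$ from $M$ was designed to preserve the optimum, and one must check it maps each product state pointwise to a corresponding unit vector. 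Your concern is well-placed, but the paper takes it as given; you have not found a wrong step, only a step that requires opening \cite{BHKSZ12}. The remaining bookkeeping on parameters (expressing the $O(1/d^2)$ additive gap in terms of the output tensor dimension $D=\poly(d)$ and converting to multiplicative error) is as you describe and only changes polylogarithmic factors, which are absorbed into $\poly\log\log$.
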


\subsection{The no-disentangler conjecture} \label{sec:hsep_lb_app}
One application of our result is to prove a version of the Approximate Disentangler
Conjecture for a particular range of parameters. This conjecture was originally formulated by Watrous and first published
in~\cite{AaronsonBDFS08}.  Previously the only evidence in favor of this conjecture was based on complexity assumptions (e.g.~the ETH) and even those results did not rule out the possibility of disentangling maps that were hard to compute.

\begin{definition}
  Let $\cH$ and $\cK$ be Hilbert spaces, and denote the space of
  density matrices on $\cH$ by $D(\cH)$ (likewise $D(\cK)$). A linear
  CPTP map $\Lambda: D(\mathcal{H}) \to D(\mathcal{K} \otimes \mathcal{K})$
  is an \emph{$(\eps, \delta)$-approximate disentangler} if 
  \begin{itemize}
  \item For every $\rho \in D(\cH)$, $\Lambda(\rho)$ is $\eps$-close
    in trace distance to a separable state in $\Sep(\cK \otimes \cK)$.
  \item For every separable state $\sigma \in \Sep(\cK \otimes \cK)$,
    there exists a $\rho \in D(\cH)$ such that $\Lambda(\rho)$ is
    $\delta$-close in trace distance to $\sigma$.
  \end{itemize}
\end{definition}

Our result is the following:
\begin{theorem}\label{thm:disentangler}
  Let
  $d = \mathrm{dim}(\cK)$, and suppose that $\Lambda: D(\cH) \to D(\cK \otimes \cK)$ is an $(\eps,
  \delta)$-approximate disentangler with $\eps + \delta < 1/\poly(d)$. Then
  \[ \mathrm{dim}(\cH) \geq
 \Omega(d^{\log(d) / \polylog\log(d)}). \]
\end{theorem}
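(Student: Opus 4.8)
The plan is to convert an approximate disentangler into a semidefinite extended formulation of $h_{\Sep(d,d)}$ of size linear in $\dim(\cH)$, and then invoke \thmref{lrs-gnn}. We may assume $\dim(\cH)<\infty$, since otherwise there is nothing to prove. Given the $(\eps,\delta)$-approximate disentangler $\Lambda\colon D(\cH)\to D(\cK\ot\cK)$, I would define an SDP $B$ whose feasible set is the $\delta$-neighbourhood, in trace distance, of the image of $\Lambda$:
\[ \cP^B \;=\; \bigl\{\,\tau\in D(\cK\ot\cK)\ :\ T(\tau,\Lambda(\rho))\le\delta \ \text{for some}\ \rho\in D(\cH)\,\bigr\}. \]
Since $\Lambda$ is linear, the relation $\tau-\Lambda(\rho)=X$ is affine in $(\rho,\tau,X)$, and the constraint $\|X\|_1\le 2\delta$ has the standard semidefinite representation via two auxiliary positive-semidefinite blocks; assembling these together with the trace-one and positivity constraints on $\rho$ and $\tau$ into one block-diagonal matrix exhibits $\cP^B$ as a spectrahedron inside $L_\ge(\R^r)$ with $r=O(\dim(\cH)+d^2)$, in the sense of \defref{sdp}. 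For each measurement $M$ on $\cK\ot\cK$ with $0\le M\le I$, the instance $\Phi^B_M(\tau):=\Tr[M\tau]$ is an affine functional of this matrix with values in $[0,1]$.

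Next I would check that the trivial reduction from $h_{\Sep(d,d)}$ to $B$, sending the measurement $M$ to the instance $\Phi^B_M$, makes $B$ a $(c,s)$-approximate extended formulation of $h_{\Sep(d,d)}$ in the sense of \defref{sdp_relax}. For the embedding requirement of \defref{reduction_prop}, take $E(\sigma):=\sigma$: by the second defining property of an approximate disentangler, every separable $\sigma$ lies within trace distance $\delta$ of some $\Lambda(\rho)$, hence $\sigma\in\cP^B$, and $\Phi^B_M(E(\sigma))=\Tr[M\sigma]$ is exactly the value of the $h_{\Sep(d,d)}$ instance on $\sigma$; so the reduction is embedded. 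This is precisely where the $\delta$-blow-up is needed: optimizing over $\mathrm{Image}(\Lambda)$ itself would yield only an approximate embedding, which the LRS framework does not directly accommodate. For the approximation requirement, suppose $h_{\Sep(d,d)}(M)\le s^A$ and let $\tau\in\cP^B$; choosing $\rho$ with $T(\tau,\Lambda(\rho))\le\delta$ and $\sigma\in\Sep(d,d)$ with $T(\Lambda(\rho),\sigma)\le\eps$, the triangle inequality and $0\le M\le I$ give $\Tr[M\tau]\le\Tr[M\sigma]+\eps+\delta\le s^A+\eps+\delta$. Thus the reduction is $(s^B,s^A)$-approximate with $s^B=s^A+\eps+\delta$, and $B$ is a $(c,s)$-approximate extended formulation of $h_{\Sep(d,d)}$ with $s=1-\delta'(d)$ and $c=1-\delta'(d)+\eps+\delta$, where $\delta'(d)=\Theta(1/d^2)$ is the soundness parameter supplied by \thmref{lrs-gnn} and its proof.

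Set $\eps'(d):=\delta'(d)-(\eps+\delta)$. The hypothesis $\eps+\delta<1/\poly(d)$ — with the polynomial fixed large enough, which is legitimate because $\delta'(d)=\Omega(1/d^2)$ — ensures $0<\eps'(d)<\delta'(d)$, so $B$ is a $(1-\eps'(d),\,1-\delta'(d))$-approximate extended formulation of $h_{\Sep(d,d)}$ with $\delta'(d)=O(1/d^2)$. \thmref{lrs-gnn} then forces the size of $B$ to be at least $d^{\log d/\poly\log\log d}$; since this size is $O(\dim(\cH)+d^2)$ and the target is superpolynomial in $d$, we conclude $\dim(\cH)\ge\Omega\bigl(d^{\log d/\poly\log\log d}\bigr)$, as claimed, and the same reduction fed into \corref{24lrs} gives the corresponding bound phrased through the $2\ra 4$ norm. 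The main obstacle is exactly the tension between the rigid embedding condition demanded by the LRS framework and the fact that a disentangler covers $\Sep(d,d)$ only approximately: the $\delta$-neighbourhood construction resolves it by moving the $\delta$ error out of the embedding and into the $(c,s)$ gap, which is affordable precisely because \thmref{lrs-gnn} already tolerates a gap of order $1/d^2$. One should also confirm that the $\Omega(1/d^2)$ separation in \thmref{lrs-gnn} sits at the scale of the local dimension $d$, not at the scale of the variable count of the underlying $\txor$ instance, so that $\eps+\delta<1/\poly(d)$ is indeed the right normalization of the hypothesis.
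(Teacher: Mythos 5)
Your proposal is correct and follows essentially the same route as the paper: both enlarge the image of $\Lambda$ by a trace-norm ball of radius $O(\delta)$ so that separable states embed exactly, observe that the enlarged set is a spectrahedron of size $O(\dim(\cH)+d^2)$, verify that the trivial map is an embedded $(s+\eps+\delta,\,s)$-approximate reduction, and invoke \thmref{lrs-gnn}. The paper packages the blow-up as a modified map $\tilde\Lambda(\rho\oplus\sigma)=\Lambda(\rho)+\sigma$ on $D(\cH)\oplus\cB_\delta$ rather than as a $\delta$-neighbourhood of $\mathrm{Image}(\Lambda)$, but this is only a presentational difference.
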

This is considerably weaker than Watrous's original formulation, which had $\eps + \delta < O(1)$ and
$\mathrm{dim}(\cH) \geq \exp(d)$. Conditional on ETH, the result was known with $\eps + \delta < O(1)$ and $\mathrm{dim}(\cH) \geq d^{\log(d)/ \polylog\log(d)}$; we are unable to match this due to technical limitations of the LRS result, which are described in more detail in the previous subsection.

\begin{proof}
  We show that a disentangler can be used as an \SDPEF{} to $h_{\Sep}$, thus allowing us to apply Theorem~\ref{thm:lrs-gnn}. Throughout the proof, let $d \equiv \mathrm{dim}(\cK)$. First,
  let us consider the $\delta = 0$ case. We define the optimization
  problem
  \[ h_{\Lambda}(M) \equiv \max_{\rho \in D(\cH)} \Tr[M
  \Lambda(\rho)]. \]
  Note that $h_{\Lambda}(M)$ is a semidefinite program with size
  $\mathrm{dim}(\cH)$. Moreover, when $\delta = 0$, we claim that
  there exists an embedding reduction $R_\Lambda$ from $h_{Sep}$ to
  $h_{\Lambda}(M)$, that achieves a $(s+ \eps, s)$
  approximation. This reduction simply maps an instance $M$ of
  $h_{\Sep}$ to the instance $h_{\Lambda}(M)$ given by the same
  measurement operator $M$. The embedding property follows from the
  definitions: for every separable $\sigma$, there exists a $\rho$
  such that $\Lambda(\rho) = \sigma$, and so $\Tr[M \Lambda(\rho)] =
  \Tr[M \sigma]$. Similarly, the soundness of the reduction also
  follows from the definition: for every $\rho$, $\Lambda(\rho)$ is
  $\eps$-close to some separable $\sigma$. This means that
  $\max_{\rho} \Tr[M \Lambda(\rho)] \leq \max_{\sigma \in \Sep} \Tr[M
  \sigma] + \eps$. Consequently, $h_{\Lambda}(M)$ is a \SDPEF{} of $h_{Sep}$. So, by applying theorem~\ref{thm:lrs-gnn},
  we conclude that $\mathrm{dim}(\cH) \geq d^{\Omega(log(d) / \polylog\log(d))}$.

  Now, let us consider the general case, where $\delta > 0$. In this
  case, we cannot directly apply the preceding argument, since there
  is no embedding from $h_{\Sep}$ to $h_{Disentangled}$. We will fix
  this by using the following gadget: Let $\cB_\delta$ be the set of
  states in $D(\cK \otimes \cK)$ of trace norm less than or equal to
  $\delta$. Then, given an $(\eps, \delta)$-disentangler $\Lambda$, we
  define a new map $\tilde{\Lambda}: D(\cH) \oplus \cB_\delta \to
  D(\cK \otimes \cK)$ by
  \[ \tilde{\Lambda}(\rho \oplus \sigma) \equiv \Lambda(\rho) +
  \sigma. \]
  We claim that for every separable $\tau \in \Sep(\cK, \cK)$,
  there exists a preimage $\rho \in D(\cH), \sigma \in \cB_\delta$
  with $\tilde{\Lambda}(\rho \oplus \sigma) = \tau$. Indeed, since $\Lambda$
  is an $(\eps, \delta)$-disentangler, we know that
  $\tau$ had an \emph{approximate} preimage $\rho$ satisfying
  $\Lambda(\rho) = \tau + \sigma$ for $\| \sigma \|_1 \leq
  \delta$. From our definition of $\tilde{\Lambda}$ it follows that
  $\tilde{\Lambda}(\rho \oplus \sigma) = \tau$ as desired. We also claim
  that for every $\rho \in D(\cH), \sigma \in \cB_\delta$,
  $\tilde{\Lambda}(\rho \oplus \sigma)$ is within $\eps + \delta$ in
  trace distance of some separable state. To see this, note that
  $\Lambda(\rho)$ is within $\eps$ distance of some separable state $\tau$,
  and since $\|\sigma \|_1 \leq \delta$, adding $\sigma$ can increase
  the distance to $\tau$ by at most $\sigma$.

These two claims tell us
  that $\tilde{\Lambda}$ is  ``almost'' an $(\eps + \delta,
  0)$-approximate disentangler: the only catch is that it is not a
  CPTP map acting on quantum states. Nevertheless, we can still
 use the same argument as the $\delta = 0$ case. We define the optimization
  problem $h_{\tilde{\Lambda}}(M)$ by the SDP
  \[ \begin{aligned} 
    &\max_{\rho, \sigma^+, \sigma^-} && \Tr[M \tilde{\Lambda}(\rho
    \oplus \sigma) ] \\
    &\text{such that} && \Tr[\rho] = 1 \\
    &&& \Tr[\sigma^+ + \sigma^-] \leq \delta \\
    &&& \rho, \sigma^+, \sigma^- \succeq 0.
    \end{aligned} \]
  This SDP implements the constraint $\| \sigma \|_1  \leq \delta$. As
  before, consider the reduction from $h_{\Sep}$ to
  $h_{\tilde{\Lambda}}$ that maps the instance $M$ to the instance
  corresponding to the same measurement operator $M$. We claim that
  this reduction is an embedding and is $(s, s+ \eps +
  \delta)$ approximate for any $s$; these claims are proved by a
  similar argument to the $\delta = 0$ case. Thus, we have shown that
  $h_{\tilde{\Lambda}}$ is a $(s, s + \eps + \delta)$-approximate \SDPEF{} of $h_{\Sep}$. So once again, applying
  Theorem~\ref{thm:lrs-gnn} tells us that the dimension of the SDP
  $h_{\tilde{\Lambda}}$ must be at least $\Omega(d^{\log d /
    \polylog\log d})$. Now, the dimension of $h_{\tilde{\Lambda}}$ is
  equal to $\mathrm{dim}(\cH) + d$, so all together we get
  \[ \mathrm{dim}(\cH) \geq \Omega(d^{\log d /
    \polylog\log d}) - d \geq \Omega(d^{\log d /
    \polylog\log d}). \]
\end{proof}

\section{Lower Bounds for Entangled Games} \label{sec:npa}
In this section, we show lower bounds on \SDPEF{}s for the
entangled value of quantum games. First, we review some basic notions.
\begin{definition}
  A \emph{nonlocal game} $G = (Q, A, \pi, V)$ is a game played between
  a referee, or ``verifier,'' and two players, or ``provers.'' In one
  round, the verifier chooses two random questions $q_1, q_2 \in Q$
  according to the joint probability distribution $\pi(q_1, q_2)$, and
  sends $q_1$ to player $1$ and $q_2$ to player $2$. Each player
  returns an answer in the set $A$. The verifier then \emph{accepts}
  the provers' answers with probability given by $V(q_1, q_2, a_1,
  a_2)$. The \emph{winning probability} of a strategy is the
  probability that the verifier accepts when the players play
  according to the strategy.
\end{definition}
Strategies can be either classical or entangled.
\begin{definition}
  A (deterministic) \emph{classical strategy} for a nonlocal game consists of
  functions $f_1, f_2: Q \to A$, with $f_1(q)$ being the answer that
  player 1 gives to question $q$, and likewise for $f_2(q)$. The
  \emph{classical value} $\omega_{classical}(G)$ of a game $G$ is
  the maximum winning probability of a classical strategy for $G$.
\end{definition}
Equivalently, we could have allowed the two players to share classical
random bits. However, by a simple convexity argument one can show that
the classical value of a game is always achieved by a deterministic strategy.
\begin{definition}
  A \emph{quantum strategy} for a nonlocal game consists of:
  \begin{itemize}
  \item Hilbert spaces $H_1, H_2$ and a joint state $\ket{\psi} \in
    H_1 \otimes H_2$,
  \item for every question $q \in Q$, a POVM $\{A^a_q \otimes I_2\}_{a
      \in A}$ acting nontrivially on only player 1's Hilbert space, and
  \item for every question $q \in Q$, a POVM $\{I_1 \otimes B^a_q
    \}_{a \in A}$ acting nontrivially on only player 2's Hilbert space.
  \end{itemize}
  To play the game, each player measures their shared state
  $\ket{\psi}$ using the POVM associated with the question received,
  and returns the POVM outcome as the answer. The \emph{entangled value} $\omega_{\text{entangled}}(G)$ of a game $G$ is
  the maximum winning probability of an entangled strategy for $G$.
\end{definition}

In this section, we show a lower bound on the size of an SDP to
compute the entangled value of a 2-player entangled game, to within
inverse polynomial accuracy. We show both a bound on the size of
general \SDPEF{}s, as well as an explicit integrality gap for the 
non-commuting SoS hierarchy. We do this by embedding 3XOR into a
quantum entangled game, using a result of Ito, Kobayashi, and
Matsumoto~\cite{IKM09}. 
\[
  \txor  \underset{R_1}{\Longrightarrow} \omega_{\textsc{honest classical}} \underset{R_2}{\Longrightarrow} \omega_{\text{entangled}}
\]
The intermediate problem is
\begin{itemize}
\item The problem $\omega_{\textsc{honest classical}}(G)$ is a boolean
  polynomial optimization problem. Each instance is parametrized by a
  2-player game $G$ of the form considered by~\cite{IKM09}. The objective
  function $f(x)$ in the optimization evaluates the winning probability in $G$ of a classical
  strategy parametrized by a boolean string $x$.
\end{itemize}
Before we explain these reductions in more detail, we first review the
result of~\cite{IKM09} that we will use.
\begin{lemma}[Lemma 8 of~\cite{IKM09}]
  Let $\Phi$ be a 3-CSP over $n$
  variables with $m$ clauses\footnote{This is
    called a ``nonadaptive 3-query PCP'' in their language.}. Then there exists a 2-player quantum game $G_\Phi$ such that for some constant $\gamma>0$,
\begin{align*}
  \MXST(\Phi) \leq \omega_{classical}(G_\Phi) &\leq 1 - \frac{1 -
  \MXST(\Phi)}{3} \\
  \omega_{\text{entangled}}(G_\Phi) &\leq 1 - \frac{\gamma(1 - \MXST(\Phi))^2}{m^2}.
\end{align*}
\label{lem:ikm}
\end{lemma}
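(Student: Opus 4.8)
The plan is to take $G_\Phi$ to be the standard \emph{oracularization} (clause-versus-variable game) associated with the $3$-CSP $\Phi$, and then bound its classical and entangled values separately. Write $\Phi$ as clauses $C_1,\dots,C_m$, where $C_t$ depends on the variables $x_{i^t_1},x_{i^t_2},x_{i^t_3}$. In $G_\Phi$ the verifier picks $t\in[m]$ and $j\in\{1,2,3\}$ uniformly at random, sends the index triple $(i^t_1,i^t_2,i^t_3)$ to prover $1$ and the single index $i^t_j$ to prover $2$; prover $1$ replies with an assignment $a=(a_1,a_2,a_3)\in\{0,1\}^3$ to the three variables of $C_t$, prover $2$ replies with a bit $b$, and the verifier accepts iff $a$ satisfies $C_t$ \emph{and} $a_j=b$. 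Every question thus has probability $1/(3m)$.

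For the classical value, the lower bound is immediate: fix an assignment $x^\star$ achieving $\MXST(\Phi)$, let prover $1$ answer $x^\star$ restricted to the queried clause and prover $2$ answer $x^\star_{i^t_j}$; the consistency check never fails and the satisfaction check passes exactly when $x^\star$ satisfies $C_t$, so $\omega_{classical}(G_\Phi)\ge\MXST(\Phi)$. For the upper bound I would take any deterministic strategy $(f_1,f_2)$ and set $x_i:=f_2(i)$; at most a $\MXST(\Phi)$ fraction of the clauses are satisfied by $x$, so for at least a $(1-\MXST(\Phi))$ fraction of $t$ the clause $C_t$ is violated by $x$, and for each such $t$ either $f_1$'s answer already violates $C_t$ (rejection for every $j$) or it satisfies $C_t$ and hence disagrees with $x$ on some variable of $C_t$, forcing the consistency check to fail with probability $\ge 1/3$ over $j$. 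Hence the verifier rejects with probability at least $(1-\MXST(\Phi))/3$, giving the claimed bound (and $\omega_{classical}$ is attained by deterministic strategies).

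For the entangled value—the crux—I would follow the quantum soundness analysis of Ito--Kobayashi--Matsumoto~\cite{IKM09}, in the style of~\cite{KKMTV11}. Fix an entangled strategy of value $1-\epsilon$ with shared state $\ket\psi\in H_1\otimes H_2$ and measurement operators $\{A^a_C\}$, $\{B^b_i\}$. Since each question has probability $1/(3m)$, for every clause $C_t$ prover $1$'s answer satisfies $C_t$ except with probability $\le m\epsilon$, and for every question $(C_t,j)$ the outcome of $A_{C_t}$ in coordinate $j$ equals the outcome of $B_{i^t_j}$ (the two act on opposite halves of $\ket\psi$, hence commute) except with probability $\le 3m\epsilon$. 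I would then round prover $2$'s strategy to a global assignment: fix a uniformly random order of the (at most $3m$) variables, measure $\{B^b_i\}$ on prover $2$'s reduced state in that order, record the outcomes as an assignment $x$, and call $\mu$ the resulting distribution. Controlling, via the Gentle Measurement Lemma, how much each measurement perturbs prover $2$'s state, one argues that for every clause $C_t$ the triple produced by $\mu$ agrees with prover $1$'s answer on $C_t$—and hence satisfies $C_t$—except with probability $O(m\sqrt{\epsilon})$; averaging over $t$ yields an assignment satisfying at least a $1-O(m\sqrt{\epsilon})$ fraction of the clauses. Therefore $1-\MXST(\Phi)\le O(m\sqrt{\epsilon})$, i.e.\ $\epsilon\ge\Omega\big((1-\MXST(\Phi))^2/m^2\big)$, which is exactly $\omega_{\text{entangled}}(G_\Phi)\le 1-\gamma(1-\MXST(\Phi))^2/m^2$ for a suitable constant $\gamma$.

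The main obstacle is the rounding step. The operators $B_i$ need not be projective or mutually commuting, so reading them off in sequence disturbs prover $2$'s state, and the heart of the argument is to show that the disturbance accumulated over all $\Theta(m)$ variables costs only a $\mathrm{poly}(m)$ factor, with the unavoidable square-root loss appearing when one passes between fidelity (which near-certain measurement outcomes preserve) and trace distance. This square-root-and-$\mathrm{poly}(m)$ blow-up is precisely what degrades the clean classical bound $1-(1-\MXST(\Phi))/3$ into the much weaker $1-\gamma(1-\MXST(\Phi))^2/m^2$, and pinning the exponent of $m$ down to $2$ is the part needing the most care.
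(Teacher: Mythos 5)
Your game $G_\Phi$ is not the game to which Lemma~8 of~\cite{IKM09} applies. The paper (and IKM09) explicitly use \emph{oracularization with a dummy question}: two random clauses are drawn, player~1 receives the three indices of the first clause as before, but player~2 receives a \emph{pair} of indices in random order --- one index $i_j$ from player~1's clause and one ``dummy'' index $i_1'$ from the second clause --- and must answer both. In your construction player~2 receives a single index and answers a single bit. These are genuinely different games, and the dummy index is the central technical device in the IKM09 soundness proof: because prover~2 cannot tell which of its two indices is the real consistency variable, it is forced to produce joint answers for arbitrary pairs of variables, and it is this joint-measurability constraint that lets one argue the single-variable measurements approximately commute and hence round to a global assignment. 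The plain two-player clause-versus-variable game you describe does not enjoy this property, and its soundness against entangled provers (with the $\Omega\big((1-\MXST)^2/m^2\big)$ gap) is not a consequence of the IKM09 argument.

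Relatedly, the step you flag as ``the main obstacle'' --- sequentially measuring the $B_i$ in a random order and controlling accumulated disturbance via gentle measurement --- is precisely where the dummy question is needed, and you do not supply an argument that closes it for your (dummy-free) game. Merely observing that the consistency check passes with probability $1 - O(m\epsilon)$ for each fixed variable does not make the $B_i$ approximately pairwise commuting, which is what a sequential-extraction argument ultimately requires. Note also that the paper itself does not prove this lemma; it imports it as a black box from~\cite{IKM09} and reproduces only the statement together with the dummy-question game construction, so the comparison to make is against IKM09's actual argument, which proceeds through the two-index joint measurement rather than the single-variable extraction you sketch. Your classical bounds, on the other hand, look fine (modulo that the game changes once the dummy question is restored).
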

The game $G_\Phi$ is constructed starting from $\Phi$,
using the technique of \emph{oracularization with a dummy question}.
\begin{definition}
  Let $\Phi$ be a 3-CSP. Then the oracularization of $\Phi$ is a
  2-player entangled game $G_\Phi$. In this game two random clauses
  are sampled from $\Phi$, say acting on bits $(i_1,i_2,i_3)$ and
  $(i_1',i_2',i_3')$, which we assume are randomly ordered.  Then one
  player receives $(i_1,i_2,i_3)$ and the
  other player receives with equal probability either $(i_j,i_1')$ or $(i_1',i_j)$ with $j$ drawn
  randomly from $\{1,2,3\}$.    The players answer with 3 and 2 bits respectively.
  The verifier then accepts if both of the following two checks pass:
  \begin{enumerate}
  \item {\bf Simulation check:} The verifier checks that the answers
    from player 1 satisfy the
    clause associated with variables $i_1, i_2, i_3$.
  \item {\bf Consistency check:} For the variable $i_j$, the verifier
    checks that both players' answers for this variable agree.
  \end{enumerate}
\end{definition}
In our application of this result, we will take the 3-CSP to be
$\txor$. We say that the players are playing \emph{honestly according
  to assignment $x$} if the players responds with the answers
$(x_{i_1}, x_{i_2}, x_{i_3})$ and either $(x_{i_j},x_{i'})$ or
$(x_{i'},x_{i_j})$ as appropriate.
Thus the consistency check will always pass and the simulation check
will pass with probability $f_\Phi(x)$, which is defined to equal the
fraction of clauses in $\Phi$ satisfied by $x$.
\begin{definition}
  The problem $\omega_{\textsc{honest classical}}(G_{\Phi})$ is the
  optimization problem of maximizing $f_\Phi(x)$ over $x \in \zo^n$.
\end{definition}

\begin{lemma}
  Let $\Phi$ be a $\txor$ instance produced
  by Proposition~\ref{prop:grigoriev}. Then there exists a degree-$\Omega(n)$ value-$(1,\frac{1}{2}+\eps)$ integrality gap for $\omega_{\textsc{honest classical}}(G_\Phi)$ for all
  $\eps$.
  \label{lem:sos_lb_games}
\end{lemma}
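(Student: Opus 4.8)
The plan is to observe that, at the level of optimization problems, the reduction $R_1$ from $\txor$ to $\omega_{\textsc{honest classical}}$ does essentially nothing: it preserves both the feasible set and the objective polynomial. Hence the degree-$\Omega(n)$ value-$(1,\tfrac12+\eps)$ integrality gap for $\txor$ from Proposition~\ref{prop:grigoriev} can be carried over verbatim.

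To make this precise I would first identify the objective function of $\omega_{\textsc{honest classical}}(G_\Phi)$. Let $\Phi$ be a $\txor$ instance on $n$ variables with $m$ clauses, with objective polynomial $\Phi_n(x) = \tfrac1m\sum_{(i,j,k)\in\mathcal C}\tfrac{1+a_{ijk}x_ix_jx_k}{2}$. In the oracularized game $G_\Phi$ the verifier samples two clauses, sends the variables of the first clause to player~$1$, and sends to player~$2$ the dummy question consisting of one variable $i_j$ of the first clause together with one variable of the second clause. The honest strategy according to an assignment $x\in\zo^n$ answers with the bits of $x$ on exactly the queried variables. Then the consistency check compares the two players' answers on $i_j$, both of which equal $x_{i_j}$, so it passes with probability $1$; and the simulation check accepts iff $x$ satisfies the first (uniformly random) clause, which happens with probability $\tfrac1m\#\{c:x\models c\}=\Phi_n(x)$. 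Therefore the honest winning probability is $f_\Phi(x)=\Phi_n(x)$, a degree-$3$ polynomial on the boolean cube, and the feasible set of $\omega_{\textsc{honest classical}}(G_\Phi)$ is the same $\zo^n$ as for $\Phi$. In the terminology of Definitions~\ref{def:reduction_prop} and~\ref{def:reduction_pseudo_sol}, $R_1$ is thus embedded (with the identity embedding), $(s,s)$-approximate for all $s$, and $(d,c,d,c)$ pseudo-solution preserving for all $d,c$.

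Finally I would instantiate Proposition~\ref{prop:grigoriev}: fix $\eps>0$; for all large $n$ it gives a $\txor$ instance $\Phi$ with $m=O(n/\eps^2)$ clauses, $\OPT(\Phi)\le\tfrac12+\eps$, and a perfectly satisfying degree-$\Omega(n)$ pseudo-expectation $\psE$. Then $\OPT(\omega_{\textsc{honest classical}}(G_\Phi))=\max_x f_\Phi(x)=\OPT(\Phi)\le\tfrac12+\eps$; and since $\psE$ perfectly satisfies every clause we have $\psE[x_ix_jx_k]=a_{ijk}$, hence $\psE[f_\Phi]=\psE[\Phi_n]=\tfrac1m\sum_c\tfrac{1+a_c^2}{2}=1$, so the same $\psE$ is a degree-$\Omega(n)$ value-$1$ pseudo-solution for $G_\Phi$. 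By Definition~\ref{def:SoS_lb_wit} (equivalently, by applying Proposition~\ref{prop:SoS_reduction} to the trivial reduction $R_1$), this is exactly a degree-$\Omega(n)$ value-$(1,\tfrac12+\eps)$ integrality gap for $\omega_{\textsc{honest classical}}(G_\Phi)$. The only step needing care is the identity $f_\Phi=\Phi_n$: one must check that for honest play the consistency check cannot fail (so that no cross-terms between the two sampled clauses enter the winning probability) and that the random re-ordering of clause variables and the dummy question do not change which assignments satisfy a clause. This is pure bookkeeping on the definition of oracularization, and I expect no real obstacle.
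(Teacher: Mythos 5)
Your proof is correct and follows essentially the same high-level approach as the paper: identify $\omega_{\textsc{honest classical}}(G_\Phi)$ as a boolean polynomial optimization problem over the same feasible set $\zo^n$, and carry over the Grigoriev pseudo-expectation $\psE$ unchanged, checking that each piece of the game's acceptance predicate has pseudo-expectation $1$ (simulation because $\psE$ perfectly satisfies the clauses; consistency because honest strategies always agree).

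There is one small but real point of divergence worth flagging. You conclude $f_\Phi(x)=\Phi_n(x)$ exactly, by noting that under the stated definition of oracularization the verifier applies both the simulation and consistency checks and accepts only if both pass; for honest play consistency passes deterministically, so the accepting probability collapses to the fraction of satisfied clauses, giving $\max_x f_\Phi(x)=\OPT(\Phi)\le\tfrac12+\eps$ and hence exactly the claimed degree-$\Omega(n)$ value-$(1,\tfrac12+\eps)$ gap. The paper's own proof instead writes $f_\Phi(x)=\alpha\Phi(x)+\beta$ as if the verifier performs a simulation round with probability $\alpha$ and a consistency round with probability $\beta$, which only yields a soundness bound of $1-\alpha(\tfrac12-\delta)\le 1-c$ for some constant $c<\tfrac12$, i.e.\ a value-$(1,1-c)$ gap — weaker than what the lemma statement asserts, though still sufficient for the downstream use in \lemref{honest-game-lrs} and \thmref{game-nogo}. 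So your accounting actually matches the lemma statement more faithfully than the paper's proof does, and is the cleaner version; the only caveat is that if the actual IKM game (Lemma 8 of \cite{IKM09}) is formulated with probabilistic test selection rather than conjunctive checks, the $(1,1-c)$ bound is the one that survives, but in either case the conclusion needed later (a constant-size gap at degree $\Omega(n)$) holds.
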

\begin{proof}
Essentially, this follows directly from the fact that $f_{\Phi}(x)$
counts the
fraction of clauses satisfied by $x$, since the other tests in the
game all pass with probability $1$ for honest strategies.


  In more detail, the function $f_\Phi(x)$ is a polynomial function of the variables
  $x$. Each term in the polynomial corresponds to a possible check
  that the verifier performs. We show that each term has
  pseudoexpectation $1$ under the pseudoexpectation operator $\psE$
  produced by proposition~\ref{prop:grigoriev}. Recall that this pseudoexpectation
  operator has degree $\Omega(n)$.
\begin{itemize}
\item {\bf Simulation test}: In this test, we verify that the
    answers of each prover satisfy the clause they were asked. In other words, for a
    3XOR clause $x_i x_j x_k = b$, we want to verify that the player's
    answers multiply together to $b$. For every clause $b = x_i x_j
    x_k$, we have a term 
    \[ \frac{1}{2} + \frac{1}{2} b x_i x_j x_k , \]
    in the polynomial $f_\Phi(x)$. We compute
    the pseudoexpectation of this term:
    \begin{align*}
      \psE'[ V_{b,A} ] 
      &= \frac{1}{2} + \frac{1}{2} b\psE[x_i x_j x_k ] \\
      &= 1.
    \end{align*}
  \item {\bf Consistency test}: In this test, we check that the two
    players give the same answer when asked about the same bit. This
    test is automatically satisfied for \emph{any} input to
    $f_\Phi(x)$, since any honest strategy is consistent. Thus, it is also satisfied by the pseudoexpectation $\psE$.
  \end{itemize}
  Thus, we have shown that there exists a degree-$\Omega(n)$
  pseudoexpectation $\psE$ such that $\psE[f_\Phi(x)] = 1$. However,
  notice that $f_\Phi(x) = \alpha \Phi(x) + \beta$, where $\alpha$ is
  the probability of doing a simulation test and $\beta$ the
  probability of doing a consistency test. Thus, since
  $\M3XR(\Phi) \leq \frac{1}{2} + \delta$, we deduce that $\max_x
  f_\Phi(x) \leq \alpha(\frac{1}{2} + \delta) + \beta = 1 -
  \alpha(\frac{1}{2} - \delta) \leq 1 -c $ for the appropriate constant $c$. Thus, $\psE$ gives us the desired degree $\Omega(n)$, value-$(1-\eps,1-c)$ integrality gap for all $\eps >0$.
\end{proof}

\subsection{General SDPs}\label{subsec:games-general}
First, we show an SDP lower bound for the optimization over honest strategies.
\begin{lemma}\label{lem:honest-game-lrs}
  Suppose $S_n$ is a sequence of \SDPEF{}s to the problem
  $\omega_{\textsc{honest classical}}(G)$ of size $r_n$, achieving an $(c=1 - \eps(n), s=1 -
  \delta)$-approximation, where $\delta < \frac{1}{2}$ and $\eps(n)
  < \delta$. Then $r_n \geq \Omega\left( n^{\log n/ \poly\log\log n}\right)$.
  \end{lemma}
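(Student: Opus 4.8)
The plan is to follow the blueprint of \thmref{lrs-gnn}, combining the sum-of-squares integrality gap for $\omega_{\textsc{honest classical}}$ from \lemref{sos_lb_games} with the pattern-matrix machinery of LRS (\lemref{LRS}), and exploiting the fact that a game on a small question alphabet embeds trivially into a game on a larger one: the referee simply ignores the extra (``padding'') questions. In particular, no analogue of the auxiliary function $f'$ used in the proof of \thmref{lrs-gnn} is needed here, which is the sense in which the games setting is more amenable to LRS than $h_{\Sep}$.

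Concretely, first fix the hard $\txor$ instance $\Phi_m$ on $m$ variables with $O(m)$ clauses, $\OPT(\Phi_m) \le \frac{1}{2} + \eps$, and a degree-$\Omega(m)$ perfectly satisfying pseudo-solution $\psE$, as in \propref{grigoriev}. By (the proof of) \lemref{sos_lb_games}, the honest-classical objective satisfies $f_{\Phi_m}(x) = \alpha\,\Phi_m(x) + \beta$ with $\max_x f_{\Phi_m}(x) \le 1 - \alpha(\frac{1}{2}-\eps) \le 1 - \delta$ for an absolute constant $\delta < \frac{1}{2}$ (taking $\eps$ small), while $\psE$ furnishes a degree-$\Omega(m)$ pseudo-expectation with $\psE[f_{\Phi_m}] = 1$. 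Hence, provided $\eps(n) < \delta$, \propref{sos_lb} gives $\degsos\big(1 - \eps(n) - f_{\Phi_m}\big) \ge \Omega(m)$. Second, for each size-$m$ subset $S \subseteq [n]$, let $\Phi_m^{(S)}$ be the $\txor$ instance on $n$ variables obtained by re-indexing the clauses of $\Phi_m$ along the inclusion $S \hookrightarrow [n]$ and leaving the other $n-m$ variables unused, and let $\Psi_S$ be the corresponding instance of $\omega_{\textsc{honest classical}}$, i.e.\ the game $G_{\Phi_m^{(S)}}$ of \lemref{ikm}. Its feasible set is exactly $\zo^n$, its objective is $\Psi_S(x) = f_{\Phi_m}(x_S)$, and since only the $m$ ``real'' variables enter, $\OPT(\Psi_S) = \max_{y \in \zo^m} f_{\Phi_m}(y) \le 1 - \delta$. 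Therefore the LRS pattern matrix $M^n_{f_{\Phi_m}}(S,x) = \big(1-\eps(n)\big) - f_{\Phi_m}(x_S)$ is literally a submatrix of $M^n_{(c,s)}$ with $c = 1 - \eps(n)$ and $s = 1 - \delta$, the row indexed by $S$ coinciding with the row indexed by the instance $\Psi_S$.

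Finally, apply \lemref{LRS} with $d = \Omega(m)$: for $n \ge m^{\Omega(m)}$ one gets $\mathrm{rk_{psd}}\big(M^n_{f_{\Phi_m}}\big) \ge m^{\Omega(m^2)}$. Since psd rank only decreases under passing to submatrices, and any $(c,s)$-approximate \SDPEF{} of size $r_n$ forces $\mathrm{rk_{psd}}(M^n_{(c,s)}) \le r_n$, we obtain $r_n \ge m^{\Omega(m^2)}$. Choosing $m = \Theta(\log n / \log\log n)$ so that $m^{\Theta(m)} = n$, this bound is $n^{\Omega(m)} = n^{\Omega(\log n / \poly\log\log n)}$, which is the claim.

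I expect no genuine obstacle; the work is in checking details that are easy but must be stated. In particular: (i) the honest-classical value of the padded game $\Psi_S$ is exactly $\max_{y} f_{\Phi_m}(y)$, with no increase coming from the unused variables (honest strategies may assign them arbitrarily, but the objective ignores them); (ii) the feasible set of $\omega_{\textsc{honest classical}}(G_{\Phi_m^{(S)}})$ is genuinely the full Boolean cube $\zo^n$, as \lemref{LRS} is proved only over $\zo^n$; and (iii) the arithmetic tying together $\eps(n)$, the constant $\delta$, and Grigoriev's soundness $\frac{1}{2}+\eps$, so that \propref{sos_lb} indeed yields $\degsos \ge \Omega(m)$. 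Unlike the $h_{\Sep}$ case of \thmref{lrs-gnn}, there is nothing to overcome at the embedding step, so the argument is a clean instantiation of the LRS framework.
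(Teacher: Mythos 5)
Your proposal matches the paper's proof in substance: both rest on the integrality gap from \lemref{sos_lb_games}, pass through the LRS pattern-matrix bound of \lemref{LRS}, and exploit the fact that a game on $m$ questions embeds into one on $n$ questions simply by padding. You spell out the construction of the padded game $\Psi_S$, the nonnegativity of the pattern matrix, and the consistent use of $c=1-\eps(n)$ slightly more carefully than the paper does (the paper's displayed pattern matrix uses $1-\delta$ where $1-\eps(n)$ is what is needed to literally be a submatrix of $M^n_{(c,s)}$), but this is a cosmetic clean-up, not a different route.
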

\begin{proof}
The proof of this theorem is similar to that of~\thmref{lrs-gnn}, and also relies on the result of LRS. A notable simplification that occurs in the games setting is that it is \emph{easier} to embed an instance of a problem into an instance of the same problem with more variables.

Let $f_\Phi(x)$ be the objective function of $\omega_{\textsc{honest classical}}(G_\Phi)$ on an instance of $G$ given by a Grigoriev 3XOR instance $\Phi$ on $m$ variables. By~\lemref{sos_lb_games}, we know that $\mathrm{deg_{SoS}}(1 - \delta - f_\Phi(x)) \geq \Omega(m)$. Thus, by the LRS theorem, for $n = m^{\Omega(m)}$, the matrix $M_n^f(S,x) = 1 - \delta - f_\Phi(x_S)$ has PSD rank at least $n^{\Omega(m)}$. This matrix is a submatrix of $M_{c,s}(\mathcal{J}, x) = 1 - \delta - \langle \mathcal{J}, x \rangle$ which is the pattern matrix of the \SDPEF{}. So the \SDPEF{} has size at least $n^{\Omega(m)} = n^{\log n / \poly \log \log n}$.
\end{proof}

For a general 2-player game $G_N $ of size $N$ (i.e. for which the distribution
$\pi$ over questions and predicate $V$ can be specified in $N$ bits), define the optimization problem
$\omega_{\text{entangled}}(G_N)$ to be the optimization of the winning
probability of game $G_N$ over all two-player entangled strategies. 

\begin{theorem}\label{thm:game-nogo}
  Suppose $S_N$ is a sequence of \SDPEF{}s to the problem
  $\omega_{\text{entangled}}(G_N)$ of size $r_N$, achieving an $(c=1 - \eps(N), s=1 -
  \delta(N))$-approximation, where $\delta(n) = O(1/N)$ and $\eps(N)
  < \delta(N)$. Then $r_N \geq \Omega\left( N^{\log N/ \poly\log\log N}\right)$.
\end{theorem}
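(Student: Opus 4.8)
The plan is to deduce \thmref{game-nogo} from the honest-strategies bound \lemref{honest-game-lrs}, which has already carried out all the LRS pattern-matrix work, by exhibiting an embedded, approximate reduction $R_2$ from $\omega_{\textsc{honest classical}}$ to $\omega_{\text{entangled}}$ and then invoking \propref{sdp_extend}. Thus the only genuinely new ingredient is the analysis of $R_2$ together with the parameter bookkeeping; no new SoS-degree computation is required.

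First I would fix $R_2$. It sends an instance of $\omega_{\textsc{honest classical}}$ --- a $\txor$ instance $\Phi$, realized through the Ito--Kobayashi--Matsumoto oracularization (\lemref{ikm}) as the game $G_\Phi$ --- to that same game $G_\Phi$, now regarded as an instance of $\omega_{\text{entangled}}$. This reduction is \emph{embedded}: the embedding map $E$ takes a boolean assignment $x$ to the deterministic ``play honestly according to $x$'' strategy, which is a product (hence entangled) strategy and therefore a feasible point of $\omega_{\text{entangled}}(G_\Phi)$, and whose winning probability equals, by construction, $f_\Phi(x)$ --- exactly the objective value of $x$ in $\omega_{\textsc{honest classical}}(G_\Phi)$. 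It is \emph{approximate} via \lemref{ikm}: playing honestly according to an optimal assignment always passes the consistency check and passes the simulation check with probability $\MXST(\Phi)$, so $\OPT(\omega_{\textsc{honest classical}}(G_\Phi)) \ge \MXST(\Phi)$; hence if $\OPT(\omega_{\textsc{honest classical}}(G_\Phi)) \le 1-\delta$ for a constant $\delta < \tfrac12$, then $1 - \MXST(\Phi) \ge \delta$ and \lemref{ikm} yields $\omega_{\text{entangled}}(G_\Phi) \le 1 - \gamma\delta^2/m^2$, where $m$ is the number of variables of $\Phi$. Since $G_\Phi$ has size $N$ polynomial in $m$, this reads $\omega_{\text{entangled}}(G_\Phi) \le 1 - c/N$ for a constant $c = c(\delta,\gamma) > 0$; i.e.\ $R_2$ is a $(1-c/N,\,1-\delta)$-approximate embedded reduction.

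Now I would assemble the argument along the lines of the proof of \thmref{lrs-gnn}. Take the hard instances to be the games built from Grigoriev's $\txor$ instances $\Phi_m$ on $m$ variables (\propref{grigoriev}), embedded --- in the trivial ``ignore the extra questions'' way, which is precisely why the games setting is friendlier to LRS than $h_{\Sep}$ --- into the larger variable set of size $n = m^{\Omega(m)}$ used in \lemref{honest-game-lrs} and \lemref{sos_lb_games}. Suppose for contradiction that there were a size-$r_N$ \SDPEF{} for $\omega_{\text{entangled}}(G_N)$ that is $(1-\eps(N),\,1-\delta(N))$-approximate with $\delta(N) \le c/N$ --- this is the intended reading of ``$\delta(N) = O(1/N)$'', with $c$ the constant produced above --- and $\eps(N) < \delta(N)$. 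Because a $(c',s')$-approximate \SDPEF{} is automatically $(c',s'')$-approximate for every $s'' \le s'$, this \SDPEF{} is in particular $(1-\eps(N),\,1-c/N)$-approximate; composing it with $R_2$ through \propref{sdp_extend} produces a size-$r_N$ \SDPEF{} of $\omega_{\textsc{honest classical}}$ that is $(1-\eps(N),\,1-\delta)$-approximate with $\delta < \tfrac12$ a constant and $\eps(N) < \delta(N) < \delta$. \lemref{honest-game-lrs} then forces $r_N \ge \Omega\!\left(n^{\log n/\poly\log\log n}\right)$, and after the routine conversion back to the game size $N$ --- the same kind of conversion used to pass from $n$ to $d$ in \thmref{lrs-gnn} --- this gives $r_N \ge \Omega\!\left(N^{\log N/\poly\log\log N}\right)$, as claimed.

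The step I expect to be the main obstacle is not any single calculation but lining up the parameters through the composition. The entangled-soundness half of \lemref{ikm} is quadratically weaker than the classical-soundness half and carries an extra $1/m^2$ loss; this is exactly why \thmref{game-nogo} can only rule out $(1-\eps,\,1-\delta(N))$-approximations with $\delta(N) = O(1/N)$ and not with a constant $\delta$, and one must verify that after composition one still lands in the regime (constant honest-classical soundness $\delta < \tfrac12$) where \lemref{honest-game-lrs} applies. A secondary point to check carefully is that the ``ignore the extra questions'' embedding genuinely makes the LRS pattern submatrix of \lemref{honest-game-lrs} a submatrix of the pattern matrix of the putative \SDPEF{} --- i.e.\ that restricting the honest-classical objective to any size-$m$ coordinate subset is realized by an honest game in the larger family --- which is routine but is where the locality of the $\txor$-to-oracularization reduction $R_1$ is used. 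One should also record that $R_2$ need not be pseudo-solution preserving, which is harmless since all the SoS-degree content is packaged inside \lemref{honest-game-lrs}.
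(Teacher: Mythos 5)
Your proof is correct and follows essentially the same route as the paper: use the embedding reduction from $\omega_{\textsc{honest classical}}$ to $\omega_{\text{entangled}}$, invoke \lemref{ikm} to establish the approximation parameters (with the $1/N$-scale soundness loss coming from the quadratic entangled-soundness bound), compose via \propref{sdp_extend}, and conclude with \lemref{honest-game-lrs}. The only small slip is that in \lemref{ikm} the quantity $m$ is the number of \emph{clauses}, not variables, but since Grigoriev instances have $m = O(n)$ and $N$ is polynomial in $n$, this does not affect the final exponent.
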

\begin{proof}
  As before, let $G_\Phi$ be the oracularized game associated with the
  $\txor$ instance $\Phi$ on $n$ variables, and $f_\Phi(x)$ be the winning probability
  of an honest strategy played according to assignment $x$. From the
  definitions, it follows that there is an embedding reduction from the problem
  $\omega_{\textsc{honest classical}}(G)$ to $\omega_{\text{entangled}}(G_N)$
  where $N = \Theta(n^2)$. Moreover, by
  lemma~\ref{lem:ikm}, this reduction is $(1 - c/n^2, 1 -
  c')$-approximate for constants $c, c'$. Thus, any \SDPEF{} of
  size $r_N$ for
  $\omega_{\text{entangled}}(G_N)$ that achieves a $(1 - \eps(N), 1 - c/N)$
  approximation implies an \SDPEF{} of size $r_N$ for $\omega_{\textsc{honest classical}}(G)$
  that achieves a $(1 - \eps(N), 1 - c')$ approximation. Now, by Lemma~\ref{lem:honest-game-lrs}, any such \SDPEF{} must have size at least $r_N \geq \Omega \left(N^{\log N / \log \log N}\right)$.
\end{proof}

\subsection{An explicit lower bound for ncSoS}\label{subsec:npa-explicit}
In the previous section we gave a lower bound on the size of 
\SDPEF{}s for the problem of computing the entangled game value. We will now present an explicit lower
bound for a family of \SDPEF{}s called the \emph{non-commuting sum of squares} (ncSoS)
hierarchy, also referred to as the NPA hierarchy~\cite{NPA08,DLTW08}. Recall that in the
sum-of-squares hierarchy, one optimizes a polynomial function $f(x)$
by optimizing $\psE_{x \sim \mu} f(x)$ over pseudodistributions $\mu$
that obey certain constraints. Likewise, in the ncSoS hierarchy, the
winning probability $\omega$ is viewed as a polynomial in
non-commuting variables (corresponding to the quantum operators in the
provers' strategy), and the game value is found by optimizing the 
 \emph{non-commuting pseudoexpectation} of this nc polynomial.  A
 non-commuting pseudoexpectation satisfies conditions similar to an
 ordinary pseudoexpectation operator:
\begin{definition}
  An degree-$d$ ncSoS pseudoexpectation is a linear map $\psE[\cdot]$ that maps
  nc polynomials in the provers' measurement operators $\{A^{a}_q\}$,
  $\{B^{a}_q\}$ to real numbers. This map satisfies the following
  properties:
  \begin{itemize}
    \item Normalization: $\psE[I] = 1$.
    \item Positivity: for all polynomials $p(A, B)$ with degree at
      most $d/2$, $\psE[p^\dagger p] \geq 0$.
    \item Commutation: for any operators $A, B$ acting on different 
      provers, 
      $\psE[q_1(x)(AB - BA)q_2(x)] = 0$ for all polynomials $q_1(x),q_2(x)$ with $\deg q_1 + \deg q_2 \leq d - 2$.
  \end{itemize}
\end{definition}
In the following theorem, we show that when the degree $d$ is small
enough, we can construct a non-commuting pseudoexpectation according
to which every test in the game is satisfied with probability $1$,
even though the game value is less than $1 - 1/\poly(n)$. We do this in a quite
generic way, by
constructing a commuting pseudoexpectation for the CSP instance underlying the
game, and ``lifting'' this pseudoexpectation to a non-commuting
pseudoexpectation for the game.
\begin{theorem}\label{thm:ncsos-nogo}
  For every $n$ there exists a two-player entangled game $G$ with
  $O(n)$ questions and three-bit answers, such that
  $\omega_{\text{entangled}}(G) \leq 1 - c/n^2$ for some constant $c$, but there exists a pseudoexpectation of degree $Omega(n)$ according to
  which the game value is $1$.
\end{theorem}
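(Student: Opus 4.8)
The plan is to take $G = G_\Phi$ to be the Ito--Kobayashi--Matsumoto oracularized game (\lemref{ikm}) built from the hard $\txor$ instance $\Phi$ on $n$ variables with $m = O(n)$ clauses supplied by \propref{grigoriev}; this $G$ has $O(n)$ questions and three-bit answers by construction. The bound on the true entangled value is then immediate from \lemref{ikm}: since $\MXST(\Phi) = \OPT(\Phi) \leq \tfrac12 + \eps$, we get $\omega_{\text{entangled}}(G_\Phi) \leq 1 - \gamma(1 - \MXST(\Phi))^2/m^2 \leq 1 - c/n^2$ for a suitable constant $c$. All the work is in producing a degree-$\Omega(n)$ ncSoS pseudoexpectation $\psE_{\mathrm{nc}}$ under which the value of $G_\Phi$ is $1$.

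I would build $\psE_{\mathrm{nc}}$ by ``lifting'' the commuting pseudoexpectation $\psE$ from \propref{grigoriev} --- which has degree $\Omega(n)$, perfectly satisfies $\Phi$, and obeys $x_i^2 = 1$ --- exactly as in \lemref{sos_lb_games} but now into the non-commutative setting. Define a map $\phi$ from nc polynomials in the provers' measurement operators to $\R[x_1,\dots,x_n]$, multiplicative and linear, by sending prover $1$'s operator for a clause on variables $(i_1,i_2,i_3)$ and outcome $a = (a_1,a_2,a_3)\in\zo^3$ to $\phi(A^a_q) := \prod_{j=1}^3 \tfrac{1 + a_j x_{i_j}}{2}$, and prover $2$'s operator for a pair of variables to the analogous product of two linear factors; then set $\psE_{\mathrm{nc}}[P] := \psE[\phi(P)]$. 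Since the image of $\phi$ consists of real, mutually commuting polynomials, the ncSoS axioms follow directly: normalization holds as $\sum_a \phi(A^a_q) = 1$; the measurement constraints $(A^a_q)^\dagger = A^a_q$, $(A^a_q)^2 = A^a_q$, $\sum_a A^a_q = I$ are preserved because $\tfrac{1+a_j x_{i_j}}{2}$ is idempotent modulo $x_{i_j}^2 - 1$ and $\psE$ annihilates those constraints as hard constraints; commutation across the two provers is automatic since $\phi(AB - BA) = 0$; and positivity holds because $\phi(P^\dagger P) = \overline{\phi(P)}\,\phi(P) = |\phi(P)|^2$ is a sum of squares of real polynomials, on which $\psE$ is nonnegative. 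The only quantitative point is degree: a degree-$\ell$ nc polynomial maps under $\phi$ to a polynomial in $x$ of degree $\le 3\ell$, so choosing the ncSoS level $d$ to be a small enough constant fraction of Grigoriev's $\Omega(n)$ degree keeps every verification inside the valid range of $\psE$.

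It remains to check that $\psE_{\mathrm{nc}}$ believes $G_\Phi$ is won with certainty. The winning probability is a fixed linear combination of the pass-probabilities of the simulation and consistency checks, so it suffices to treat each. For the consistency check on a shared variable $i_j$, each prover's marginal operator for outcome $b$ on that variable maps to $\tfrac{1+b x_{i_j}}{2}$, so the acceptance probability is $\psE\!\left[\sum_{b}\left(\tfrac{1+b x_{i_j}}{2}\right)^2\right] = \psE\!\left[\tfrac14\sum_b(2 + 2b x_{i_j})\right] = \psE[1] = 1$, using $x_{i_j}^2 = 1$. For the simulation check on a clause $x_{i_1}x_{i_2}x_{i_3} = b$, a short expansion gives $\sum_{a:\,a_1 a_2 a_3 = b}\phi(A^a_q) = \tfrac12 + \tfrac{b}{2}\,x_{i_1}x_{i_2}x_{i_3}$, whose pseudoexpectation is $\tfrac12 + \tfrac{b}{2}\psE[x_{i_1}x_{i_2}x_{i_3}] = \tfrac12 + \tfrac{b^2}{2} = 1$ by the perfectly satisfying property of $\psE$. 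Averaging over the verifier's randomness shows the ncSoS objective for $G_\Phi$ equals $1$ at $\psE_{\mathrm{nc}}$.

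The main obstacle, and where I would spend most of the care, is the positivity-with-degrees step: one must pin down the encoding $\phi$ so that $P^\dagger P$ genuinely becomes a sum of squares of \emph{real} polynomials --- this is exactly why $\phi$ must take real values on the generators and land in a commutative ring, so that $\phi(P^\dagger) = \overline{\phi(P)}$ --- and then absorb the degree-blowup factor $3$ (from three-bit answers) into a sufficiently small constant in front of Grigoriev's $\Omega(n)$. A secondary but real subtlety is phrasing everything for the precise version of the NPA/ncSoS constraints one adopts; I would fix on projective measurements with $\zo$-valued answer labels, for which the encoding above is cleanest, noting that the IKM game can be taken in this form without loss of generality. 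Conceptually, as the paper emphasizes, the pseudoexpectation so constructed ``cheats'' in exactly the way a classical SoS solution cheats --- it never exploits entanglement --- which is why it fools low-degree ncSoS even though no genuine quantum strategy beats $1 - c/n^2$.
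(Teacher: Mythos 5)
Your proposal is correct and takes essentially the same route as the paper: both lift the Grigoriev pseudoexpectation to the non-commutative setting by postulating a factorization of the provers' measurement operators into commuting per-variable projectors $C_i^a$ and identifying the resulting commuting observables with the $\pm 1$ variables $x_i$ (the paper writes this via $\psE'[C_{i_1}\cdots C_{i_k}]:=\psE[x_{i_1}\cdots x_{i_k}]$; your multiplicative $\phi$ composed with $\psE$ is the same map). Your treatment of the positivity and degree bookkeeping is a bit more explicit than the paper's ``by construction, this pseudoexpectation satisfies all the ncSoS constraints,'' but the underlying argument is identical.
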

\begin{proof}
  Start with a 3XOR instance with maximum satisfiable fraction $1/2 +
  \epsilon$. Then \lemref{ikm} gives us the first part of the conclusion. For the second part, we explicitly
  construct the pseudodistribution using the Grigoriev
  instance. 
  Let the two players be denoted A and B. 
  Their strategies
  are given by POVMs $\{A_{i_1 i_2 i_3}^{a_1 a_2 a_3}\}$, $\{B_{j_1
    j_2}^{b_1 b_2 }\}$, where $i_1, \dots, i_3, j_1, \dots, j_3$ are
  indices of variables in the $\txor$ instance. To specify a
  pseudodistribution, we need to assign values to every
  pseudoexpectation of words built out of these variables. We do so as
  follows: first, we impose the condition that the $A$ and $B$
  operators are mutually commuting, and moreover that $A_{i_1 i_2
    i_3}^{a_1 a_2 a_3} =  C_{i_1}^{a_1}
  C_{i_2}^{a_2} C_{i_3}^{a_3}, B_{j_1 j_2}^{b_1 b_2} = C_{j_1}^{b_1} C_{j_2}^{b_2}$ where the operators $\{C_{i}^0,
  C_{i}^1\}$ form a projective measurement for every index $i$. For
  convenience, we will henceforth work with the observables $C_{i}
  \equiv C_{i}^0 - C_{i}^1$; these square to identity. Now, let $\psE$
  be the Grigoriev pseudoexpectation operator for the 3XOR
  instance. We define an ncSoS pseudoexpectation $\psE'$ as follows:
  \[ \psE'[C_{i_1} \dots C_{i_k}] \equiv \psE[x_{i_1} \dots
  x_{i_k}]. \]
  By construction, this pseudoexpectation satisfies all the ncSoS
  constraints. It is defined up to
  degree $\Omega(n)$. We now need to check that it achieves a game
  value of $1$. The game consists of two kinds of checks:
  simulation and consistency. 
  \begin{itemize}
  \item {\bf Simulation test}: In this test, we verify that the
    answers of each prover satisfy the clause they were asked. In other words, for a
    3XOR clause $x_i x_j x_k = b$, we want to verify that player $A$'s
    answers multiply together to $b$. For every clause $b = x_i x_j
    x_k$, we have a term 
    \[ V_{b, A} =\left(  \sum_{a_i, a_j, a_k} a_i a_j a_k A_{x_i x_j
      x_k}^{a_i a_j a_k} \right) \otimes I_B, \]
    in the game value, and an analogous term for player $B$. We compute
    the pseudoexpectation of this term:
    \begin{align*}
      \psE'[ V_{b,A} ] &= \sum_{a_i, a_j, a_k} (a_i a_j a_k b)
                         \psE'[ A_{x_i x_j
      x_k}^{a_i a_j a_k} \otimes I_B] \\
      &= \sum_{a_i a_j a_k} a_i a_j a_k b \psE' [ C_{x_i}^{a_i}
        C_{x_j}^{a_j} C_{x_k}^{a_k} \otimes I_B] \\
      &= b \psE' [ C_{x_i} C_{x_j} C_{x_k} \otimes I_B] \\
      &= b\psE[x_i x_j x_k ] \\
      &= 1.
    \end{align*}
  \item {\bf Consistency test}: In this test, we check that players A
    and B give the same answer if asked about the same bit.
    \begin{align*}
      V_{i,A,B} &= \E_{j, k, p} \sum_{a_1 a_2 a_3} \sum_{b_1 b_2} (a_1 b_1) A^{a_1 a_2 a_3}_{x_i x_j x_k} \otimes B^{b_1 b_2}_{x_i x_p }\\
      \psE'[V_{i,A,B}] &= \E_{j, k, p} \sum_{a_1 a_2 a_3} \sum_{b_1 b_2} (a_1 b_1) \psE'[A^{a_1 a_2 a_3}_{x_i x_j x_k} \otimes B^{b_1 b_2}_{x_i x_p}] \\
                &= \E_{j, k, p} \sum_{a_1 a_2 a_3} \sum_{b_1 b_2} (a_1 b_1) \psE'[C_i^{a_1} C_j^{a_2} C_k^{a_3} C_i^{b_1}
                  C_p^{b_2} ] \\
                &= \sum_{a_1 b_1} (a_1 b_1) \psE'[C_i^{a_1} C_i^{b_1}]
      \\
                &= \psE'[C_i C_i] \\
                &= \psE[x_i^2] \\
                &= 1.
    \end{align*}
  \end{itemize}
\end{proof}

\section*{Acknowledgement} 
AWH and AN were funded by NSF grant CCF-1629809 and AWH was funded by NSF grant
CCF-1452616. XW was funded by the NSF Waterman Award of Scott Aaronson.  All three authors
(AWH, AN and XW) were funded by ARO contract W911NF-12-1-0486.  We are grateful to an
anonymous STOC reviewer for pointing out a mistake in an earlier version.

\newcommand{\etalchar}[1]{$^{#1}$}

\end{document}